    \definecolor{darkblue}{rgb}{0, 0, .4}
\definecolor{grey}{rgb}{.7, .7, .7}
    \renewcommand{\bibname}{References}
    \newtheorem{theorem}{Theorem}[section]
    \theoremstyle{definition}
        \newtheorem{example}[theorem]{Example}
    \theoremstyle{remark}
        \newtheorem{remark}[theorem]{Remark}
    \numberwithin{equation}{section}
    \numberwithin{figure}{section}
    \newcommand{\n}{\vspace{12pt}} 
    \newcommand{\newchapter}[3] 
        {                           
        \chapter[#2]{#3}
        \chaptermark{#1}
        \thispagestyle{myheadings}
        }
\newcommand{\C}{\mathbb{C}}
\newcommand{\F}{\mathbb{F}}
\newcommand{\R}{\mathbb{R}}
\newcommand{\eps}{\epsilon}
\newcommand{\gl}{\mathfrak{gl}}
\newcommand{\poly}{\mathrm{poly}}
\newcommand{\End}{\mathrm{End}}
\newcommand{\Res}{\mathrm{Res}}
\newcommand{\SYT}{\mathrm{SYT}}
\newcommand{\SSYT}{\mathrm{SSYT}}
\newcommand{\sh}{\mathrm{sh}}
\newcommand{\ie}{\emph{i.e.}}
\newcommand{\bra}[1]{\langle#1|}
\newcommand{\ket}[1]{|#1\rangle}
\newcommand{\tensor}{\otimes}
\newcommand{\redtwo}[0]{{\color{red}2}}
\newcommand{\redthree}[0]{{\color{red}3}}
\newcommand{\redfour}[0]{{\color{red}4}}
\newcommand{\bluetwo}[0]{{\color{blue}2}}
\newcommand{\bluethree}[0]{{\color{blue}3}}
\newcommand{\greenfour}[0]{{\color{green}4}}
\newcommand{\blank}[0]{\color{red}{  }}
\newcommand{\id}{\mathrm{id}}
\renewcommand{\bar}[1]{\overline{#1}}
\newcommand{\vsubseteq}{{\rotatebox[origin=c]{-90}{$\subseteq$}}}
\newcommand{\thm}[1]{Theorem~\ref{#1}}
\newcommand{\fig}[1]{Figure~\ref{#1}}
\begin{document}


    \pagenumbering{roman}
    \pagestyle{plain}

    %
    %

    \singlespacing

    ~\vspace{-0.75in} 
    \begin{center}

        \begin{huge}
        A quantum algorithm for the quantum Schur-Weyl transform
        \end{huge}\\\n
        By\\\n
        {\sc Sonya J. Berg}\\
        B.S. (UC Santa Barbara) 2003\\
        M.A. (CSU Sacramento) 2005\\\n
        DISSERTATION\\\n
        Submitted in partial satisfaction of the requirements for the degree of\\\n
        DOCTOR OF PHILOSOPHY\\\n
        in\\\n
        MATHEMATICS\\\n
        in the\\\n
        OFFICE OF GRADUATE STUDIES\\\n
        of the\\\n
        UNIVERSITY OF CALIFORNIA\\\n
        DAVIS\\\n\n

        Approved:\\\n\n

        \rule{4in}{1pt}\\
        ~Greg Kuperberg (Chair)\\\n\n

        \rule{4in}{1pt}\\
        ~Bruno Nachtergaele\\\n\n

        \rule{4in}{1pt}\\
        ~Jes\'us De Loera\\

        \vfill

        Committee in Charge\\
        ~2012

    \end{center}

    \newpage

    %
    %




    \doublespacing

    %
    %

    \tableofcontents

    \newpage


    \centerline{\large A quantum algorithm for the quantum Schur-Weyl transform}

    \centerline{\textbf{\underline{Abstract}}}

We construct an efficient quantum algorithm to compute the
quantum Schur-Weyl transform for any value of the quantum parameter
$q \in [0,\infty]$.  Our algorithm is a $q$-deformation of the
Bacon-Chuang-Harrow algorithm \cite{BCH}, in the sense that it
has the same structure and is identically equal when $q=1$.  When $q=0$,
our algorithm is the unitary realization of the Robinson-Schensted-Knuth
(or RSK) algorithm, while when $q=\infty$ it is the dual RSK algorithm
together with phase signs.  Thus, we interpret a well-motivated quantum
algorithm as a generalization of a well-known classical algorithm.

    \newpage

    %
    %

    \chapter*{\vspace{-1.5in}Acknowledgments and Thanks}

I would like to thank my friends and family who supported me during my long
stint in graduate school. In particular, I'd like to thank Chris Berg, who
I met many years ago during my first upper division math course in complex
analysis, and who I've known my entire mathematical career. I've often
wondered whether I would have gone as far in math as I have without his
guidance and support. I also thank him for being a most excellent father
to our daughter Kai, allowing me the time to pursue my own intellectual
interests. I'd also like to thank his parents Patti and Rod Berg for many
hours of babysitting and financial support.

Unlike many mathematicians, my love of math developed at a later age, and
thanks are due to some specific people who cultivated my interest. First
I thank an old friend Brand Belford who first gave me a book when I was
20 years old on the solving of Fermat's last theorem that inspired me to
take my first class on mathematical proofs. I have been lucky to have so
many talented and inspirational teachers along the way. In particular, I'd
like to thank Mihai Putinar, who I'll always remember told me, ``You're a
mathematician; I can see it in your eyes''. I pay the highest respect to
my PhD advisor Greg Kuperberg, who inspired all the work in this thesis
and who has been incredibly patient with me over the many years this took
to complete. Sometimes I'm amazed by how many times he explained the same
thing to me, over and over, with no derision. His perspective on mathematics
is unparalleled and it was truly an honor to learn from him.

I thank Jes\'us De Loera and Bruno Nachtergaele for reading this thesis,
as well as organizing research groups that I attended and enjoyed.

I developed friendships with so many amazing and talented people during my
time in graduate school. In particular, I'd like to thank Hillel Raz, David
and Frances Sivakoff, Owen Lewis, Mohamed Omar, Rohit Thomas, and Gabriel
Amos for inspiring me in all sorts of ways and helping me with babysitting
and emotional support. In particular, I would like to give a special thanks
to the amazing Corrine Kirkbride for always having my back and supporting
me when I've been at my worst. Now that we all live in different cities
I appreciate those years we were together that much more.  I'll always
fondly remember being a mathlete at Sophia's trivia. I'd also like to
thank Joy Jaco Pope for telling me like it is and being a great neighbor.

Thanks go to the University of Toronto Math department for hooking me up
with office space, internet access, and a library card during my year in
their city. I'd also like to thank Karene Chu for working with me day in
and day out, in Huron, and coffee shops across T.O. I thank my friend Jeff
Latosik for constantly expressing his firm belief in my ability to finish
my research when I was doubtful. I'd also like to thank my grandma Pat
Monahan, and aunts and uncles Pat Monahan, John Monahan, Cristina Alvarez,
and Dave Jones, for supporting me financially during the poorest month of
my adult life which occurred during my time in Toronto. Thanks also go to
my mom Helen Monahan.

Lastly I would like to honor the memory of Joshua Gooding. Whenever I had
a rough day with research I would think of how much he would have given
to have the opportunity to complete his thesis. You are missed.

    \newpage

    %
    %

    \pagestyle{fancy}
    \pagenumbering{arabic}

    %
    %

\newchapter{Intro}{Introduction}{Introduction}\label{ch:intro}
This thesis addresses some problems in quantum computation that are
motivated by quantum algebra.

The quantum Fourier transform for a finite group $G$ plays a central role
in the theory of quantum algorithms.  This is another name for the Burnside
decomposition of the group algebra of $G$,
$$\C[G] \cong \bigoplus_V V \tensor V^*,$$
which is an isomorphism of Hilbert spaces as well as an isomorphism
of algebras.  Since the Burnside decomposition is a Hilbert space
isomorphism and therefore a unitary operator, one can ask when it
can be expressed by a small quantum circuit, or equivalently, when it
has a fast quantum algorithm.

Polynomial-time quantum algorithms for the Burnside decomposition are known
for many finite groups (see for example \cite{shor},\cite{Beals},\cite{MRR}).
In especially favorable cases, the quantum Fourier transform for $G$
yields an algorithm for the hidden subgroup problem for $G$ or other
groups related to $G$.  In particular, the Shor-Simon-Kitaev algorithm
(see \cite{shor}, \cite{simon}, \cite{kitaev}) to find periods or compute
discrete logarithms in any finitely generated abelian group is based on
the quantum Fourier transform for finite abelian groups.

The Schur-Weyl decomposition is another transform which is related to the
Burnside decomposition for the symmetric group $S(n)$.  Given a Hilbert space
or \emph{qudit} $V = \C^d$, the Schur-Weyl decomposition is
\begin{align}\label{sw} V^{\tensor n} \cong \bigoplus_{\lambda \vdash n} R^\lambda \tensor V^\lambda, \end{align}
where $R^\lambda$ is an irreducible representation of the symmetric
group $S(n)$, acting by permuting tensor factors, while $V^\lambda$
is an irreducible representation of the unitary group $U(d)$, which acts
simultaneously or diagonally on all of the factors of $V$.  The fact
that $R^\lambda$ is the multiplicity space of $V^\lambda$ and vice versa
is known as Schur-Weyl duality.  Recently, Bacon, Chuang, and Harrow
presented an efficient quantum algorithm to compute a basis refinement of
this decomposition \cite{BCH}.

In this thesis, we clarify and generalize the Bacon-Chuang-Harrow (or BCH)
algorithm.  First, the Schur-Weyl decomposition has a generalization
that depends on a parameter $q$ from quantum algebra.  We replace the
unitary group $U(d)$ with the quantum group $U_q(d) = U_q(\gl(d))$,
and the symmetric group $S(n)$ with the Hecke algebra $H_q(n)$.  Then the
Schur-Weyl decomposition still exists for every $q \in \C$ which is not a
root of unity, but the specific linear isomorphism expressed by equation
\ref{sw} depends on $q$.  (If $q$ is a root unity of order $r$, then
the decomposition still exists, but it degenerates into a different form
when $r = O(n+k)$.)  When $q$ is real and positive, then both sides of
\ref{sw} are naturally Hilbert spaces and the isomorphism is still
unitary.

Our main result is the following theorem which appears in Section \ref{s:main}.
\begin{theorem}\label{th:main} There is an efficient continuous family of
quantum algorithms for the quantum Schur-Weyl transform for each $q \in
[0,\infty]$. When $q = 1$, the algorithm is the Bacon-Chuang-Harrow
algorithm.  The algorithm continuously extends to $q = 0$ and becomes
a unitary form of the Robinson-Schensted-Knuth (RSK) algorithm
(\cite{schensted}, \cite{K}) together with phase signs.
The algorithm also continuously extends to $q=\infty$, and becomes the
dual RSK algorithm without any phases.\end{theorem}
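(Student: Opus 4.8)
The plan is to build the algorithm inductively on $n$, exactly mirroring the structure of the Bacon-Chuang-Harrow construction, but replacing each building block with its $q$-deformed analogue. The key classical fact is that adding one tensor factor $V$ to $V^{\otimes(n-1)}$ and re-decomposing corresponds on the representation-theoretic side to the branching rule $V^\lambda \downarrow \bigoplus_{\mu} V^\mu$, where $\mu$ ranges over Young diagrams obtained from $\lambda$ by removing one box; the multiplicity-free nature of this branching is what makes the transform realizable by a small circuit. So the first step is to fix, for every pair $\mu \subset \lambda$ differing by one box, a distinguished unit vector (a Clebsch-Gordan coefficient) in $\mathrm{Hom}_{U_q(d)}(V^\mu \otimes V, V^\lambda)$, chosen so that it is a rational function of $q$ with no pole or zero on $(0,\infty)$, and so that it specializes correctly at $q=1$ to the BCH choice. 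Concretely these are the $q$-analogues of the Gelfand-Tsetlin branching coefficients for $\gl(d)$, and I would write them down explicitly from the known formulas for $U_q(\gl(d))$ (e.g.\ via the $q$-deformed Gelfand-Tsetlin basis), checking term by term that every factor appearing is of the form $[m]_q$ or a ratio thereof, hence nonvanishing and finite for $q \in (0,\infty)$.

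Second, I would assemble these into the full transform: the quantum Schur-Weyl transform on $V^{\otimes n}$ is the composition, over $i = 1, \dots, n$, of a controlled operation that on the subspace labelled by $\lambda^{(i-1)} \vdash (i-1)$ applies the Clebsch-Gordan isometry indexed above, promoting $\lambda^{(i-1)}$ to some $\lambda^{(i)} \vdash i$ and recording the added box. The output basis is indexed by a chain $\emptyset = \lambda^{(0)} \subset \lambda^{(1)} \subset \cdots \subset \lambda^{(n)}$, i.e.\ by a pair (standard Young tableau, semistandard filling), which is precisely the RSK-type indexing of the Schur-Weyl basis. That this composite is unitary for $q \in (0,\infty)$ follows because each factor is a partial isometry between the relevant weight spaces, and efficiency follows because each controlled Clebsch-Gordan step acts on $O(\mathrm{poly}(n,d))$-dimensional register and can be implemented by the same Schur-duality-based subroutines used in \cite{BCH}, now with $q$-deformed entries. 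The $q=1$ case is then literally the BCH algorithm by construction.

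Third come the two boundary specializations. For $q \to 0$: the $q$-Clebsch-Gordan coefficients degenerate, in the limit, to $0/1$ matrices picking out a single branch — this is exactly the statement that the $q \to 0$ crystal limit of $U_q(\gl(d))$ turns tensor-product decomposition into the combinatorial rule governing RSK insertion. One has to track the signs: some of the coefficients approach $\pm 1$ rather than $+1$, which is the source of the "phase signs" in the statement. I would verify that the resulting permutation-with-signs matrix, read against the tableau indexing, is the bijection underlying Knuth's generalization of Robinson-Schensted insertion. For $q \to \infty$: applying the bar-involution / $q \mapsto q^{-1}$ symmetry of the Hecke algebra and $U_q(\gl(d))$ exchanges rows and columns of Young diagrams, so the crystal limit now implements \emph{column} insertion, i.e.\ dual RSK; and the sign discrepancies cancel against the $q^{\ell(w)}$-type factors that accompany the Hecke generators in this limit, leaving no phases. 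I would check continuity of the whole family at both endpoints by confirming that each matrix entry, being a bounded rational function of $q$ on $[0,\infty]$, extends continuously to the closed interval.

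The main obstacle, I expect, is the bookkeeping in the two crystal limits: showing rigorously that the $q \to 0$ and $q \to \infty$ limits of the explicit $q$-Clebsch-Gordan matrices reproduce exactly the RSK (resp.\ dual RSK) bijection \emph{including the correct $\pm$ signs} requires matching two combinatorial descriptions — one coming from the Gelfand-Tsetlin/crystal basis formulas, one from Schensted insertion — and this is where a genuine argument, rather than a routine $q$-specialization, is needed. The unitarity and efficiency claims, by contrast, should follow mechanically once the building blocks are in place, since they are inherited from the $q=1$ analysis in \cite{BCH} with $[m]_q$ in place of $m$.
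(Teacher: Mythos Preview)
Your approach is essentially the paper's: build the transform as a cascade of $q$-deformed Pieri (Clebsch-Gordan) transforms using explicit reduced Wigner coefficients in the Gelfand-Tsetlin-Jimbo basis, obtain efficiency from the recursive structure together with Solovay-Kitaev, and handle the limits $q \to 0,\infty$ via the known degeneration of those coefficients (Date-Jimbo-Miwa) to permutation matrices realizing RSK and dual RSK.

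There is one substantive gap. You assert that the output basis, being indexed by chains of partitions, ``is precisely the RSK-type indexing of the Schur-Weyl basis.'' But having the correct index set is not the same as having the correct basis: you must show that the multiplicity spaces produced by the Pieri cascade carry the Hecke-algebra action in the \emph{Young-Yamanouchi-Hoefsmit} basis specifically, not merely in some basis labeled by standard tableaux. The paper treats this as a genuinely nontrivial step (Theorem~\ref{th:agree}), noting that BCH stated it without proof and that no proof appeared in the literature. The paper's argument compares two towers of subalgebras --- one coming from the Pieri cascade $U_q(d) \subseteq U_q(d)^{\otimes 2} \subseteq \cdots$, the other from the Schur-Weyl tower $U_q(d) \subseteq H_q(2)\otimes U_q(d) \subseteq \cdots$ --- and shows that their GTT line decompositions of $V^{\otimes n}$ coincide by interpolating through a poset of intermediate algebras $H_q(j)\otimes U_q(d)^{\otimes k}$, using local ``triangle'' and ``square'' moves each of which preserves the decomposition because the relevant restriction is multiplicity-free. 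Without something like this, you have produced an efficient unitary, but not a proof that it computes \emph{the} Schur-Weyl transform in the stated bases.

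A minor slip: when you tensor $V^\lambda$ with one copy of $V$, the summands $V^\mu$ are indexed by $\mu$ obtained from $\lambda$ by \emph{adding} a box, not removing one.
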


Note the double use of the word ``quantum", referring to both quantum
computation and quantum algebra.  Those constructions in quantum algebra
that are non-unitary have no quantum computation interpretation, while many
constructions in quantum computation only have a pro forma interpretation
as quantum algebra. Theorem \ref{th:main} properly lies in both topics.  In fact,
the two senses of quantumness are slightly incongruous.  In quantum algebra,
the $q=1$ case is called classical or non-quantum, because it is the case
in which quantum groups become ordinary groups.  But as an algorithm, the
Schur-Weyl transform is not classical when $q=1$; it becomes classical
when $q=0$ instead.  The limit $q=0$ is called the \emph{crystal limit}
in quantum algebra.

Like the BCH algorithm, our algorithm has running time polynomial in the
number of qudits $n$, the size of the qudit $d$, and $\log \eps^{-1}$,
where $\epsilon$ is the desired accuracy.  The bound on running time is
also uniform in $q$, assuming that $q$ itself can be computed quickly.
Therefore, our algorithm is efficient in the sense that it is polynomial in
the number of qudits, for any fixed size of qudit.  We do not know whether
there is an algorithm which is jointly polynomial in $n$ and $\log d$,
\ie, polynomial in the input qubit length $n(\log d)$.

Our algorithm can be compared to quantum straightening algorithms
\cite{LT}.  Our algorithm can be called a Schur-Weyl straightening
algorithm, but we emphasize a different interpretation.  Straightening
algorithms are traditionally interpreted as algorithms in symbolic algebra or
numerical analysis.  As such, the input is not a linear number of qubits or
qudits, but rather an exponential list of components of a vector in a vector
space such as $V^{\tensor n}$.  One can make the same distinction between
a quantum Fourier transform and a classical discrete Fourier transform,
which can be algebraically the same, but are interpreted differently as
computer science.  One interesting connection between the two interpretations
is that a polynomial-time algorithm for a quantum transform always yields
a quasilinear-time algorithm for the corresponding numerical transform.
(The converse does not hold in general.)

Finally, in our interpretation and proof of \ref{th:main}, we will be
more precise about the basis refinement of \ref{sw}.  The relevant
basis of $V^\lambda$, which can be called the ``insertion tableau" by
extension from the $q=0$ and $q=\infty$ cases, is its Gelfand-Tsetlin-Jimbo (or GTJ)
basis.  The BCH algorithm and our generalization compute their
result in this basis essentially by construction --- the algorithm is built
from a subroutine, the Pieri transform (which BCH call the Clebsch-Gordan
transform), that stays in this basis.  We will also prove that the algorithm
yields the Young-Yamanouchi-Hoefsmit (or YYH) basis of $R^\lambda$, up to sign.
(Bacon, Chuang, and Harrow state that it produces the Young-Yamanouchi
basis without proof.)  Finally, specific bases of each $V^\lambda$ and
$R^\lambda$ do not quite completely determine a basis of the right side
of equation \ref{sw}, because we could still multiply each summand $R^\lambda
\tensor V^\lambda$ by a scalar, or in the Hilbert space case, by a phase.
In this sense, the Schur-Weyl transform is not quite uniquely determined.

This thesis is structured as follows.  In Chapter \ref{ch:hopf} we describe
the representation theory of Hopf algebras. In particular, we  focus on
the Hopf algebra of interest in this thesis, the quantum group $U_q(d)$. In
Chapter \ref{ch:hopf} we also investigate the Gelfand Tsetlin type bases for
representations, which have properties desirable for quantum computation. In
Chapter \ref{ch:comb} we describe the necessary combinatorics to discuss the
representation theory defining the Schur-Weyl transform. We also describe
the RSK algorithm and a generalization, which we call quantum insertion. In
Chapter \ref{ch:qreps} we describe the representation theories of the quantum
group $U_q(d)$ and the type A Hecke algebra $H_q(n)$, using the combinatorial
language detailed in Chapter \ref{ch:comb}. We end the chapter with the
formulation of Schur-Weyl duality, which is central to this thesis. In
Chapter \ref{ch:qschur} we define the Pieri and Schur-Weyl transforms with
an emphasis on their connections with insertion algorithms. Finally, in
Chapter \ref{ch:schurxform} we give an introductory backgroung to quantum
probability and algorithms, and present our main theorem \ref{th:main}.

\newchapter{Rep theory}{Hopf algebra representation theory}{Hopf algebra representation theory}
\label{ch:hopf}

\section{Introduction}

Finite groups and semisimple Lie algebras are familiar examples of
algebraic structures with nice representation theories. Hopf algebras
have an algebraic structure which generalizes that of both finite groups
and semisimple Lie algebras, while retaining the key properties of their
representations. In this chapter we describe Hopf algebras and the basics of
their representation theory. In Section \ref{s:hopf} we define Hopf algebras
and the property of cocommutativity. In Section \ref{s:qgroup} we see an
example of a noncocommutative Hopf algebra which will reappear in subsequent
chapters. In Section \ref{s:hreps} we define the representation theory of
Hopf algebras and state some of the key theorems in their study. Finally,
in Section \ref{s:gt} we describe a basis for algebra representations
which is of both algebraic and computational interest.

\section{Hopf algebras}\label{s:hopf}
The material in this section can be found in \cite{Ka}. An algebra $A$
over a field $\F$ has an associative multiplication $m \colon A \otimes
A \to A$. It also has a two-sided unit, which can be expressed as a map
$\iota \colon \F \to A$ such that $\iota(1)\cdot a = a = a \cdot \iota(1)$
for all $a$ in $A$. In pictures, the following two diagrams should commute:

(Associativity Axiom)
\[\xymatrix{A \otimes A \otimes A \ar[d]^-{\text{id} \otimes m} \ar[r]^-{m \otimes \text{id}} & A \otimes A \ar[d]^-{m} \\
        A \otimes A \ar[r]^-{m} & A}\]

(Unit Axiom)
\[\xymatrix{\F \otimes A \ar[rd]^-{\cong} \ar[r]^-{\iota \otimes \text{id}} & A \otimes A \ar[d]^-{m} & A \otimes \F \ar[l]_-{\text{id} \otimes \iota} \ar[ld]_-{\cong} \\
        & A}\]

A \emph{coalgebra} is obtained by reversing all the arrows. Thus we have
a coassociative comultiplication $\Delta \colon A \to A \otimes A$ and a
two-sided counit $\varepsilon \colon A \to \F$, where the following two
diagrams should commute:

(Coassociativity Axiom)
\[\xymatrix{A \otimes A \otimes A & A \otimes A \ar[l]_-{\Delta \otimes \text{id}} \\
        A \otimes A \ar[u]_-{\text{id} \otimes \Delta} & A \ar[u]_-{\Delta}\ar[l]_-{\Delta}}\]

(Counit Axiom)
\[\xymatrix{\F \otimes A & A \otimes A \ar[l]_-{\varepsilon \otimes \text{id}} \ar[r]^-{\text{id} \otimes {\varepsilon}} & A \otimes \F \\
        & A \ar[lu]_-{\cong} \ar[u]_-{\Delta} \ar[ru]^-{\cong} }\]

If an algebra $A$ also has a coalgebra structure, so that the maps $\Delta,
\varepsilon$ are algebra homomorphisms and $m$, $\iota$ are coalgebra
homomorphisms, then it is a \emph{bialgebra}. Then, a \emph{Hopf algebra}
is a bialgebra with a map called the \emph{antipode}. The antipode is
a bialgebra endomorphism $S \colon A \to A$ where the following three
compositions are identical:

\[\xymatrix{ A \ar[r]^-{\Delta} & A \otimes A \ar[r]^-{S \otimes \text{id}} & A \otimes A \ar[r]^-{m} & A }\]
\[\xymatrix{ A \ar[r]^-{\Delta} & A \otimes A \ar[r]^-{\text{id} \otimes S} & A \otimes A \ar[r]^-{m} & A }\]
\[\xymatrix{ A \ar[r]^-{\varepsilon} & \F \ar[r]^-{\iota} & A }\]

We will generally consider algebras over the complex numbers $\C$ and the
real numbers $\R$.
\begin{example} Given a group $G$, we can form its group algebra $\C[G]$
with basis indexed by elements $g \in G$. $\C[G]$ is in fact a Hopf algebra
with coproduct, counit, and antipode map defined by
\begin{align*}
\Delta(g) & = g \otimes g\\
\varepsilon(g) & = 1\\
S(g) & = g^{-1}
\end{align*}
\end{example}

\begin{example} Let $\mathfrak{g}$ be a Lie algebra over $\C$. Then its
universal enveloping algebra $U(\mathfrak{g})$ is a Hopf algebra with
coproduct, counit, and antipode map defined by
\begin{align*}
\Delta(a) & = a \otimes 1 + 1 \otimes a\\
\varepsilon(a) & = 0\\
S(a) & = -a
\end{align*}
\end{example}

If $A$ is an algebra over $\C$, then it is a *-algebra if it has a map $*
\colon A \to A$ with the following properties:
\begin{align*}
(a+b)^* &= a^*+b^* & (\lambda a)^* &= \overline{\lambda} a^* \\
(ab)^* &= b^*a^* & a^{**} &= a,
\end{align*}
for $a, b \in A$ and $\lambda \in \C$. If $A$ is a *-algebra and Hopf
algebra so that $\Delta(x^*) = \Delta(x)^*$, then we call $A$ a \emph{Hopf
*-algebra}.

If $A_{\R}$ is an algebra over $\R$, then $A_\C = A_\R \otimes_\R \C$
is an algebra over $\C$. On the other hand a complex algebra $A_\C$ may
have more than one decomplexification $A_\R$, even though there is always
an obvious algebra inclusion $A_\R \subseteq A$.

Specifying a decomplexification of $A$ is equivalent to choosing a ``bar structure"
$a \mapsto \bar{a}$ that satisfies the axioms:
\begin{align*}
\bar{a+b} &= \bar{a}+\bar{b} & \bar{\lambda a} &= \bar{\lambda} \bar{a} \\
\bar{ab} &= \bar{a}\bar{b} & \bar{\bar{a}} &= a.
\end{align*}
This is almost the same as a *-structure, the difference being that
a bar structure does not reverse multiplication.  Given a bar structure
on $A$, the real subalgebra $A_\R$ is the set of self-conjugate
elements $a = \bar{a}$.  Also, if $A$ has both a *-structure and
a bar structure, then we require that they commute, or
$$(\bar{a})^* = \bar{(a^*)}.$$

In some cases, such as for $\C[G]$, the antipode map is involutory,
and the *-map is essentially a conjugate-linear version of the antipode
map. In other cases, the antipode map will not be involutory and there is
some other *-map making the algebra into a Hopf *-algebra.

We end by describing the condition of commutativity and define the analogue
for the coalgebraic structure of a Hopf algebra.  Let $A$ be an algebra
with multiplication map $m$. Define the flip map $\tau \colon A \otimes A
\to A \otimes A$ by $\tau(x \otimes y)=y \otimes x$. One way of defining
$A$ to be commutative is by requiring the following diagram commute:

\[\xymatrix{A \otimes A \ar[rd]^-{m} \ar[rr]^-{\tau} & & A \otimes A \ar[ld]_-{m}\\
& A & }\]

Thus, if $A$ is a coalgebra with comultiplication map $\Delta$, we define
\emph{cocommutativity} by requiring the following diagram commute instead:

\[\xymatrix{A \otimes A & & A \otimes A \ar[ll]_-{\tau}\\
       & A \ar[lu]_-{{\Delta}} \ar[ru]^-{{\Delta}} & }\]

Note that both $\C[G]$ and $U(\mathfrak{g})$ are generally noncommutative
but always cocommutative. In the next section we'll examine a Hopf algebra
which is both noncommutative and noncocommutative.

\section{The quantum group of $\mathfrak{gl}(d)$}\label{s:qgroup}
The Hopf algebras $\C[G]$ and $U(\mathfrak{g})$ we saw in section
\ref{s:hopf} are generally noncommutative but always cocommutative. In this
section, we introduce an example of a Hopf algebra which is noncommutative
and noncocommutative: the \emph{quantum group}. Quantum groups as defined
independently by Drinfeld \cite{D} and Jimbo \cite{J} are deformations of
$U(\mathfrak{g})$ for $\mathfrak{g}$ a Lie aglebra.

In this section we consider the Lie algebra $\mathfrak{gl}(d)$ which
is isomorphic to $\text{End}(d)$, the set of linear maps on $\C^d$. The
generators of $U(\mathfrak{gl}(d))$ are $e_i$ and $f_i$ for $1 \leq i \leq
d-1$, and $h_i$ for $1 \leq i \leq d$. The relations on the generators
are called Serre relations and are given by:
\begin{align*}
[h_i,h_j] & = 0 \qquad \text{for $j \ne i$}\\
[h_i,f_j] & = [h_i,e_j] = 0\\
[e_i,f_j] & = \delta_{ij} h_i\\
[e_i,e_j] & = [f_i,f_j] = 0 \qquad \text{for $|i-j|>1$}\\
e_ie_{i \pm 1}e_i & = \frac{1}{2}\left(e_i^2e_{i \pm 1} + e_{i \pm 1}e_i^2 \right) \\
f_if_{i \pm 1}f_i & = \frac{1}{2}\left(f_i^2f_{i \pm 1} + f_{i \pm 1}f_i^2 \right).
\end{align*}

The associated Drinfeld-Jimbo quantum deformation of $U(\mathfrak{gl}(d))$
is called a \emph{quantum group}, and is written $U_q(\mathfrak{gl}(d))$
which we will abbreviate to $U_q(d)$. The parameter $q$ is a complex number
not equal to zero or one. The generators of $U_q(d)$ are $e_i$ and $f_i$
for $1 \leq i \leq d-1$, and $q^{\pm h_i/2}$ for $1 \leq i \leq d$. The
generators $q^{\pm h_i/2}$ can be interpreted as formal exponentials
rather than actual powers of $q$.  The formal notation is meant to imply
that these generators commute with each other and that $q^{-h_i/2}$ is
the reciprocal of $q^{h_i/2}$, using addition in the exponent.  In these
formal exponentials, we also let $k_i = h_i - h_{i+1}$.

We use the notation $[n]$ for the \emph{quantum integer} defined by the formula
$$[n] = \frac{q^n - q^{-n}}{q - q^{-1}} = q^{n-1} + q^{n-3} + \dots + q^{-(n-3)} + q^{-(n-1)}.$$

Extending the notation to operators, we write
$[h_i] = \displaystyle\frac{q^{h_i} - q^{-h_i}}{q - q^{-1}}.$

The relations on the $U_q(d)$ generators are $q$-deformations of the
$U(\mathfrak{g})$ Serre relations, and are given by
\begin{align*}
[q^{h_i/2},q^{h_j/2}] & = 0\,\,, \text{for $i \ne j$}\\
q^{h_i/2}e_j &=
\begin{cases} q^{1/2}e_j q^{h_i/2} & \text{for $i = j$}\\
q^{-1/2}e_j q^{h_i/2} & \text{for $i = j+1$}\\
e_j q^{h_i/2} & \text{otherwise} \end{cases}\\
q^{h_i/2}f_j &=
\begin{cases} q^{-1/2}f_jq^{h_i/2} & \text{for $i = j$}\\
q^{1/2}f_jq^{h_i/2} & \text{for $i = j+1$}\\
f_jq^{h_i/2} & \text{otherwise} \end{cases}\\
[e_i,f_j] &=  \delta_{ij} [h_i]\\
[e_i,e_j] &= [f_i, f_j] = 0,\,\, |i - j| \geq 2\\
e_ie_{i \pm 1}e_i & = \frac{1}{[2]}\left(e_i^2e_{i \pm 1} + e_{i \pm 1}e_i^2 \right) \\
 f_if_{i \pm 1}f_i & = \frac{1}{[2]}\left(f_i^2f_{i \pm 1} + f_{i \pm 1}f_i^2 \right).
\end{align*}

Interestingly, the Hopf algebra structure on $U(\mathfrak{g})$ can also
be deformed so that $U_q(d)$ is a Hopf algebra. For example, the coproduct
map becomes
\begin{align*}
& \Delta(q^{h_i/2})  = q^{h_i/2} \otimes q^{h_i/2}\\
& \Delta(e_i)  = e_i \otimes q^{-k_i/2} + q^{k_i/2} \otimes e_i\\
& \Delta(f_i)  = f_i \otimes q^{-k_i/2} + q^{k_i/2} \otimes f_i.
\end{align*}

There are other deformations of the Hopf algebra structure that result
in different coproduct maps. For example, we could replace $\Delta$ as
defined above by $\tau \circ \Delta$, which is distinct from $\Delta$
by noncocommutativity.

When $q$ is real and positive, $U_q(d)$ has a *-map defined by
$$e_i^* = f_i \qquad f_i^* = e_i \qquad (q^{h_i/2})^* = q^{h_i/2}.$$
This *-map makes $U_q(d)$ into a Hopf *-algebra. When $q$ is real and
positive, $U_q(d)$ also has a bar structure in which all of the generators
are real, and they generate a real Hopf *-algebra $U_q(d)_\R$.

In future sections we'll restrict to the case when $q$ is real and positive
so that we can use the associated * and bar structures.

\section{The representation theory of Hopf algebras}\label{s:hreps}
A \emph{representation} of an algebra $A$ is a vector space $V$ and a linear
map $\rho: A \rightarrow \text{End}(V)$ which preserves the multiplication
and unity, i.e. $\rho(ab) = \rho(a)\rho(b)$ and $\rho(1)=1$. The action
$\rho$ can be implied so that $\rho(a)v$ is written $av$, or in quantum
notation as $a\ket{v}$. In the rest of this section, we fix the assumptions
that our algebra $A$ is a Hopf algebra, and our representations $V$ are
defined over $\C$ and are finite dimensional.

Two representations $V$ and $W$ of $A$ are \emph{isomorphic} if there
exists a linear bijection $T: V \rightarrow W$ that commutes with the
action of $A$, \ie, $T(av) = aT(v)$ for all $a \in A$, $v \in V$.

The representation $V$ is \emph{irreducible} if it has no non-trivial
subspaces that are closed under the action of $A$. In this thesis we use the
abbreviation \emph{irrep}. For example, the counit map $\varepsilon \colon
A \to \C$ defines a \emph{trivial representation}, which is irreducible
since it's one dimensional.

Given two representations $V$ and $W$ of $A$, there is a well-defined
representation structure on the direct sum $V \oplus W$ given by
$$a(v \oplus w) = av \oplus aw.$$

A representation $V$ is \emph{semisimple} if it is isomorphic to a direct
sum of irreps. (Likewise, an algebra $A$ is called semisimple if all of its
representations are semisimple.) The number of occurrences of an irrep $W$
in $V$ is called the \emph{multiplicity} of $W$. If the multiplicities
are all 0 or 1, then $V$ is called \emph{multiplicity-free}.

Assume for the moment that $A$ and its subalgebras are semisimple. Given
a representation $V$ of $A$ and a subalgebra $B \subseteq A$, the
restriction of $V$ to $B$ will be denoted by $\Res^{A}_{B} V$. When $V$
is an irrep of $A$, $\Res^{A}_{B} V$ is typically not an irrep of $B$,
but by semisimplicity $\Res^{A}_{B} V$ decomposes as a direct sum of
irreps of $B$.  A rule for describing the decomposition of $\Res^{A}_{B}
V$ into irreps is called a \emph{branching rule}.  If for all irreps $V$
of $A$, the branching rule for $\Res^{A}_{B} V$ is multiplicity-free,
then the inclusion $B \subseteq A$ is called a \emph{Gelfand pair}.

Given two representations $V$ and $W$ of $A$, the coproduct map $\Delta
\colon A \to A \otimes A$ is used to define a representation structure on
$V \otimes W$.

The antipode map is used to define a \emph{dual representation}. Given
a representation $V$ of $A$, define $V^*$ to be the dual space of linear
functionals on $V$. Then, the action of $A$ on $V^*$ is defined by $a\bra{v}
= \bra{v} S(a)$.

If $A$ is a Hopf $^*$-algebra, then $V$ is a *-representation if $\rho(a^*)
= \rho(a)^*$ where the * on the right side is the Hermitian adjoint. (The
Hermitian adjoint makes the algebra $\text{End}(V)$ into a *-algebra.) This
generalizes the notion of a unitary representation of a group. In
particular, a *-representation $V$ is automatically semisimple: If $W$ is
a subrepresentation of $V$, then so is its orthogonal complement $W^\perp$.

Although such a $V$ might possibly have non-orthogonal irreducible
decompositions, it always has an orthogonal irreducible decomposition.
If $V$ is multiplicity-free, then its irreducible decomposition is unique
and therefore orthogonal.

Our analysis so far carries over verbatim to representations of algebras
over $\R$. Quantum computation is defined over $\C$, and we will ultimately
be interested in connecting representations over $\R$ with representations
over $\C$.

If $V_\R$ is a representation of $A_\R$, then
$$V_\C = V_\R \tensor_\R \C$$
is a bar representation of $A_\C$. But note that even if $V_\R$ is
irreducible, $V_\C$ may or may not be irreducible.  If $\End(V_\R) \cong \R$,
then $V_\C$ is irreducible, while if $\End(V_\R) \cong \C$ or $\End(V_\R)
\cong H$ (the quaternions), then $V_\C$ has two irreducible summands.
In the former case, we will say that $V_\R$ is \emph{strongly irreducible}.

\begin{example} $\C[G]$ is a Hopf *-algebra with *-map defined by $g^* =
g^{-1}$. Note that in this case a representation being a *-representation
is the same thing as it being unitary as a representation of $G$. Also
$\C[G]$ has a standard bar structure with $g = \bar{g}$, so that its
decomplexification is the real group algebra $\R[G]$.

When $\C[G]$ is finite dimensional it has additional properties for its
irreps. For example, there are finitely many distinct irreps of $\C[G]$,
indexed by the conjugacy classes of $G$. And we always have semisimplicity
of representations of $\C[G]$.
\end{example}

\begin{example} Every continuous representation $V$ of a connected Lie
group $G$ is also a representation of the universal enveloping algebra
$U(\mathfrak{g})$ and it has the same subrepresentations.

If $\mathfrak{g}_\R$ is a real Lie algebra and $\mathfrak{g}_\C$ is
its complexification, then $U(\mathfrak{g}_\C)$ has both a natural bar
structure --- where the real subalgebra is $U(\mathfrak{g}_\R)$ --- and a
natural *-structure.  Since $U(\mathfrak{g}_\C)$ is generated as a complex
algebra by $\mathfrak{g}_\R$, we define these structures by letting
$$\bar{a} = a \qquad a^* = -a$$
for $a \in \mathfrak{g}_\R$.
\end{example}

\section{Gelfand-Tsetlin type bases}\label{s:gt}
In this section, we describe bases for irreps with special algebraic and
computational properties.  We will be interested in a tower of algebras
$$\C = A_0 \subseteq A_1 \subseteq A_2 \subseteq \cdots \subseteq A_n$$
and we will use the abbreviation
$$\Res^k_{k-1} V = \Res^{A_k}_{A_{k-1}} V$$
for the restriction of a representation $V$ of $A_k$.

Suppose that each inclusion $A_{k-1} \subseteq A_n$
is a Gelfand pair.  Then if $V = V_n$ is an irrep of $A_n$, $\Res^n_{n-1}
V$ is a direct sum of irreps $V_{n-1}$ of $A_{n-1}$, and by induction
each $\Res^k_{k-1} V_k$ is a direct sum of irreps $V_{k-1}$ of $A_{k-1}$.
As a result, $V$ is expressed as a direct sum of irreps $V_0$ of $A_0
= \C$, and all such irreps are isomorphic and 1-dimensional.  Thus $V$
has a basis of lines which are encoded by flags
$$\C \cong V_0 \subseteq V_1 \subseteq \dots \subseteq V_n = V.$$
This line basis is called a Gelfand Tsetlin type (GTT) basis.  By extension, any vector
basis that refines the GTT line basis is also called a GTT basis.  Note that
in the encoding, the number of bits a GTT basis vector requires is the
sum of the bits required to encode each summand $V_k$.

To get a sense of the significance of a GTT basis, note that whenever
$a \in A_k$ and $v \in V_k$, then $av \in V_k$ also.  This means that
we can express the action of an element $a$ on $V$ in the setting of
a lower-dimensional algebra, which naturally gives rise to a recursive
structure.  However, note that a GTT vector basis of an irrep $V$ is not
unique; only the corresponding line basis is unique.  The
computational strength of a GTT basis can still depend on how its
vectors are scaled.

\begin{remark} In some articles in quantum computation, if $V$ is an
irrep of a group $G$ and $H \subseteq G$ is a subgroup, then a basis that
refines a decomposition of $\Res^G_H V$ is called \emph{subgroup-adapted}.
The analogous notion for us is a basis that is \emph{subalgebra-adapted}.
In this terminology, a GTT basis is recursively adapted to a tower of
subgroups or subalgebras.
\end{remark}

If each algebra $A_k$ is a $^*$-algebra and $V$ is a $^*$-representation of
$A = A_n$, then a GTT basis is automatically orthogonal, because each
restriction $\Res^k_{k-1} V_k$ has an orthogonal decomposition.
We further require that a GTT vector basis of a $^*$-representation be
orthonormal, so that the basis is usable in quantum computation.  However,
even when GTT basis vectors are orthonormal, their phases are still not
determined by the GTT property.

If $V$ and $W$ are two irreps of an algebra $A$, with given GTT bases, then technically
their combinations $V \oplus W$ and $V \otimes W$
do not have GTT bases.  However, we can define standard bases by taking
the direct sum and tensor bases, respectively.  These combinations are
GTT bases with respect to the action of $A \otimes A$ instead.

\newchapter{comb}{The combinatorics of Young tableaux and insertion algorithms}{The combinatorics of Young tableaux and insertion algorithms}\label{ch:comb}

\section{Introduction}

The representation theories of the algebras described in this thesis are
indexed by combinatorial objects called Young tableaux. In this chapter
we describe the combinatorics of these objects. In Section \ref{s:comb}
we define Young tableaux and state some of their key properties. In Section
\ref{s:qinsertion} we describe insertion algorithms for operating on Young
tableau, which will connect to some interesting representation theory in
subsequent chapters.

\section{The combinatorics of Young tableaux}\label{s:comb}
A \emph{partition} $\lambda$ is a list of non-negative integers
$$\lambda = (\lambda_1, \lambda_2, \dots, \lambda_d)$$
such that
$$\lambda_1 \geq \lambda_2 \geq \dots \geq \lambda_d.$$

We say that $\lambda$ is a partition of $n$, or $\lambda \vdash n$, if
$\sum_k \lambda_k = n$.  The length of $\lambda$, denoted $\ell(\lambda)$,
is the number of non-zero entries of $\lambda$.

A partition $\lambda$ has an associated \emph{Young diagram}, which is
a horizontal histogram with $\ell(\lambda)$ rows; the $k$th
row has $\lambda_k$ boxes.
\begin{example}
The Young diagram of $\lambda = (3,2,1,1)$ is
$$\yng(3,2,1,1)$$
\end{example}

If $\mu$ and $\lambda$ are partitions so that the Young diagram of $\mu$ is
contained in the Young diagram of $\lambda$, then we write $\mu \subseteq
\lambda$. If $\lambda$ and $\mu$ differ by a single box then $\lambda$
is said to \emph{cover} $\mu$.

When $\mu \subseteq \lambda$ we can form a Young diagram of \emph{skew
shape} given by $\lambda \setminus \mu$ which means removing the boxes in
the Young diagram of $\mu$ from the boxes in the Young diagram of $\lambda$.

\begin{example} If $\lambda = (3,2,1,1)$, and $\mu = (1,1)$, then the skew
shape $\lambda \setminus \mu$ is given by
$$\young(:\blank\blank,:\blank,\blank,\blank)$$
\end{example}

If $\lambda \setminus \mu$ has at most one box in each of its columns,
then it is called a \emph{horizontal strip}.

A \emph{Young tableau} of shape $\lambda$ (including skew shapes) is a
filling of the boxes of the Young diagram of shape $\lambda$ with positive
integers.  If $t$ is a Young tableau of shape $\lambda$, we write $\sh(t)
= \lambda$.  We will use special types of tableaux called semi-standard
and standard Young tableaux.

A Young tableau of shape $\lambda$ is \emph{semi-standard} (abbreviated
SSYT) if its entries weakly increase from left to right and strictly
increase from top to bottom.

\begin{example}
An example of an SSYT with shape $(3,2,1,1)$ is given by
$$t = \young(112,23,3,4).$$

An example of an SSYT with skew-shape $(3,2,1,1) \setminus (1,1)$ is given by
$$u = \young(:11,:3,2,3).$$
\end{example}

We denote the set of SSYT of shape $\lambda$ with entries in $\{1,\dots,d\}$
by $\SSYT(\lambda,d)$.  When the value of $d$ is obvious, we suppress it
and write $\SSYT(\lambda)$.

A Young tableau of shape $\lambda \vdash n$ is \emph{standard} (abbreviated
SYT) if its entries are in $\{1,\dots,n\}$ and strictly increase both from
top to bottom and from left to right.

\begin{example}
An example of an SYT with shape $(3,2,1,1)$ is given by
$$t = \young(124,35,6,7).$$

An example of an SYT with skew-shape $(3,2,1,1) \setminus (1,1)$ is given by
$$u = \young(:12,:4,3,5).$$
\end{example}

We denote the set of standard Young tableaux of shape $\lambda$ by
$\SYT(\lambda)$.

If $\nu$ is a horizontal strip, then the SYT $u$ obtained by filling
the Young diagram of $\nu$ with letters from left-to-right is called
\emph{ordered}. (This is not in general defined for skew-tableau, but
exists for horizontal strips.)

\begin{example}
The ordered SYT of the horizontal strip $\nu = (3,1) \setminus (1)$ is given by
$$u = \young(:23,1).$$
\end{example}

Given any $t \in \SSYT(\lambda,d)$, let $t^{(k)}$ be the restricted tableau in
$\SSYT(\lambda,k)$ obtained by removing all boxes from $t$ with numbers
larger than $k$.

\begin{example} Let
$$t = \young(112,3).$$
Then
$$t^{(2)} = \young(112) \qquad t^{(1)} = \young(11).$$
\end{example}

Note that the skew shapes $sh(t^{(k)}) \setminus sh(t^{(k-1)})$ for an
SSYT $t$ are always horizontal strips for each $k$, so we give them the
label $\lambda^{(i)}(t)$ where $sh(t) = \lambda$.

Finally, the \emph{residue} of a box $b$ in a Young tableau $t$ is the
difference of its coordinates.  In other words, if $b$ has coordinates
$(i,j)$, then its residue is $res(b) = j - i$. Two boxes in a Young tableau
have the same residue if and only if they lie on the same diagonal.

The \emph{axial distance} $a$ between two boxes $b$ and $b'$ is defined to
be the difference in their residues, given by $a = res(b) - res(b')$. It's
described as a distance because it counts the number of boxes in any path
in the Young diagram from the box $b$ to the box $b'$ where moves left
and down count for $+1$ and moves right and up count for $-1$.

\begin{example}
Consider the SYT
$$t = \young(12\redthree,\redfour5,6).$$
The axial distance from the box containing three to the box containing
four is 3. Note the axial distance is antisymmetric, so the distance from
the box containing four to the box containing three is $-3$.
\end{example}

Given a horizontal strip $\lambda$, we define $a_{ij}$ for $i < j$ to be the
axial distance from the last box in the $i$th row to the last box in the
$j$th row. (Note we can equivalently use the boxes in the second-to-last,
third-to-last, etc., positions, and get the same values.)

We will see the axial distances $a_{ij}$ in subsequent chapters when we
describe matrix coefficients that derive from representation theory. One
main reason these axial distances are chosen is that they sum in a very
natural way:
$$a_{ij} + a_{jk} = a_{ik}.$$

\begin{example}
If $\nu$ is the shape given by

$$\young(::::\blank\blank,:\blank\blank\blank,\blank)$$
then $a_{12} = 3$, $a_{23} = 4$, and $a_{13} = a_{12} + a_{23} = 7$.
\end{example}

\section{Insertion algorithms}\label{s:qinsertion}
Given an SSYT $t$ and a letter $i$, we will add a new box to $sh(t)$ to
make space for an extra letter and insert the $i$ into the tableau $t$,
possible rearranging other letters in the process. In this section we review
the well-known RSK insertion algorithm and introduce a generalization that
we call quantum insertion. The RSK algorithm can be found in \cite{K}
and \cite{schensted}, while quantum insertion is our way of describing
the techniques found in \cite{Da}.

\subsection{RSK insertion}

The first insertion algorithm we examine is called
\emph{Robinson-Schensted-Knuth} (abbreviated RSK), and denoted $(i
\xrightarrow{\text{\tiny RSK}} t)$. This insertion algorithm produces a
unique tableau, given by the rules:

\begin{enumerate}
\item If $i$ is greater than or equal to all the numbers in the first row
of $t$, then add $i$ to the end of the first row of $t$.
\item Otherwise, pick the leftmost box in the first row containing a number
$j > i$.  Replace $j$ by $i$. (This process is referred to as $i$ \emph{bumping} $j$.)
\item Repeat steps (1) and (2) for $j$ starting with the second row. Proceed
inductively.
\end{enumerate}

\begin{example}
Start with
$$t = \young(11\bluetwo,2\redthree,3,4)$$
If we choose to insert a letter $i \ge 2$, such as $i = 4$, then it will
be added to the end of the first row:
$$(4 \xrightarrow{\text{\tiny RSK}} t) = \young(1124,23,3,4)$$
However, if we choose to insert the letter $i = 1$, it will bump the two
out of the first row:
$$\young(111,2\redthree,3,4)$$
The two will then bump the three out of the second row, which will itself
get added at the end of the third row. Therefore,
$$(1 \xrightarrow{\text{\tiny RSK}} t) = \young(111,2\bluetwo,3\redthree,4)$$
\end{example}

There is also a dual RSK algorithm, denoted $(i \xrightarrow{\text{\tiny
RSK}^*} t)$, which can be thought of as the standard RSK algorithm applied
to columns instead of rows. Thus, the dual RSK algorithm also produces a
unique output, given by the rules:
\begin{enumerate}
\item If $i$ is larger than all numbers in the first column of $t$, add $i$
to the end of the first column of $t$.
\item Otherwise, pick the topmost box in the first column that contains a
number $j \geq i$.  Replace $j$ by $i$.
\item Repeat steps (1) and (2) for $j$ starting with the second column. Proceed
inductively.
\end{enumerate}

Given a word $w = w_1 \dots w_n$, we can extend the algorithm by induction to define
\begin{align*}
 w \xrightarrow{\text{\tiny RSK}}  & = w_n \xrightarrow{\text{\tiny RSK}} (w_{n-1} \xrightarrow{\text{\tiny RSK}} ( \dots (w_2 \xrightarrow{\text{\tiny RSK}} w_1)\dots))
\end{align*}

\begin{example}
If $w = \redtwo1\bluetwo$,
\begin{align*}
w \xrightarrow{\text{\tiny RSK}} = \young(1\bluetwo,\redtwo)
\end{align*}

Given the word $w' = \redtwo\bluetwo1$ we still obtain the same output as
$w$, i.e. $w \xrightarrow{\text{\tiny RSK}} = w' \xrightarrow{\text{\tiny
RSK}}$. On the other hand, $w'' = 1\redtwo\bluetwo$ results in a different
tableau

\begin{align*}
w'' \xrightarrow{\text{\tiny RSK}}  = \young(1\redtwo\bluetwo)
\end{align*}
\end{example}

The above example shows that the RSK map is not invertible. However,
note there is a way of distinguishing $w \xrightarrow{\text{\tiny RSK}}$
and $w' \xrightarrow{\text{\tiny RSK}}$ by a \emph{recording tableau}
which tracks the order new boxes are added in the sequence of insertions,
as is done in the following example. Then we define $\text{\tiny RSK}
(w) = P(w) \times Q(w)$, where $P(w) = w \xrightarrow{\text{\tiny RSK}}$
and $Q(w)$ is the recording tableau.

\begin{example}
If $w = \redtwo1\bluetwo$,
\begin{align*}
\text{\tiny RSK} (w) = \young(1\bluetwo,\redtwo) \times \young(13,2)
\end{align*}
whereas if
$w' = \redtwo\bluetwo1$,
\begin{align*}
\text{\tiny RSK} (w') = \young(1\bluetwo,\redtwo) \times \young(12,3)
\end{align*}
so that $\text{\tiny RSK} (w) \ne \text{\tiny RSK} (w')$.
\end{example}

The proof of the following theorem that the RSK map is a bijection can be
found in \cite{K}.
\begin{theorem}\label{t:rsk}
Let $V_d^n$ be the set of words in $d$ letters of length $n$. Then, the RSK
map $\text{\tiny RSK}(w) = P(w) \times Q(w)$ is a bijection between $V_d^n$
and the disjoint union $$\displaystyle\coprod_{\substack{\lambda \vdash n\\
\ell(\lambda) \le d}} \text{SSYT}(\lambda) \times \text{SYT}(\lambda).$$
\end{theorem}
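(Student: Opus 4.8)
The plan is to establish the bijection by constructing an explicit two-sided inverse. The forward map $w \mapsto P(w) \times Q(w)$ is well-defined: each single RSK insertion $i \xrightarrow{\text{\tiny RSK}} t$ produces a unique SSYT by the insertion rules, and by induction $P(w)$ is a well-defined SSYT whose shape $\lambda$ satisfies $\ell(\lambda) \le d$ since we only ever use letters from $\{1,\dots,d\}$; the recording tableau $Q(w)$ is an SYT of the same shape because a new box is added at each of the $n$ insertion steps, and the boxes are recorded in increasing order. The target set therefore receives the image, so it remains to show the map is both injective and surjective, which I would do simultaneously by reverse bumping.

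First I would define the inverse of a single insertion (\emph{reverse bumping}). Given an SSYT $P$ of shape $\lambda$ and a box $b$ that is a removable corner of $\lambda$ (an ``outer corner''), one recovers a letter and a smaller SSYT as follows: take the entry in box $b$, move it up one row, and have it bump out the rightmost entry in that row that is strictly smaller than it; repeat, climbing up one row at a time, until an entry is ejected from the first row. The ejected entry is the inserted letter, and the resulting tableau is $P$ with the insertion undone. The key routine lemma here is that reverse bumping genuinely inverts forward bumping, box-for-box: this is a local check that the bumping path is retraced exactly, using the strict-increase-down, weak-increase-across conditions to guarantee that at each row the box entered/left by the path is uniquely determined. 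I would state this as a lemma and prove it by tracking the bumping route.

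Next I would assemble the global inverse. Given a pair $(P, Q)$ with $\sh(P) = \sh(Q) = \lambda \vdash n$, $P \in \SSYT(\lambda)$, $Q \in \SYT(\lambda)$, we read off $w$ letter by letter in reverse: the box of $Q$ containing $n$ is necessarily an outer corner of $\lambda$; apply reverse bumping to $P$ at that box to extract the last letter $w_n$ and a tableau $P'$ of shape $\lambda'$ where $\lambda'$ is $\lambda$ with that corner removed; delete the box containing $n$ from $Q$ to get $Q' \in \SYT(\lambda')$; recurse. After $n$ steps we have recovered a word $w \in V_d^n$ (every letter lies in $\{1,\dots,d\}$ because $P'$ always stays an SSYT in $d$ letters). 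That this is a two-sided inverse follows from the single-step lemma together with an induction on $n$: the forward map and this reverse map undo each other one insertion at a time, and the recording tableau exactly bookkeeps which corner to reverse-bump at each stage.

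I expect the main obstacle to be the single-step inversion lemma — proving carefully that reverse bumping retraces the forward bumping path and that, in particular, at each row the ``rightmost entry strictly smaller than the incoming value'' sits in precisely the column that the forward pass passed through. This requires the standard but slightly fiddly argument that along a bumping path the entries strictly decrease up the path while the columns weakly decrease, so that the reverse procedure has no choice. Once that lemma is in hand, injectivity, surjectivity, and the fact that the codomain is exactly $\coprod_{\lambda \vdash n,\ \ell(\lambda) \le d} \SSYT(\lambda) \times \SYT(\lambda)$ all fall out of the induction with no further work; I would simply cite \cite{K} for the detailed verification of the lemma rather than reproduce it.
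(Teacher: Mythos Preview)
Your proposal is correct and is the standard reverse-bumping argument. The paper does not actually prove this theorem at all: it simply states that ``the proof of the following theorem that the RSK map is a bijection can be found in \cite{K}'' and moves on, so your sketch in fact supplies more than the paper does, and it matches the approach in the cited reference.
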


\subsection{Quantum insertion}

In this subsection we consider a generalization of RSK we call \emph{quantum
insertion}, or q-insertion, and denoted $(i \xrightarrow{\text{\tiny
qINS}} t)$. Given an SSYT $t$ and a letter $i$, q-insertion produces a
set of output tableaux, one of which is $(i \xrightarrow{\text{\tiny RSK}}
t)$. The rules for constructing the output tableaux are given by:

\begin{enumerate}
\item In all possible ways add a new box to $sh(t)$.
\item In all possible ways, take either of the following two steps.
\begin{itemize}
\item Insert $i$ into the new box. If this step is taken, the algorithm terminates.
\item For any letter $j > i$, $i$ can replace (or bump) $j$. In this case
step 2 is repeated inductively with $j$.
\end{itemize}
\end{enumerate}

\begin{example}
Let $$t = \young(112,2\bluethree,\greenfour),$$
and suppose we wish to insert a 2 after adding a box to the second row.

Then, starting with
\begin{align*} \young(112,2\bluethree\blank,\greenfour), \end{align*}
we insert a \redtwo.  It replaces the \bluethree \, because it can't take
over the new box, and it can't replace the \greenfour.
                \begin{align*} \young(112,2\redtwo\blank,\greenfour) \end{align*}
We then repeat the procedure with the \bluethree, which can either take
over the new box or replace the \greenfour.
\begin{align*}\left(2 \xrightarrow{\text{\tiny qINS}} \young(112,23\blank,4) \right)
= \left\{\young(112,2\redtwo\bluethree,4)\,, \young(112,2\redtwo\greenfour,\bluethree) \right\} \end{align*}
\end{example}
We define a \emph{bumping sign} for an output tableau as follows. For each
letter involved in the bumping procedure, multiply the bumping sign by a
$-1$ if the letter moves to a lower row in the tableau. Note that for RSK,
the bumping sign can be $\pm 1$ whereas for dual RSK the bumping sign is
always +1.

Analogous to the RSK map, given a word $w = w_1 \dots w_n$, we extend the
q-insertion algorithm by induction to define
\begin{align*}
 w \xrightarrow{\text{\tiny qINS}} & = w_n \xrightarrow{\text{\tiny qINS}} (w_{n-1} \xrightarrow{\text{\tiny qINS}} ( \dots (w_2 \xrightarrow{\text{\tiny qINS}} w_1)\dots)).
\end{align*}

Unlike the RSK algorithm, if $w'$ is a permutation of $w$, then the sets $w
\xrightarrow{\text{\tiny qINS}}$ and $w' \xrightarrow{\text{\tiny qINS}}$
are equal. Also unlike RSK, the output of $w \xrightarrow{\text{\tiny qINS}}$
is an entire set of SSYT, and sometimes there is more than one insertion
path in $w \xrightarrow{\text{\tiny qINS}}$ which produces an SSYT $t$.

As we see in the following example, we can distinguish outputs by attaching a
recording tableau which tracks the order in which new boxes are added during
the insertion process. Then we define $\text{\tiny qINS} (w) = \{P_q(w)
\times Q_q(w)\}$, where $P_q(w)$ is a SSYT in $w \xrightarrow{\text{\tiny
qINS}}$ and $Q_q(w)$ is the associated recording tableau.

\begin{example}
Letting $w = 1\redtwo\bluetwo$,
\begin{align*}
w \xrightarrow{\text{\tiny qINS}} = \left\{ \young(1\redtwo\bluetwo)\, ,
\young(1\redtwo,\bluetwo)\, , \young(1\bluetwo,\redtwo) \right\}
\end{align*}
So, the second and third tableaux in $w \xrightarrow{\text{\tiny qINS}}$
are equal. However, we can distinguish the output tableaux by attaching
a recording tableau to each:
\begin{align*}
\text{\tiny qINS} (w) = \left\{ \young(1\redtwo\bluetwo) \times \young(123)\, ,
\young(1\redtwo,\bluetwo) \times \young(12,3)\, ,
\young(1\bluetwo,\redtwo) \times \young(13,2) \right\}
\end{align*}
\end{example}

Thus far, the reason for using the word ``quantum'' in the context of a
combinatorial insertion algorithm is unclear. In the rest of this section
we describe the reason for this choice. Much of the material can be found
in \cite{Da}.

Define the \emph{weighted q-insertion} map by
$$\text{\tiny qINS} (w) = \sum c_{P,Q} P_q(w) \times Q_q(w),$$
for a choice of nonzero constants $c_{P,Q} \in \C[q,q^{-1}]$.

The choice of coefficients $c_{P,Q}$ that interest us is determined by
representation theory and will be described in chapter \ref{ch:qschur}. The
connection between RSK and q-insertion becomes clear in the following
theorem, which can also be found in \cite{Da}.

\begin{theorem}\label{t:qins}
Let $V_d$ be the vector space over $\C[q,q^{-1}]$ with basis $\{1,\dots,d\}$,
and consider the vector space $V_d^{\otimes n}$ of words of length $n$. Let
$V^\lambda$ and $R^\lambda$ be the vector spaces over $\C[q,q^{-1}]$
with bases $\text{SSYT}(\lambda)$ and $\text{SYT}(\lambda)$, respectively.

Then, there exists a choice of coefficients $c_{P,Q}$ so that the weighted
q-insertion map $\text{\tiny qINS} (w) = \sum c_{P,Q} P_q(w) \times Q_q(w)$
defines a vector space isomorphism
$$V_d^{\otimes n} \cong \displaystyle\bigoplus_{\substack{\lambda \vdash n\\ \ell(\lambda) \le d}} V^\lambda \otimes R^\lambda.$$
\end{theorem}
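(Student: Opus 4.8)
The plan is to identify the claimed map with the Schur--Weyl decomposition of $V_d^{\otimes n}$ over $\C[q,q^{-1}]$, where $V_d = \C[q,q^{-1}]^d$ carries the standard representation of $U_q(d)$ and $V_d^{\otimes n}$ carries the commuting actions of $U_q(d)$ (via the iterated coproduct) and the Hecke algebra $H_q(n)$. By the $q$-analogue of Schur--Weyl duality --- which will be established in Chapter \ref{ch:qreps} --- one has an abstract isomorphism $V_d^{\otimes n} \cong \bigoplus_{\lambda} V^\lambda \otimes R^\lambda$, with $V^\lambda$ the irrep of $U_q(d)$ indexed by $\lambda$ (with basis $\SSYT(\lambda)$, the \gz type basis) and $R^\lambda$ the irrep of $H_q(n)$ (with basis $\SYT(\lambda)$, the Young--Yamanouchi type basis). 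So the content of the theorem is not the \emph{existence} of some isomorphism, but that it can be realized in the specific combinatorial form $\sum c_{P,Q}\, P_q(w) \times Q_q(w)$ for a suitable choice of nonzero $c_{P,Q} \in \C[q,q^{-1}]$.

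First I would set up the recursion on $n$. For $n=1$, $V_d^{\otimes 1} = V_d \cong \bigoplus_{\lambda \vdash 1} V^\lambda \otimes R^\lambda = V^{(1)} \otimes R^{(1)}$, and the map sends the letter $i$ to the single-box SSYT filled with $i$, tensored with the single-box SYT; this is manifestly an isomorphism, fixing $c_{P,Q} = 1$ in this base case. For the inductive step, one writes $V_d^{\otimes n} = V_d^{\otimes(n-1)} \otimes V_d$ and uses the inductive isomorphism on the first factor, reducing the problem to decomposing $(V^\mu \otimes R^\mu) \otimes V_d$ for each $\mu \vdash n-1$. The key point is that $R^\mu$ is inert under the $U_q(d)$-action, so this is really the decomposition of $V^\mu \otimes V_d$ as a $U_q(d)$-module: by the quantum Pieri rule (tensoring with the standard representation), $V^\mu \otimes V_d \cong \bigoplus_{\lambda} V^\lambda$, where $\lambda$ ranges over partitions obtained from $\mu$ by adding a single box --- exactly the shapes reachable by step (1) of q-insertion. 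Meanwhile the new box added to the recording tableau $Q$ is precisely that box, so $R^\mu$ grows to $R^\lambda$ by appending the box labeled $n$, which matches the branching of $H_q(n-1) \subseteq H_q(n)$. The combinatorial identity to verify is that the set of pairs $(P_q(w), Q_q(w))$ produced by q-insertion is exactly $\coprod_\lambda \SSYT(\lambda) \times \SYT(\lambda)$, counted with the correct multiplicities --- and this follows by tracking how the insertion path, reading off which box is added and which chain of bumps occurs, corresponds bijectively to a choice of summand $V^\lambda$ together with a \gz basis vector in $V^\lambda$ (the final tableau $P$) and a flag/path in the restriction chain $R^{(1)} \subseteq R^{(2)} \subseteq \dots \subseteq R^\lambda$ (the recording tableau $Q$).

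The coefficients $c_{P,Q}$ are then forced: for each insertion step, tensoring the current \gz basis vector of $V^\mu$ with a basis vector of $V_d$ and projecting onto the summand $V^\lambda$ produces a scalar --- a quantum Clebsch--Gordan / Pieri coefficient, expressible in terms of the axial distances $a_{ij}$ of the relevant horizontal strip --- and $c_{P,Q}$ is the product of these scalars over the insertion path. One must check these coefficients are nonzero (they are ratios of quantum integers $[n]$, which are nonzero in $\C[q,q^{-1}]$ since $q$ is an indeterminate, so no root-of-unity degeneration occurs) and lie in $\C[q,q^{-1}]$ after clearing denominators (or one works over the fraction field $\C(q)$ and observes the rescaled map still descends). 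That the resulting map is an isomorphism over $\C[q,q^{-1}]$ follows because it is an isomorphism after specializing $q=1$ (where it is classical RSK, Theorem \ref{t:rsk}) and its matrix entries are Laurent polynomials whose relevant minors do not identically vanish.

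The main obstacle I expect is the careful bookkeeping in the inductive step: showing that the q-insertion rules --- adding a box in all possible ways, then bumping along all legal chains --- reproduce \emph{exactly} the branching multiplicities coming from the iterated quantum Pieri rule, with the insertion path matching the \gz flag data on both the $V^\lambda$ side (via the bumped letters) and the $R^\lambda$ side (via the recording tableau), and no path being over- or under-counted. In particular one must confirm that when two distinct insertion paths yield the same pair $(P, Q)$ this never actually happens once the recording tableau is attached --- i.e. the recording tableau is a complete invariant of the path --- and that the axial-distance expressions for the Pieri coefficients are consistent with the additivity $a_{ij} + a_{jk} = a_{ik}$ so that the product over a path is well-defined independent of how one factors it. Everything else (the $n=1$ base case, nonvanishing of quantum integers, specialization to $q=1$) is routine.
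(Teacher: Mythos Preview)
Your approach is essentially the one the paper takes: the theorem is stated in Chapter~\ref{ch:comb} with a citation to \cite{Da} rather than a self-contained proof, and the machinery you describe---quantum Schur--Weyl duality (Theorem~\ref{schurweyl}) together with cascaded Pieri transforms whose matrix entries are the Wigner coefficients of Theorems~\ref{th:typeonewigner}--\ref{th:wigner}---is exactly what the paper develops in Chapters~\ref{ch:qreps} and~\ref{ch:qschur} to realize the isomorphism concretely. Two small cautions: first, the paper's explicit Wigner coefficients involve square roots of quantum integers and so do not literally lie in $\C[q,q^{-1}]$ (one needs a rescaling, or to work over an extension, to get Laurent-polynomial entries); second, your final paragraph frames the issue as a combinatorial bijectivity check (``no path over- or under-counted''), but the isomorphism is established representation-theoretically---each Pieri step is a $U_q(d)$-module isomorphism---so invertibility is automatic and does not rest on a path-counting argument.
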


\newchapter{quantumgroupreptheory}{The representation theories of the quantum
group $U_q(d)$ and the Hecke algebra $H_q(n)$}{The representation theories
of the quantum group $U_q(d)$ and the Hecke algebra $H_q(n)$}\label{ch:qreps}

\section{Introduction}

In this chapter we describe the representation theories of our quantum
algebras of interest. In Section \ref{s:qrep} we present the respresentation
theory of the quantum group seen in Section \ref{s:qgroup}. In Section
\ref{s:heckealgebra} we present the representation theory of the Hecke
algebra $H_q(n)$, which is a $q$-deformation of the symmetric group algebra
$\C[S(n)]$. These Hecke algebras are not themselves Hopf algebras, but for
almost all choices of $q$ their representations are isomorphic to those of
$\C[S(n)]$, which is a Hopf algebra. Finally in Section \ref{s:schurweyl}
we describe the correspondence known as Schur-Weyl duality between
representations of the quantum group and Hecke algebra.

\section{The representation theory of $U_q(d)$}\label{s:qrep}
We defined the quantum group $U_q(d)$ in Section \ref{s:qgroup} as
an interesting example of a noncommutative and noncocommutative Hopf
algebra. In this section we describe its representation theory.

Recall that we restrict the values of $q$ to real and positive in order to
make use of the star and bar structures available in this case. For these
values of $q$, the irreducible representations of $U_q(d)$ are isomorphic to
those of the unitary group $U(d)$. (This is true for other values of $q$ as
well, namely those values of $q$ which are not roots of unity or zero.) The
irreps of $U(d)$, and hence the irreps of $U_q(d)$, are in bijection with
partitions $\lambda$ whose length is bounded by $d$, written $\ell(\lambda)
\le d$. For the representation indexed by $\lambda$ we write $V^\lambda$.

Restricting to all the generators except $e_{d-1}$, $f_{d-1}$, and
$q^{h_d}$, we realize a copy of $U_q(d-1)$ inside $U_q(d)$. The branching
rule associated to this pairing is given in the following theorem.

\begin{theorem}\label{th:branch}
The algebras $U_q(d)$ and $U_q(d-1)$ form a Gelfand pair. In particular,
if $V^\lambda$ is an irrep of $U_q(d)$, then
\begin{align*} \Res^n_{n-1} V^\lambda = \bigoplus_{\substack{\lambda \setminus \mu\,\text{horizontal strip}\\ \ell(\mu) \leq d-1} } V^\mu. \end{align*}
\end{theorem}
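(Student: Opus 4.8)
The plan is to reduce the statement about the quantum group $U_q(d)$ to the corresponding classical statement for the unitary group $U(d)$, which is the classical Gelfand--Tsetlin branching rule. Since $q$ is real and positive, the irreps $V^\lambda$ of $U_q(d)$ are in bijection with those of $U(d)$ with the same weight-space structure (this is the standard fact that for $q$ not a root of unity the representation categories are equivalent, and the weight multiplicities are $q$-independent). So first I would set up the character-theoretic or weight-space picture: the formal character of $V^\lambda$ as a $U_q(d)$-module is the Schur polynomial $s_\lambda(x_1,\dots,x_d)$, exactly as in the classical case, because the $q^{h_i}$ act diagonally on a weight basis with the same weights.

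The key step is then to observe that restricting to the subalgebra $U_q(d-1)$ (generated by all the $e_i,f_i$ except $e_{d-1},f_{d-1}$, together with the $q^{h_i}$) amounts, on characters, to setting $x_d = 1$ and reading off the $U_q(d-1)$-character. Classically, the branching $s_\lambda(x_1,\dots,x_{d-1},1) = \sum_\mu s_\mu(x_1,\dots,x_{d-1})$ where the sum is over $\mu$ with $\lambda/\mu$ a horizontal strip and $\ell(\mu)\le d-1$; this is precisely the combinatorial content of the Pieri rule, or equivalently it follows from the fact that an $\SSYT$ of shape $\lambda$ in $\{1,\dots,d\}$ restricts via $t \mapsto t^{(d-1)}$ to an $\SSYT$ of shape $\mu$ in $\{1,\dots,d-1\}$ with $\lambda/\mu = \lambda^{(d)}(t)$ a horizontal strip, as noted in the combinatorics section. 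Since characters determine the multiplicities of irreps in a semisimple category, this yields the claimed decomposition of $\Res^n_{n-1} V^\lambda$ as a $U_q(d-1)$-module, and in particular shows every multiplicity is $0$ or $1$, so the pair is a Gelfand pair.

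The one genuine subtlety — and I expect this to be the main obstacle — is verifying that the copy of $U_q(d-1)$ sitting inside $U_q(d)$ via the stated generators really is isomorphic to $U_q(d-1)$ as an algebra (so that its irreps are the $V^\mu$ with $\ell(\mu)\le d-1$), and that the weight-space bookkeeping matches up correctly: one must check that the relations among $e_1,\dots,e_{d-2},f_1,\dots,f_{d-2}$ and the relevant $q^{h_i/2}$ inside $U_q(d)$ are exactly the defining $q$-Serre relations of $U_q(d-1)$, with the Cartan part identified correctly (here the formal exponentials and the $k_i = h_i - h_{i+1}$ conventions from Section~\ref{s:qgroup} need to be handled with care). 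This is a routine but not entirely trivial check. Once that embedding is pinned down, the character computation above finishes the proof, and multiplicity-freeness is immediate since each $\mu$ appears with coefficient exactly $1$ in the Pieri-type expansion.
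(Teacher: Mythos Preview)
The paper does not actually prove this theorem; it is stated without proof as a known result. The surrounding text only specifies how the copy of $U_q(d-1)$ sits inside $U_q(d)$ (by omitting the generators $e_{d-1}$, $f_{d-1}$, and $q^{h_d}$) and then immediately uses the branching rule to index the GTJ basis by semistandard tableaux. No argument is supplied and no reference is attached to the theorem itself, though the explicit formulas in that section are attributed to \cite{Da}.

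Your character-theoretic reduction to the classical Gelfand--Tsetlin branching is a standard and correct way to fill this gap. The key input---that for $q$ real and positive (more generally, not a root of unity) the formal characters of the $U_q(d)$-irreps $V^\lambda$ coincide with the Schur polynomials $s_\lambda(x_1,\dots,x_d)$---is exactly what makes the multiplicity computation $q$-independent, and the horizontal-strip condition then falls out of the classical identity $s_\lambda(x_1,\dots,x_{d-1},1) = \sum_\mu s_\mu(x_1,\dots,x_{d-1})$. The subtlety you flag about verifying that the subalgebra generated by $e_1,\dots,e_{d-2}$, $f_1,\dots,f_{d-2}$, and the appropriate Cartan elements really is $U_q(d-1)$ is genuine but, as you note, routine: the $q$-Serre relations among those generators are a subset of the $U_q(d)$ relations and are precisely the defining relations of $U_q(d-1)$. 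So there is nothing in the paper to compare your argument against, but your proposal would supply a valid proof.
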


The branching rule (\ref{th:branch}) implies that a GTT basis for the irrep
$V^\lambda$ can be written $\ket{v_t}$ with $t \in SSYT(\lambda)$. Then
$V^\lambda$ is the span of the elements $\ket{v_t}$ so that $\langle v_t
\mid v_s \rangle = \delta_{t,s}$.

The specific GTT basis we use is called Gelfand-Tsetlin-Jimbo (GTJ). The
formulas described in the rest of this section can be found in \cite{Da}. The
action of the generator $q^{h_i}$ on the GTJ basis is the easiest to
describe and is given in the following theorem.

\begin{theorem}\label{th:gtj1}
Let $\ket{v_t}$ be a GTJ basis element indexed by the SSYT $t$. Then,
\begin{align*} q^{h_i/2} \ket{v_t} = q^{x_i(t)/2} \ket{v_t}, \end{align*}
where $x_i(t)$ counts the number of $i$'s in $t$.
\end{theorem}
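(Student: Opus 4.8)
The plan is to recognize the claim as the statement that the GTJ basis is a \emph{weight basis} for the commuting grouplike generators $q^{h_i/2}$, and then to pin down the weights by induction on $d$ using the branching rule \thm{th:branch}. Two structural observations drive the argument. First, for $i \le d-1$ the generator $q^{h_i/2}$ already lies in the subalgebra $U_q(d-1) \subseteq U_q(d)$, which by definition omits only $e_{d-1}$, $f_{d-1}$, and $q^{h_d}$. Second, the element $z = q^{(h_1 + \cdots + h_d)/2}$ is central in $U_q(d)$: it is grouplike, it obviously commutes with every $q^{h_j/2}$, and it commutes with each $e_j$ and $f_j$ because in $q^{h_j/2} e_j = q^{1/2} e_j q^{h_j/2}$ and $q^{h_{j+1}/2} e_j = q^{-1/2} e_j q^{h_{j+1}/2}$ the two half-powers cancel in the product $q^{(h_j + h_{j+1})/2}$ (and symmetrically for the $f_j$). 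Being central, $z$ acts on the irrep $V^\lambda$ by a single scalar, and testing on a highest-weight vector of $V^\lambda$ (on which $q^{h_i/2}$ acts by $q^{\lambda_i/2}$, since $V^\lambda$ has highest weight $\lambda$) identifies that scalar as $q^{|\lambda|/2}$ with $|\lambda| = \lambda_1 + \cdots + \lambda_d$.

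With these in hand I would induct on $d$. The base case $d = 1$ is immediate: $V^{(\lambda_1)}$ is one-dimensional, $t$ is the single row of $\lambda_1$ ones, $x_1(t) = \lambda_1$, and $q^{h_1/2}$ acts by $q^{\lambda_1/2}$. For the inductive step, recall that by the recursive construction of the GTJ basis, a vector $\ket{v_t}$ of $V^\lambda$ lies in the summand $V^\mu \subseteq \Res^d_{d-1} V^\lambda$ with $\mu = \sh(t^{(d-1)})$ (a horizontal strip below $\lambda$, per \thm{th:branch}), and within $V^\mu$ it is exactly the GTJ vector labelled by $t^{(d-1)}$. Then, for each $i \le d-1$, the first observation together with the inductive hypothesis applied in $U_q(d-1)$ gives $q^{h_i/2}\ket{v_t} = q^{x_i(t^{(d-1)})/2}\ket{v_t} = q^{x_i(t)/2}\ket{v_t}$, the last step because deleting the boxes of $t$ with entry $> d-1$ does not change the number of $i$'s when $i \le d-1$. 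Finally, since the entries of $t$ are at most $d$ we have $x_1(t) + \cdots + x_d(t) = |\lambda|$, so writing $q^{h_d/2} = z \cdot \prod_{i=1}^{d-1} q^{-h_i/2}$ and using the centrality computation gives $q^{h_d/2}\ket{v_t} = q^{(|\lambda| - x_1(t) - \cdots - x_{d-1}(t))/2}\ket{v_t} = q^{x_d(t)/2}\ket{v_t}$. This closes the induction, and as a byproduct confirms that each $\ket{v_t}$ is genuinely a simultaneous eigenvector of all the $q^{h_i/2}$.

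I expect the only real subtlety to be the $i = d$ coordinate, since it is the one Cartan direction not seen by the subalgebra $U_q(d-1)$; the centrality of $z$ is precisely the device that lets the induction close there. The remaining ingredients — that $q^{h_i/2} \in U_q(d-1)$ for $i < d$, and that $\ket{v_t}$ sits inside the summand $V^\mu$ as the GTJ vector of $t^{(d-1)}$ — are a matter of unwinding the definition of $U_q(d-1)$ and of the GTJ labelling, rather than genuine computation.
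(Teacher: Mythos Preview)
Your argument is correct. The paper does not actually supply a proof of this theorem; it simply records the formula with a reference to \cite{Da}, so there is no paper proof to compare against at the level of method.

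That said, your route is the natural one and would serve well as the missing justification: the branching rule \thm{th:branch} together with the recursive definition of the GTJ basis immediately handles $q^{h_i/2}$ for $i\le d-1$ by induction, and the centrality of $z=q^{(h_1+\cdots+h_d)/2}$ is exactly the right device to recover the $i=d$ eigenvalue from the rest. The one point worth making explicit (you gesture at it, but it deserves a sentence) is that the identification of the scalar by which $z$ acts on $V^\lambda$ uses the \emph{definition} of $V^\lambda$ via its highest weight, which is independent of the GTJ construction and hence not circular. With that caveat noted, the proof is complete.
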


The generator $f_i$ acts on $\ket{v_t}$ by turning an instance of $i$ in the
tableau $t$ into an $i+1$ (in all possible ways, i.e. in superposition). In
other words, letting $\ket{v_{t_k}}$ be the vector indexed by tableau $t_k$
where the last $i$ in row $k$ is changed into an $i+1$ but is otherwise
identical to $t$, or zero if this is not possible, then
\begin{align*}f_i \ket{v_t} = \sum_k \langle v_{t_k} \mid f_i \mid v_t\rangle
\ket{v_{t_k}}\end{align*}
for some choice of coefficients $\langle v_{t_k} \mid f_i \mid v_t\rangle$,
which we call the \emph{GTJ statistic}. The action of $e_i$ on $\ket{v_t}$
is also defined with GTJ statistics using the relation $e_i^* = f_i$. The
GTJ coefficients are complicated notationally, but in principle derive
from simple combinatorial properties, in particular axial distances,
of the SSYT $t$. Recall we defined the axial distance $a_{ij}$ of $t$,
and associated horizontal strips $\lambda^{(i)}(t)$ in section \ref{s:comb}.

\begin{theorem}\label{th:gtj2}
Let $\ket{v_t}$ be a GTT basis element of $V^\lambda$ indexed by the SSYT
$t$. Then the GTJ statistic is given by
\begin{align}\label{actionformula}
\langle v_{t_k} \mid f_i \mid v_t\rangle = \sqrt{[\lambda^{(i)}_k] [\lambda^{(i+1)}_k + 1]
\prod_{\substack{{j=1}\\ j \ne k}}^{i+1} \frac{[a_{jk} - \lambda^{(i)}_k]}{[a_{jk}]}
\frac{[a_{jk}+\lambda^{(i+1)}_k + 1]}{[a_{jk}+1]}} \end{align}
\end{theorem}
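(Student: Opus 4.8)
The plan is to derive the GTJ statistic formula \eqref{actionformula} by combining the abstract structure of a Gelfand-Tsetlin basis with the representation theory of the ``small'' quantum subgroup generated by $e_i$, $f_i$, $q^{\pm h_i/2}$, $q^{\pm h_{i+1}/2}$, which is a copy of $U_q(\gl(2))$ sitting inside $U_q(d)$. The point is that $f_i$ commutes with the subalgebra $U_q(d)$ restricted to indices avoiding $i$ and $i+1$, so by the Gelfand pair branching rule (\thm{th:branch}) the action of $f_i$ on the GTJ basis is determined (up to the scalars we must compute) by how it moves the horizontal strips $\lambda^{(i)}(t)$ and $\lambda^{(i+1)}(t)$, keeping all other $\lambda^{(j)}(t)$ fixed. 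This reduces the computation to an essentially local, two-row problem: understanding $f_i$ as an intertwiner on the $U_q(\gl(2))$-isotypic pieces cut out by fixing the shapes $\lambda^{(i-1)}(t)$ and $\lambda^{(i+1)}(t)$ (equivalently, fixing the ambient partitions $\mu = \sh(t^{(i-1)})$ and $\nu = \sh(t^{(i+1)})$ with $\nu \setminus \mu$ a fixed skew shape, in fact a disjoint union of intervals on two adjacent diagonals).

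First I would set up the local model: fix all data except the intermediate partition $\kappa = \sh(t^{(i)})$, so $\mu \subseteq \kappa \subseteq \nu$ with $\kappa \setminus \mu$ and $\nu \setminus \kappa$ horizontal strips. The vectors $\ket{v_t}$ with these constraints span a $U_q(\gl(2))$-representation whose highest-weight decomposition is governed by the classical $\gl(2)$ (equivalently $\sl(2)$) branching combinatorics; each allowed position of $\kappa$ between $\mu$ and $\nu$ corresponds to choosing, independently in each connected band of the skew shape $\nu\setminus\mu$, how many boxes lie in row ``$i$'' versus row ``$i+1$''. Then $f_i$ acts within each such $\sl(2)$-string by the standard lowering operator, whose matrix coefficients between normalized weight vectors are the familiar $q$-deformed square roots $\sqrt{[m][\text{something}+1]}$. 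The parameters $\lambda^{(i)}_k$ and $\lambda^{(i+1)}_k+1$ appearing under the radical in \eqref{actionformula} are exactly the $\sl(2)$-string data (the number of available boxes in row $k$ of the strip, and one more than the number already in row $k$ below), so the ``diagonal'' factor $\sqrt{[\lambda^{(i)}_k][\lambda^{(i+1)}_k+1]}$ is the bare $\sl(2)$ lowering coefficient.

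The remaining product over $j \ne k$ is the genuinely new, ``off-diagonal'' content, and computing it is where the real work lies. Here I would argue that when the strip has several bands, the naive tensor-product $\sl(2)$-action must be twisted by Clebsch-Gordan-type recoupling coefficients, and the axial distances $a_{jk}$ record the relative positions of the bands along the diagonal. Concretely, I expect to obtain the factors $\frac{[a_{jk}-\lambda^{(i)}_k]}{[a_{jk}]}\cdot\frac{[a_{jk}+\lambda^{(i+1)}_k+1]}{[a_{jk}+1]}$ by diagonalizing the restriction to a two-band (hence $U_q(\gl(2))^{\otimes 2}$-type) subproblem and extracting the normalized intertwiner, then bootstrapping to $i+1$ bands by the additivity $a_{jk}+a_{k\ell}=a_{j\ell}$ noted in Section \ref{s:comb}, which makes the telescoping product consistent. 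The hardest step is precisely this identification of the recoupling coefficients with the $q$-ratios of quantum integers in the axial distances: it requires either a direct induction on the number of bands using the Serre relations and the coproduct $\Delta(f_i) = f_i\otimes q^{-k_i/2} + q^{k_i/2}\otimes f_i$ to track how the $q^{\pm k_i/2}$ weight factors accumulate across bands, or an appeal to the explicit Gelfand-Tsetlin-Jimbo formulas in \cite{Da}, checking that the combinatorial bookkeeping of horizontal strips and axial distances matches their indexing. I would also verify the two sanity checks that pin down normalization: at $q=1$ the formula must reduce to the classical Gelfand-Tsetlin matrix coefficients for $U(d)$, and the $e_i$ action obtained via $e_i^* = f_i$ must be consistent with \thm{th:gtj1} and the relation $[e_i,f_i] = [h_i]$, which fixes any residual sign or phase ambiguity in the square root.
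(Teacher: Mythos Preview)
The paper does not prove this theorem at all: it is stated as a known result, with the sentence ``The formulas described in the rest of this section can be found in \cite{Da}'' immediately preceding Theorems~\ref{th:gtj1} and~\ref{th:gtj2}. So there is no argument in the paper for you to compare against; the author simply imports the Date--Jimbo--Miwa formula from the literature.

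Your proposal is therefore more ambitious than the paper's treatment. As a sketch it has the right architecture: the observation that $f_i$ lies in $U_q(\gl(i+1))$ and commutes with $U_q(\gl(i-1))$, hence preserves all rows $\sh(t^{(j)})$ of the Gelfand--Tsetlin pattern except the $i$th, is exactly the mechanism that localizes the computation. Where your outline is thin is the step you yourself flag as hardest. Fixing $\mu=\sh(t^{(i-1)})$ and $\nu=\sh(t^{(i+1)})$ does \emph{not} present the intermediate space as a simple tensor product of independent $\sl(2)$-strings, one per band; the bands interact through the interlacing constraints on $\kappa=\sh(t^{(i)})$, and the off-diagonal factors in \eqref{actionformula} are not Clebsch--Gordan recoupling coefficients in the usual sense but rather the normalization constants that arise when one orthonormalizes the resulting (non-orthogonal) weight basis. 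The standard derivations (classically Zhelobenko, quantumly \cite{Da} or Nazarov--Tarasov) proceed by an explicit induction on $d$ using the branching \thm{th:branch} and a careful computation of the $f_{d-1}$-matrix in the highest-weight $U_q(\gl(d-1))$-component, rather than by a two-band bootstrap. Your alternative of ``appeal to the explicit formulas in \cite{Da}'' is of course exactly what the paper does.
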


\section{The representation theory of the Hecke algebra $H_q(n)$}\label{s:heckealgebra}
The Hecke algebra $H_q(n)$ is a certain $q$-deformation of the group
algebra $\C[S(n)]$.  (More precisely, we consider a Iwahori-Hecke algebra
of type A.  There are also other kinds of Hecke algebras.) Note that the
Hecke algebra $H_q(n)$ for $q \ne 1$ is \emph{not} a Hopf algebra, so the
results in this section are proved independently of the theorems for Hopf
algebra representation theory.

The Hecke algebra $H_q(n)$ with complex parameter $q$ has generators $\{T_1,
\dots, T_{n-1}\}$ with relations
\begin{align*}
& T_iT_j = T_jT_i & \text{for $|i -j| > 1$}\\
& T_iT_{i+1}T_i = T_{i+1}T_iT_{i+1}  & \\
& (T_i-q^{-1})(T_i+q) = 0. &
\end{align*}
The first two relations are known as the \emph{braid} relations and the
third is the \emph{quadratic} relation. (We use the generators used by
Jimbo \cite{J}; the generators due to Iwahori are slightly different.)
When $q = 1$, the third relation simplifies to $T_i^2 = 1$, so that in this
case $T_i$ represents the transposition $s_i = (i,i+1)$ in the symmetric
group. In other words, $H_1(n) = \C[S(n)]$.

As with the quantum group $U_q(d)$, we restrict to the case when $q$
is real and positive. In this case, $H_q(n)$ has both a *-structure and
a bar structure, defined by
$$T_i^* = T_i \qquad \bar{T_i} = T_i.$$

For these values of $q$, the irreducible representations of $H_q(n)$
are isomorphic to those of $\C[S(n)]$. (This is true for other values
of $q$ as well, namely those values of $q$ which are not roots of unity
or zero.) The irreps of $\C[S(n)]$, and hence the irreps of $H_q(n)$
are in bijection with the conjugacy classes of $S(n)$, so are indexed by
partitions of $n$. For the representation indexed by $\lambda \vdash n$
we write $R^\lambda$. Importantly, for these values of $q$, $H_q(n)$
representations remain semisimple.

It is known that the dimension of $R^\lambda$ equals the number of
standard Young tableaux of shape $\lambda$ (given by, for example, the
hook length formula). Therefore, there is a basis of $R^\lambda$ indexed
by $\text{SYT}(\lambda)$. We describe below how GTT bases are naturally
described by $\text{SYT}(\lambda)$.

The Hecke algebra $H_q(n)$ contains many copies of $H_q(n-1)$; we consider
the one obtained by restricting to the generators $T_1,\dots, T_{n-2}$. Thus,
we can describe the restriction of $R^\lambda$ to $H_q(n)$. The corresponding
branching rule is multiplicity-free and has a nice combinatorial description
in terms of the covering relation of Young diagrams.

\begin{theorem}\label{th:heckebranch} The algebras $H_q(n)$ and $H_q(n-1)$
form a Gelfand pair. In particular, if $R^\lambda$ is an irrep of $H_q(n)$,
then
\begin{align}\label{branch} \Res^n_{n-1} R^\lambda = \bigoplus_{\text{$\lambda$ covers $\mu$}} R^\mu. \end{align}
\end{theorem}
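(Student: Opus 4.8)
=== BEGIN PROOF PROPOSAL ===

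\textbf{Overall strategy.} The plan is to prove the branching rule for $H_q(n) \supseteq H_q(n-1)$ by a deformation argument, leveraging the fact that for $q$ real and positive (more generally $q$ not a root of unity or zero), the representation category of $H_q(n)$ is equivalent to that of $\C[S(n)]$, and that the inclusion $H_q(n-1) \subseteq H_q(n)$ is the $q$-deformation of $\C[S(n-1)] \subseteq \C[S(n)]$. Since the classical branching rule for symmetric groups is exactly $\Res^{S(n)}_{S(n-1)} R^\lambda = \bigoplus_{\lambda \text{ covers } \mu} R^\mu$ (each partition of $n$ restricts to the sum of partitions of $n-1$ obtained by removing a single box, each with multiplicity one), it suffices to transport this statement across the deformation. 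First I would fix the labeling: the irreps $R^\lambda$ of $H_q(n)$ are indexed by $\lambda \vdash n$ in such a way that as $q \to 1$ they specialize to the corresponding $\C[S(n)]$ irreps, and similarly for $H_q(n-1)$ inside $H_q(n)$.

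\textbf{Key steps.} (1) Observe that the decomposition multiplicities $[\Res^n_{n-1} R^\lambda : R^\mu]$ are computed by characters, and Hecke algebra characters vary continuously (in fact polynomially in $q$, $q^{-1}$) while taking integer values; hence these multiplicities are locally constant in $q$ on the region where $H_q(n)$ and $H_q(n-1)$ are semisimple, so they agree with their values at $q=1$. (2) At $q=1$ invoke the classical branching rule for the symmetric group. (3) Identify the combinatorics: $\mu \vdash n-1$ appears in the classical restriction of $R^\lambda$ exactly when $\mu \subseteq \lambda$ and $|\lambda \setminus \mu| = 1$, which is precisely the statement that $\lambda$ covers $\mu$ in the sense defined in Section~\ref{s:comb}. (4) Conclude that each multiplicity is $0$ or $1$, so the branching is multiplicity-free for every $\lambda$, which by the definition in Section~\ref{s:hreps} says precisely that $H_q(n-1) \subseteq H_q(n)$ is a Gelfand pair. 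As an alternative to the character argument in step (1), one can instead cite the explicit construction of the irreducible $H_q(n)$-modules on $\SYT(\lambda)$ via Young's seminormal (Hoefsmit) form, in which the generators $T_1, \dots, T_{n-2}$ act by block-diagonal matrices according to the position of the box labeled $n$; removing that box realizes the $H_q(n-1)$-submodule structure directly and exhibits the summands $R^\mu$ with $\lambda$ covering $\mu$.

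\textbf{Main obstacle.} The technical heart is step (1): justifying that semisimplicity persists and that decomposition numbers are constant along the family. One must be careful that the generic point $q=1$ lies in the same semisimple ``chamber'' as the real positive $q$ of interest — i.e. that there is a connected locus in the $q$-line avoiding roots of unity on which all the relevant algebras stay semisimple and their Grothendieck groups are rigid. The cleanest route is the deformation/rigidity principle for semisimple algebras: a flat family of finite-dimensional algebras that is semisimple at a point stays semisimple nearby with the same number of irreps of the same dimensions, and an inclusion of such families has locally constant branching multiplicities. Since the paper has already asserted (in this same section and in Section~\ref{s:qrep}) that the irreps of $H_q(n)$ are isomorphic to those of $\C[S(n)]$ for $q$ not a root of unity or zero, I would treat that as the available input and simply note that the same deformation reasoning applied to the pair $(H_q(n), H_q(n-1))$ yields the branching rule; the only remaining work is the elementary combinatorial translation in step (3), which is immediate from the definitions.

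=== END PROOF PROPOSAL ===
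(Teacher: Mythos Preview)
The paper does not actually supply a proof of Theorem~\ref{th:heckebranch}; it is stated as a known result and immediately used to describe the GTT line basis indexed by $\SYT(\lambda)$. So there is no ``paper's own proof'' to compare against. Your proposal is a correct and standard way to justify the statement: both the deformation/rigidity argument and the alternative via Hoefsmit's seminormal form (which the paper in fact records just after the theorem as Formula~\eqref{gtformula}) are valid routes, and either would suffice as a proof here.
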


Theorem \ref{th:heckebranch} implies a GTT line basis with elements indexed
by sequences of partitions pairwise differing by a single box, i.e. standard
Young tableaux. The vector basis of $R^\lambda$ we use in this thesis that is
a refinement of the GTT line basis defined by Theorem \ref{th:heckebranch}
we call the Young-Yamanouchi-Hoefsmit (YYH) basis. We write YYH basis
elements as $\ket{r_t}$ where $t \in \text{SYT}(\lambda)$. Then $R^\lambda$
is the span of the elements $\ket{r_t}$ with $\langle r_t \mid r_s \rangle
= \delta_{t,s}$.

Define the following action of $S(n)$ on the basis element $r_t$:
\begin{itemize}
\item If $i$ and $i+1$ are in the same row or column of $t$ then $r_{s_i \cdot t}=0$.
\item Otherwise, $r_{s_i \cdot t}= r_{t'}$ where $t'$ is the standard
tableau obtained by switching $i$ and $i+1$ in $t$.
\end{itemize}
The action of $H_q(n)$ on the YYH basis defined below in Formula
\ref{gtformula} is a normalized version of that given in \cite{R}, building
on that found in \cite{h}.
\begin{theorem}  Let $a$ be the axial distance in $t$ from the box containing
$i$ to the box containing $i+1$. The action of $H_q(n)$ on $R^\lambda$
with the YYH basis is defined by
\begin{align}\label{gtformula}
T_i \ket{r_t} = \frac{q^{-a}}{[a]} \ket{r_t} + \sqrt{1 - \frac{1}{[a]^2}} \ket{r_{s_i\cdot t}}
\end{align}
\end{theorem}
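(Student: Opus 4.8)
The plan is to verify that the formula \eqref{gtformula} genuinely defines a representation of $H_q(n)$ on $R^\lambda$, namely that the operators $T_i$ as given satisfy the three defining relations of $H_q(n)$: the far-commutation relation $T_iT_j = T_jT_i$ for $|i-j|>1$, the braid relation $T_iT_{i+1}T_i = T_{i+1}T_iT_{i+1}$, and the quadratic relation $(T_i - q^{-1})(T_i + q) = 0$. Since everything is expressed in the explicit SYT basis $\ket{r_t}$, each relation becomes a finite computation that can be checked block by block, where the relevant blocks are the (at most) two-dimensional subspaces spanned by $\ket{r_t}$ and $\ket{r_{s_i\cdot t}}$ (or, for the braid relation, the subspaces spanned by the SYT obtained by permuting the positions of $i-1,i,i+1$, which have dimension at most $3!=6$ but in practice at most $2$ or $3$). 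I would organize the proof as: first fix notation for the axial distance $a = a_i(t)$ from the box of $i$ to the box of $i+1$ in $t$, noting the degenerate case (same row or column) where $a = \pm 1$, $[a] = \pm 1$, $\sqrt{1 - 1/[a]^2} = 0$, and $T_i$ acts as the scalar $q^{-a}/[a] = \pm q^{-a} = q^{-1}$ or $-q$; then handle the three relations in increasing order of difficulty.

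The far-commutation relation is essentially immediate: when $|i-j|>1$, the moves $s_i$ and $s_j$ act on disjoint pairs of entries, the axial distances $a_i(t)$ and $a_j(t)$ are unaffected by each other's action, and the $2\times 2$ (or $4\times 4$) blocks factor, so $T_i$ and $T_j$ commute coefficient-wise. The quadratic relation is a genuinely $2\times 2$ (or $1\times 1$) check: restricting to $\mathrm{span}\{\ket{r_t}, \ket{r_{s_i\cdot t}}\}$, one writes $T_i$ as the matrix $\begin{pmatrix} q^{-a}/[a] & \sqrt{1 - 1/[a]^2} \\ \sqrt{1 - 1/[a]^2} & q^{a}/[a] \end{pmatrix}$ — here one must observe that the off-diagonal entry is symmetric and that the second diagonal entry, coming from $s_i\cdot t$ whose axial distance is $-a$, is $q^{-(-a)}/[-a] = q^a/[a]$ — and then verify $\det(T_i) = -1$ and $\mathrm{tr}(T_i) = q^{-1} - q$, which together with Cayley–Hamilton give $(T_i - q^{-1})(T_i+q)=0$. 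The trace computation uses $q^{-a} + q^a = (q - q^{-1})\,\text{[something]}$... more directly, $\tfrac{q^{-a} + q^a}{[a]} = \tfrac{(q-q^{-1})(q^a + q^{-a})}{q^a - q^{-a}}$, and a short algebraic identity (or: $[a]$ times the trace equals $q^{-a}+q^a$ and one checks $q^{-a}+q^a - (q^{-1}-q)[a] = q^{-a}+q^a - (q^{-1}-q)\frac{q^a-q^{-a}}{q-q^{-1}} = q^{-a}+q^a + (q^a - q^{-a}) $ — I'd get the signs straight in the writeup) yields the claim; the determinant is $\frac{q^{-a}q^a}{[a]^2} - (1 - \frac{1}{[a]^2}) = \frac{1}{[a]^2} - 1 + \frac{1}{[a]^2}$, hmm, that needs $= -1$, i.e. $\frac{1}{[a]^2} = 0$, which is wrong — so in fact I expect the correct reading is $\det = \frac{1}{[a]^2} - \left(1 - \frac{1}{[a]^2}\right) = \frac{2}{[a]^2} - 1$, and to fix this one rechecks that the intended off-diagonal entries multiply to $1 - \frac{1}{[a]^2}$ with the correct sign and the first diagonal entry squared is actually what is needed; the point is that these are finite identities in $q^{\pm a}$ that must come out right, and pinning down the exact normalization is part of the task.

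The braid relation is the main obstacle. One restricts to the subspace spanned by the SYT that agree with $t$ outside the boxes occupied by $i-1, i, i+1$; by the hook-length/covering structure there are at most two or three such tableaux (the positions of $i-1,i,i+1$ form a skew shape of three boxes, and standardness constrains their relative order), so the relation reduces to an identity between two $\le 3\times 3$ matrices whose entries are built from the three axial distances $a = a_{i-1}(t)$, $b = a_i(t)$, and $a + b = a_{i-1,i+1}(t)$ (using the additivity $a_{jk} + a_{kl} = a_{jl}$ recorded in Section~\ref{s:comb}). The strategy is: enumerate the combinatorial configurations of three boxes (they sit on at most three diagonals; the ``generic'' case has all three on distinct diagonals and gives a genuine $2$-dimensional orbit, the degenerate cases collapse to scalars), and in each case verify $T_iT_{i+1}T_i = T_{i+1}T_iT_{i+1}$ by direct multiplication, using the quantum-integer identity $[a][b] - [a+b] = \cdots$ (equivalently $[a+1][b-1] = [a][b] - [a-b+1]$-type relations, which follow from $[m][n] = \sum$-telescoping and the definition $[n] = \tfrac{q^n - q^{-n}}{q-q^{-1}}$). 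This is precisely the classical Young–Yamanouchi/Hoefsmit computation, $q$-deformed; I would cite that the identities needed are the standard ones that make Hoefsmit's matrices a representation (as in \cite{h}, \cite{R}), and emphasize that our Formula~\ref{gtformula} is a diagonal rescaling $T_i \mapsto D T_i D^{-1}$ of Hoefsmit's non-symmetric matrices chosen to make them symmetric (hence unitary when $q>0$, since then the entries are real and $T_i^* = T_i$ together with $T_i^2 = 1 + (q^{-1}-q)T_i$ forces $T_i$ to be self-adjoint with the right spectrum) — so the braid relation for our $T_i$ is equivalent to the braid relation for Hoefsmit's, which is known. The cleanest writeup is therefore: reduce to Hoefsmit's representation by the explicit conjugation, check the conjugation is well-defined on each $2$- or $3$-dimensional block (the scaling factors cancel in the triple products), and invoke the known relations; the fallback, if one wants a self-contained argument, is the brute-force $3\times 3$ verification using quantum-integer identities, which is routine but lengthy.
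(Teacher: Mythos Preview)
The paper does not actually prove this theorem: it is stated as a known result, with the remark just before it that Formula~\eqref{gtformula} ``is a normalized version of that given in \cite{R}, building on that found in \cite{h}.'' Your ``cleanest writeup'' suggestion---reduce to Hoefsmit's seminormal matrices by an explicit diagonal conjugation and invoke the known relations from \cite{h}, \cite{R}---is exactly the paper's stance, so on that level your proposal and the paper agree.

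That said, your direct verification contains a concrete slip that explains the confusion you flagged. In computing the second diagonal entry of the $2\times 2$ block you wrote $q^{-(-a)}/[-a] = q^{a}/[a]$, but $[-a] = -[a]$, so the entry is $-q^{a}/[a]$. With this sign the block is
\[
T_i\big|_{\{r_t,\,r_{s_i\cdot t}\}} \;=\;
\begin{pmatrix}
\dfrac{q^{-a}}{[a]} & \sqrt{1-\dfrac{1}{[a]^2}} \\[1.2em]
\sqrt{1-\dfrac{1}{[a]^2}} & -\dfrac{q^{a}}{[a]}
\end{pmatrix},
\]
and now the check is clean: $\mathrm{tr} = (q^{-a}-q^{a})/[a] = -(q-q^{-1}) = q^{-1}-q$ and $\det = -1/[a]^2 - (1 - 1/[a]^2) = -1$, which are precisely the trace and determinant forced by $(T_i-q^{-1})(T_i+q)=0$. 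Two smaller points: for the braid relation $T_iT_{i+1}T_i = T_{i+1}T_iT_{i+1}$ the relevant entries are $i,\,i{+}1,\,i{+}2$ (not $i{-}1,\,i,\,i{+}1$), and the orbit under $\langle s_i,s_{i+1}\rangle$ can have size up to $6$ (e.g.\ when the three boxes lie on an antidiagonal), not just $2$ or $3$; the verification is still a finite matrix identity in the axial distances, just on a possibly larger block.
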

\begin{example}
Consider $R^{(2,1)}$ with basis $\young(12,3),\young(13,2)$. Then,
\begin{align*} T_1  = \left(\begin{array}{cc} q^{-1} & 0\\ 0 & -q \end{array}\right), \qquad\qquad
T_2  = \left(\begin{array}{cc} \frac{q^{-2}}{[2]} & \frac{\sqrt{[3]}}{[2]}\\ \frac{\sqrt{[3]}}{[2]} & \frac{-q^2}{[2]} \end{array}\right).\\
\end{align*}
\end{example}

\section{Schur-Weyl duality}\label{s:schurweyl}
Let $V$ be any finite-dimensional vector space over $\C$. Then $V^{\otimes
n}$ is a representation of $\C[S(n)]$ via the simple permutation action
\begin{align}\label{perm}
\pi(v_1 \dots v_n) = v_{\pi^{-1}(1)} \otimes \dots \otimes v_{\pi^{-1}(n)}.\end{align}
The vector space $V^{\otimes n}$ is also a representation of the Hecke
algebra $H_q(n)$ via a $q$-deformation of the permutation action defined
in \ref{perm}. In particular, the generator $T_i$ acts on a vector in
$V^{\tensor n}$ by the identity on all factors except the $i$th and
$i+1$st ones.  On these two factors, it acts by
\begin{align*}
T \ket{v_i}\ket{v_j} = \begin{cases}
\ket{v_j}\ket{v_i} & \text{if $i < j$},\\
(q^{-1}-q)\ket{v_i}\ket{v_j} + \ket{v_j}\ket{v_i} & \text{if $i > j$},\\
q^{-1}\ket{v_i} \ket{v_j} & \text{if $i = j$}. \end{cases}
\end{align*}
Note that when $q = 1$, we recover the the action defined by \ref{perm}.

In the rest of this thesis we consider the case where $V$ is the
representation $V^\lambda$ of $U_q(d)$  indexed by the single-box partition
$\lambda = (1)$. Using the coproduct struction on $U_q(n)$, we interpret
$V^{\otimes n}$ as a representation of $U_q(n)$ as well as $H_q(n)$. In
order for $V^{\otimes n}$ to be a representation of the algebra $U_q(d)
\otimes H_q(n)$, their respective actions must commute. This is proved
by Jimbo in \cite{J}, as well as the following result which is known as
\emph{quantum Schur-Weyl duality}.

\begin{theorem}\label{schurweyl} The space $V^{\otimes n}$ as a
representation of $U_q(d) \otimes H_q(n)$ decomposes into irreps in the
following formula
\begin{align}V^{\otimes n} \cong \bigoplus_{\lambda \vdash n, \ell(\lambda) \leq d}
V^\lambda \otimes R^\lambda\end{align}
\end{theorem}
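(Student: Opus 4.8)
The plan is to prove the decomposition by the double-commutant (Schur--Weyl) method. Write $\rho$ for the action of $U_q(d)$ on $V^{\tensor n}$ coming from the iterated coproduct, $\sigma$ for the action of $H_q(n)$ described above, and $\cA = \rho(U_q(d))$, $\cB = \sigma(H_q(n))$ for their images in $\End(V^{\tensor n})$. I take as given (Jimbo, \cite{J}) that $\cA$ and $\cB$ commute, so that $V^{\tensor n}$ really is a module over $U_q(d)\tensor H_q(n)$; since $q$ is real and positive, $U_q(d)$ and $H_q(n)$ are semisimple, hence so are $\cA$ and $\cB$. As a semisimple $U_q(d)$-module, $V^{\tensor n} \cong \bigoplus_{\lambda} V^\lambda \tensor M_\lambda$ for some index set of partitions, with $M_\lambda = \mathrm{Hom}_{U_q(d)}(V^\lambda, V^{\tensor n})$; and the double-commutant theorem says each $M_\lambda$ is an \emph{irreducible} module over $\cA' := \End_{U_q(d)}(V^{\tensor n})$. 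Because $\cB \subseteq \cA'$ automatically, it remains to: (i) determine which $\lambda$ occur and each $\dim M_\lambda$; (ii) prove $\cB = \cA'$, i.e.\ that $H_q(n)$ surjects onto $\End_{U_q(d)}(V^{\tensor n})$; and (iii) identify $M_\lambda$ with $R^\lambda$ as an $H_q(n)$-module.

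For (i) I would induct on $n$, the cases $n\le 1$ being immediate. In the inductive step $V^{\tensor n} = V^{\tensor(n-1)}\tensor V$ with the diagonal $U_q(d)$-action, so the inductive decomposition of $V^{\tensor(n-1)}$ combined with the quantum Pieri rule $V^\mu \tensor V \cong \bigoplus_{\lambda\,\gtrdot\,\mu,\ \ell(\lambda)\le d} V^\lambda$ shows that the $U_q(d)$-irreps occurring in $V^{\tensor n}$ are precisely the $V^\lambda$ with $\lambda\vdash n$ and $\ell(\lambda)\le d$, each with multiplicity $|\SYT(\lambda)|$ (a count also visible from the vector-space isomorphism of \thm{t:qins}). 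The same computation shows that, restricted to $H_q(n-1)$, $M_\lambda \cong \bigoplus_{\mu\,\lessdot\,\lambda} R^\mu$, which is exactly $\Res^n_{n-1} R^\lambda$ by \thm{th:heckebranch}. Granting (ii), $M_\lambda$ is then an irreducible $H_q(n)$-module, say $M_\lambda \cong R^\nu$; since $\Res^n_{n-1} R^\nu \cong \Res^n_{n-1} R^\lambda$ and this restriction is multiplicity-free by \thm{th:heckebranch}, the set of partitions covered by $\nu$ coincides with the set covered by $\lambda$. A short combinatorial lemma --- the join of the partitions covered by $\lambda$ recovers $\lambda$ when $\lambda$ has two or more removable corners, and the rectangular (one-corner) case is settled by inspection --- forces $\nu = \lambda$, which is (iii).

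The main obstacle is (ii), a quantum ``second fundamental theorem of invariant theory''. The map $H_q(n)\to\End_{U_q(d)}(V^{\tensor n})$ is easy to build: the braidings $\check R_{i,i+1}$ on adjacent tensor factors are $U_q(d)$-equivariant, satisfy the braid relation by Yang--Baxter, and satisfy the quadratic relation $(\check R - q^{-1})(\check R + q)=0$ because $V\tensor V$ decomposes into $U_q(d)$-subrepresentations $S^2_q V \oplus \Lambda^2_q V$ on which $\check R$ acts by the scalars $q^{-1}$ and $-q$. What is harder is that these local operators generate \emph{all} of $\End_{U_q(d)}(V^{\tensor n})$. I would first settle $n=2$, where $\End_{U_q(d)}(V\tensor V)$ is two-dimensional for $d\ge 2$ and one-dimensional for $d=1$ --- a flat specialization of the classical statement --- and then run a deformation argument over the parameter ring: the image of $H_q(n)$ and the commutant $\End_{U_q(d)}(V^{\tensor n})$ both extend to coherent families in $q$, their dimensions agree at $q=1$ by classical Schur--Weyl duality, and the dimension of the commutant is constant in $q$ since the $U_q(d)$-multiplicities $|\SYT(\lambda)|$ are; hence the image is everything for every $q$ in our range. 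The one thing requiring care is the absence of root-of-unity degenerations, and this is automatic because $q$ is real and positive, a regime where all the algebras involved stay semisimple. (Alternatively one may simply cite Jimbo \cite{J} for (ii).) With (i)--(iii) established, $V^{\tensor n} \cong \bigoplus_{\lambda\vdash n,\ \ell(\lambda)\le d} V^\lambda \tensor R^\lambda$ as $U_q(d)\tensor H_q(n)$-modules, as claimed.
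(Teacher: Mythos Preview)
Your proposal is sound, but you should know that the paper does not actually prove this theorem: it simply states the result and attributes both the commutativity of the two actions and the full Schur--Weyl duality to Jimbo \cite{J}. There is no proof in the thesis to compare against; your double-commutant argument supplies considerably more than what the paper does.

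On the proposal itself: steps (i) and (iii) are clean. The inductive multiplicity count via the Pieri rule is exactly how the thesis later builds the Schur basis (Section~\ref{s:cascade}), and your identification $M_\lambda \cong R^\lambda$ from the $H_q(n-1)$-branching data is correct --- the combinatorial lemma that a partition of $n$ is determined by the set of partitions of $n-1$ it covers is straightforward (split on whether $\lambda$ is a rectangle). The genuine work is (ii), as you acknowledge. The deformation sketch is the standard route and is morally right, but to make it rigorous you would want to work over the generic parameter ring (say $\C[q,q^{-1}]$ localized away from roots of unity, or $\C(q)$), check that both the commutant and the image of $H_q(n)$ are free of constant rank there, and then specialize; the semicontinuity of rank does the rest. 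Since the paper already outsources this to Jimbo, citing \cite{J} for (ii) is entirely consistent with the paper's own level of detail.
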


Given that the basis of $V$ is indexed by SSYT with a single box, we think
of basis elements of $V$ as letters, and thus the natural basis vectors
of $V^{\otimes n}$ as words of length $n$. This basis can be called the
the \emph{computational} or \emph{word} basis for $V^{\otimes n}$.

With respect to the word basis for $V^{\otimes n}$ and the GTJ and YYH bases
for $V^\lambda$ and $R^\lambda$, respectively, an algorithm carrying out
the isomorphism given in \ref{schurweyl} is known as a \emph{Schur-Weyl
transform}. In the next chapter we describe a transform which decomposes
$V^{\otimes n}$ and prove that it is in fact a Schur-Weyl transform. The
transform we define has a nice recursive structure, and we prove it has
efficient time complexity in Chapter \ref{ch:schurxform}.

\newchapter{qschur}{The Pieri and Schur-Weyl transforms}{The Pieri and Schur-Weyl transforms}\label{ch:qschur}
\section{Introduction}

In Chapter \ref{ch:qreps} we presented the representation theories of
the quantum group $U_q(d)$ and the Hecke algebra $H_q(n)$, and their
correspondence via Schur-Weyl duality. In this chapter we present a
transform which we prove computes a Schur-Weyl transform. In Section
\ref{s:pieri} we define a Pieri transform via only the representation
theory of the quantum group $U_q(d)$. In Section \ref{s:cascade} we compose
Pieri transforms and prove that this computes a Schur-Weyl transform. This
result is stated without proof in \cite{BCH}, and we were unable to find
a proof in the literature. Finally in Section \ref{s:crystal} we look at
the Pieri and Schur-Weyl transforms in their crystal limits, which ties
representation theory together with the insertion algorithms seen in
Section \ref{s:qinsertion}.

\section{The Pieri transform}\label{s:pieri}
A \emph{Pieri rule} is a formula for decomposing representations of $U_q(d)
\otimes U_q(d)$ that take the form $V^\lambda \otimes V^{(m)}$. In this
chapter we will only need a Pieri rule for the case when $m = 1$ so that
we are decomposing $V^\lambda \otimes V$.

Although the inclusion $U_q(d) \subseteq U_q(d) \otimes U_q(d)$ is not
in general a Gelfand pair, in our case of interest the branching rule is
multiplicity-free and is given in the following theorem.

\begin{theorem} Given the representation $V^\lambda$ and the representation
$V = V^{(1)}$ of $U_q(d)$, their tensor product decomposes into irreps
according to the following formula.
\begin{align}\label{pieri} V^\lambda \otimes V \cong \bigoplus_{\substack{\text{$\mu$ covers $\lambda$}\\ \ell(\lambda) \le d}} V^\mu \end{align}
\end{theorem}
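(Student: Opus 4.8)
The plan is to prove the Pieri rule \eqref{pieri} directly from the branching rule for $U_q(d) \subseteq U_q(d) \otimes U_q(d)$ together with the classical $q=1$ case, exploiting the fact that for these real positive values of $q$ the irreducible representations of $U_q(d)$ are isomorphic to those of $U(d)$ (as stated in Section \ref{s:qrep}). Since $V^\lambda \otimes V$ is a $^*$-representation of $U_q(d)$, it is semisimple and decomposes into irreps $V^\mu$; the only question is which $\mu$ appear and with what multiplicity. The multiplicity of $V^\mu$ in $V^\lambda \otimes V^{(1)}$ is a structure constant that, on the level of characters, coincides with the classical Littlewood-Richardson / Pieri coefficient, since characters of $U_q(d)$-irreps agree with those of $U(d)$-irreps (they are Schur polynomials in the formal variables $q^{h_i/2}$, determined by the $q^{h_i}$-eigenvalue data of Theorem \ref{th:gtj1}). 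The classical Pieri rule states exactly that $V^\lambda \otimes V^{(1)} \cong \bigoplus_{\mu} V^\mu$ where $\mu$ ranges over partitions obtained from $\lambda$ by adding a single box, i.e.\ $\mu$ covers $\lambda$, each with multiplicity one.

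Concretely, I would carry out the following steps. First, observe that $V = V^{(1)} = \C^d$ has a basis $\ket{1}, \dots, \ket{d}$ indexed by single boxes, with $q^{h_i/2}\ket{j} = q^{\delta_{ij}/2}\ket{j}$, so its character is $x_1 + \cdots + x_d$ where $x_i = q^{h_i/2}$. Second, the character of $V^\lambda$ is the Schur polynomial $s_\lambda(x_1,\dots,x_d)$, again by Theorem \ref{th:gtj1} and the bijection of $U_q(d)$-irreps with $U(d)$-irreps (the weight multiplicities are the same since the $q^{h_i}$-action diagonalizes the GTJ basis identically to the classical case). Third, the character of $V^\lambda \otimes V$ is the product $s_\lambda \cdot (x_1 + \cdots + x_d)$, and the classical Pieri identity $s_\lambda \cdot s_{(1)} = \sum_{\mu} s_\mu$, summed over $\mu$ covering $\lambda$ with $\ell(\mu) \le d$, expresses this as a sum of Schur polynomials. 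Fourth, since the decomposition into irreps is determined by the character (as each $s_\mu$ with $\ell(\mu) \le d$ is linearly independent and corresponds to $V^\mu$), we conclude $V^\lambda \otimes V \cong \bigoplus_{\mu} V^\mu$ with $\mu$ covering $\lambda$, each multiplicity one. Finally, note $\ell(\mu) \le d$ is automatic except when $\lambda$ already has $d$ rows, in which case adding a box to a new $(d{+}1)$st row is not allowed --- this is precisely encoded by the condition $\ell(\lambda) \le d$ in the statement combined with $\mu$ covering $\lambda$ (the box must be addable within $d$ rows).

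I expect the main obstacle to be justifying that the $U_q(d)$-representation ring is isomorphic to that of $U(d)$ in a character-compatible way --- i.e.\ that it is legitimate to transport the classical Pieri rule across the $q$-deformation. This rests on the facts (cited from \cite{Da} and \cite{J} in the excerpt) that for $q$ real positive the irreps $V^\lambda$ of $U_q(d)$ are classified by the same partitions and have the same weight-space decomposition under the commutative subalgebra generated by the $q^{h_i}$; once this is in hand, tensor-product decompositions are forced at the level of formal characters and hence at the level of isomorphism classes by semisimplicity and multiplicity-freeness. An alternative, more self-contained route would bypass characters entirely: use the branching rule of Theorem \ref{th:branch} to restrict $V^\lambda \otimes V$ down the tower $U_q(d) \supset U_q(d-1) \supset \cdots$ and identify highest-weight vectors combinatorially, but this is more laborious and the character argument is cleanest given what has already been established.
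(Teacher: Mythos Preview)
Your character argument is correct and is the standard way to establish this decomposition: the coproduct on the Cartan generators is grouplike, $\Delta(q^{h_i/2}) = q^{h_i/2}\otimes q^{h_i/2}$, so formal characters multiply under tensor product, and since the weight multiplicities of $V^\lambda$ coincide with the classical ones (Theorem~\ref{th:gtj1}) the identity $s_\lambda \cdot s_{(1)} = \sum_{\mu \text{ covers } \lambda} s_\mu$ forces the stated decomposition by semisimplicity and linear independence of Schur polynomials.

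The paper, however, does not prove this theorem at all: it is stated as a known result and attributed to the literature (the surrounding formulas for Wigner coefficients are cited from \cite{Da}, and the underlying representation theory from \cite{J}). So there is nothing to compare your approach to on the paper's side; you have supplied an argument where the paper simply invokes the result. Your proposed alternative via highest-weight vectors and the branching rule of Theorem~\ref{th:branch} would also work but is indeed more laborious, and the character route is the right choice given what the paper has already set up.
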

An algorithm carrying out the isomorphism in equation \ref{pieri} with
respect to the GTJ bases for all $U_q(d)$ representations is called a
\emph{Pieri transform}. Interpreting the Pieri transform at the level of
tableaux, q-insertion becomes relevant. A basis element of $V^\lambda
\otimes V$ is indexed by an SSYT $t$ and a letter $i$. The correct way
of thinking of the Pieri transform is that it q-inserts $i$ into $t$ in
superposition, resulting in elements which index vectors in $V^\mu$ for
$\mu$ covering $\lambda$. A matrix entry of the Pieri transform is written
$\langle s \mid t,i \rangle$ and is non-zero only in the case when $s$
is a result of q-inserting $i$ into $t$. (This follows directly from the
behavior of the generators of $U_q(d)$.) The Pieri coefficients are also
called \emph{Wigner coefficients}.

In the rest of this section we describe formulas for $\langle s \mid
t, i \rangle$. The formulas can be found in a variety of sources, in
particular in \cite{Da}. Similar to the GTJ coefficients described in
Theorems \ref{th:gtj1} and \ref{th:gtj2}, they are complicated-looking but
in fact arise from simple combinatorial properties (e.g. axial distances)
of the tableaux associated to the basis elements.

As mentioned above, when computing $\langle s \mid t, i \rangle$, we
visualize $i$ as being q-inserted in $t$, activating a sequence of letter
bumps. The first letter considered is $i$, and it must bump a larger
letter until the last letter bumped is some letter $i_k \le d$. WLOG we
may assume that $i_k = d$, because if not we may work over the smaller
algebra $U_q(i_k)$. Thus we can visualize a sequence of letters
\begin{align}\label{bumpchain} i = i_1 < i_2 < \dots < i_k = d\end{align}
influenced by the bumping process. The Wigner coefficients factor into
a product of \emph{reduced Wigner coefficients}, one for each letter in
\ref{bumpchain}. The letter $i$ gets a special type of reduced Wigner
coefficient, which we'll refer to as type zero.

\begin{theorem}\label{th:typeonewigner} Suppose $i$ is q-inserted into the
SSYT $t$ into the box $b$ in row $r$. The corresponding type zero reduced
Wigner coefficient is given by
\begin{align} W_0(i;\lambda^{(i)}) = q^{(\text{res}(b) - x_i(t)+1)/2} \sqrt{\frac{1}{[a_{ir} + 1]} \prod_{j \le i-1} \frac{[a_{jr} - \lambda_j^{(i)}]}{[a_{jr}+1]} }  \end{align}
where $a_{jk}$ are the axial distances with respect to the horizontal
strip $\lambda^{(i)}$.
\end{theorem}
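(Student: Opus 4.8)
The plan is to recover $W_0$ as the bottom factor in a Wigner--Eckart-type factorization of the Pieri coefficient $\langle s\mid t,i\rangle$ and to determine it by equivariance together with the unitary normalization.

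First I would set up the factorization. Recall that the Pieri transform is the $U_q(d)$-morphism $\Phi\colon V^\lambda\otimes V\to\bigoplus_\mu V^\mu$ of (\ref{pieri}), and that $\langle s\mid t,i\rangle$ is its matrix entry in the GTJ bases. Restricting $\Phi$ along the tower $U_q(i)\subseteq U_q(i+1)\subseteq\cdots\subseteq U_q(d)$ and using that each pair $U_q(m-1)\subseteq U_q(m)$ is a Gelfand pair (Theorem~\ref{th:branch}), the entry $\langle s\mid t,i\rangle$ factors into a product of reduced matrix elements, one per level; the level-$m$ factor is trivial unless $m$ occurs in the bumping chain $i=i_1<\dots<i_k=d$ of (\ref{bumpchain}), which gives the factorization into reduced Wigner coefficients with the level-$i$ factor being $W_0$. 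The observation that makes the bottom level tractable is that $\ket{i}\in V=V^{(1)}$, by Theorem~\ref{th:gtj1}, spans a trivial $U_q(i-1)$-subrepresentation while generating the defining representation $V^{(1)}$ of $U_q(i)$; hence $W_0$ is precisely the reduced matrix element for coupling $V^{t^{(i)}}\otimes V^{(1)}$ to $V^{s^{(i)}}$ in the branching $U_q(i)\downarrow U_q(i-1)$, where $s^{(i)}$ is obtained from $t^{(i)}$ by adding the box $b$ in row $r$ and $\lambda^{(i)}$ is the $i$-strip of $s$, containing $b$ in row $r$.

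Next I would compute this reduced matrix element. Cartan-equivariance, via $\Delta(q^{h_m/2})=q^{h_m/2}\otimes q^{h_m/2}$ and Theorem~\ref{th:gtj1}, shows the coefficient vanishes off insertion results; but the scalar $q$-prefactor is produced by the factors $q^{\pm k_m/2}$ in $\Delta(e_m)$ and $\Delta(f_m)$ — pushing these through the intertwining identity shifts the exponent by the residue of the box being filled minus its $i$-content, giving $\tfrac{1}{2}(\mathrm{res}(b)-x_i(t)+1)$ after a box-by-box count along row $r$. Then, commuting $f_{i-1}$ — which by Theorem~\ref{th:gtj2} turns the last $i-1$ of a row into an $i$ with amplitudes (\ref{actionformula}) — past $\Phi$ yields a linear recursion expressing $W_0$ for $\lambda^{(i)}$ in terms of $W_0$ for $\lambda^{(i)}$ with one box removed from a row $j\le i-1$, whose coefficients are exactly the quantum-integer ratios built from the axial distances $a_{jr}$ that occur in (\ref{actionformula}). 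Solving from the base case (no boxes in rows $<i$, where $W_0$ reduces to a $q$-power times $1/\sqrt{[a_{ir}+1]}$) telescopes to the displayed product over $j\le i-1$. Finally, the Pieri transform is a morphism of $*$-representations between multiplicity-free objects, hence unitary, and the GTJ bases are orthonormal, so $|W_0|$ is fixed; compatibility with $e_{i-1}^*=f_{i-1}$ and the positive square-root convention picks the branch.

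The hard part will be the recursion step: verifying that feeding the explicit GTJ statistics of Theorem~\ref{th:gtj2} into the $f_{i-1}$-recursion telescopes to exactly the product of quantum-integer ratios in the statement, rather than merely to something proportional to it. This is a $q$-hypergeometric identity in the $a_{jr}$ and the parts $\lambda^{(i)}_j$, and carrying it out cleanly needs the additivity $a_{jr}=a_{j\ell}+a_{\ell r}$ and careful bookkeeping of which rows gain or lose a box. A secondary point is to check that the $q$-powers extracted at the several levels of (\ref{bumpchain}) are mutually consistent — so that the factors $W_0$ and the type-one factors really multiply back to $\langle s\mid t,i\rangle$ independently of the order in which the bumping chain is unwound — which again follows from additivity of residues together with $a_{jr}=a_{j\ell}+a_{\ell r}$.
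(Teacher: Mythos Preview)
The paper does not prove this theorem: it is stated as a formula quoted from the literature, with the sentence ``The formulas can be found in a variety of sources, in particular in \cite{Da}'' immediately preceding it, and no argument is supplied. So there is no proof in the paper to compare your proposal against.

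That said, your outline is the standard route and is essentially what lies behind the Date--Jimbo--Miwa derivation you are reconstructing: factor the Pieri matrix element through the Gelfand tower $U_q(1)\subseteq\cdots\subseteq U_q(d)$ to isolate the level-$i$ reduced coefficient, use the coproduct and Theorem~\ref{th:gtj1} to pin down the $q$-power, and then run a lowering-operator recursion using the GTJ statistics of Theorem~\ref{th:gtj2} to produce the product over $j\le i-1$. Your identification of the bottleneck is accurate: the recursion step is a genuine $q$-hypergeometric telescoping, and the bookkeeping of which axial distances shift when a box is removed from row $j$ is where the computation actually happens. Your proposal is a plan rather than a proof --- you have not written down the recursion, let alone solved it --- so if you intend to supply what the paper omits, that computation must be carried out explicitly. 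One caution on your $q$-prefactor argument: the exponent $(\mathrm{res}(b)-x_i(t)+1)/2$ does not come purely from ``Cartan-equivariance'' as you phrase it, since the diagonal coproduct of $q^{h_m/2}$ alone only enforces weight-matching; the residue appears because the intertwiner must commute with the full ladder of $e_m,f_m$ along the row, and the $q^{\pm k_m/2}$ factors in their coproducts accumulate along the path to box $b$.
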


Any other letter $k$ in the chain \ref{bumpchain} is assigned a reduced
Wigner coefficient of type one. Here we have two parameters: the box $k$
inhabits in $t$ in row $r_1$ and the box $k$ gets bumped into in tableau
$s$, in row $r_2$. Then the type one Wigner coefficients are given in the
following theorem.

\begin{theorem}\label{th:typetwowigner} Suppose $i$ is q-inserted into the
SSYT $t$ and $k$ gets bumped from row $r_1$ to $r_2$ as a result of the
insertion. The corresponding type one reduced Wigner coefficient is given by
\begin{align}
W_1(k;\lambda^{(k)}) = sgn(r_1-r_2) q^{(a_{r_1r_2}-\lambda_{r_1}^{(i)})/2}
\sqrt{ \prod_{\substack{j \le k\\ j \ne r_2}} \frac{[a_{jr_1}+\lambda_{r_1}^{(k)} + 1]}{[a_{jr_2}+1]} \prod_{\substack{j \le k-1\\ j \ne r_1}}
\frac{[a_{jr_2}-\lambda_j^{(k)}]}{[a_{jr_1} + \lambda_{r_1}^{(k)} - \lambda_{j}^{(k)}]}}
\end{align}
where $a_{jj'}$ are the axial distances with respect to the SSYT $t^{(k)}$
and $sgn(0) = 1$.
\end{theorem}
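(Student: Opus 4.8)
The plan is to verify Theorem~\ref{th:typetwowigner} by identifying the type one reduced Wigner coefficient $W_1(k;\lambda^{(k)})$ as the matrix entry of the Pieri transform restricted to the single letter-bump $k \mapsto (r_1 \to r_2)$, and then deriving its value from the explicit GTJ action formulas of Theorems~\ref{th:gtj1} and~\ref{th:gtj2}. First I would set up the factorization claim carefully: the full Wigner coefficient $\langle s \mid t, i\rangle$ is computed by applying the coproduct $\Delta$ to the generators $e_j$, $f_j$, $q^{h_j/2}$ of $U_q(d)$ acting on $V^\lambda \otimes V^{(1)}$, and then projecting onto the GTJ basis of $V^\mu$. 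Since $V^{(1)}$ is one-dimensional in each weight space and the bump chain \ref{bumpchain} records exactly which raising/lowering operators are activated, the matrix entry telescopes into a product indexed by the letters $i = i_1 < i_2 < \dots < i_k = d$; the factor attached to $i_1 = i$ is the type zero coefficient of Theorem~\ref{th:typeonewigner}, and the factor attached to each later letter $k = i_m$ is $W_1(k;\lambda^{(k)})$, which depends only on the restricted tableau $t^{(k)}$.

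Next I would compute $W_1$ directly. Fixing $k$, the relevant data is local: the box that $k$ occupies in $t$ lies in row $r_1$, and it is displaced to row $r_2$ in $s$; the sign $\sgn(r_1 - r_2)$ comes from the bumping-sign convention (a $-1$ each time a letter moves to a lower row, matching the sign in \ref{gtformula} and in the GTJ statistic \ref{actionformula}). The magnitude is a ratio of quantum integers built from the axial distances $a_{jj'}$ of $t^{(k)}$ together with the row lengths $\lambda^{(k)}_j$ of the horizontal strip $\lambda^{(k)}(t)$. I would obtain this by taking the appropriate product of square roots of GTJ statistics from \ref{actionformula} — one factor for removing $k$ from its old position (a lowering-type contribution) and one for inserting it into its new position — and then simplifying using the additivity $a_{ij} + a_{jk} = a_{ik}$ and the standard quantum-integer identities $[a][b+1] - [a+1][b] = [a-b]$ and $[n] = [{-n}]\cdot(-1)$-type manipulations. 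The telescoping across the chain \ref{bumpchain} then makes most intermediate $[a_{jr}]$ factors cancel, leaving exactly the two products displayed in the theorem, with the ranges $j \le k$, $j \ne r_2$ and $j \le k-1$, $j \ne r_1$ reflecting which rows already contain entries $\le k$ before and after the bump.

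The main obstacle I anticipate is the bookkeeping of \emph{which} axial distances and strip-lengths enter each factor, and with which offsets: the coefficient mixes $a_{jr_1}$ and $a_{jr_2}$, and the denominators $[a_{jr_1} + \lambda^{(k)}_{r_1} - \lambda^{(k)}_j]$ are not simply $[a_{jr}+1]$-type terms, so the cancellation when chaining $W_1(i_m)$ with $W_1(i_{m+1})$ is delicate — one must check that the "output" data of one bump ($\lambda^{(k)}$ as seen from row $r_2$) is precisely the "input" data of the next. A secondary subtlety is the boundary case $i_k = d$ (and the WLOG reduction to $U_q(i_k)$): I would justify that restricting to the subalgebra $U_q(i_k) \subseteq U_q(d)$ does not change any of the axial distances or strip lengths appearing in $W_1$, since these are intrinsic to $t^{(k)}$. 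Consistency checks at $q = 1$ (recovering the classical RSK/Clebsch--Gordan Wigner coefficients), at $q = 0$ and $q = \infty$ (where $W_1$ should collapse to a sign or to $1$, matching the crystal-limit statements), and the unitarity identity $\sum_s |\langle s \mid t, i\rangle|^2 = 1$ will pin down any remaining sign or normalization ambiguity.
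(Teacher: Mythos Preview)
The paper does not actually prove Theorem~\ref{th:typetwowigner}: it states the formula and attributes it to the literature, specifically to Date--Jimbo--Miwa~\cite{Da} (see the sentence preceding Theorem~\ref{th:typeonewigner}: ``The formulas can be found in a variety of sources, in particular in~\cite{Da}''). So there is no proof in the paper for your proposal to be compared against; the theorem is quoted as a known result.

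Your outline is a plausible sketch of how such a derivation would proceed, and the consistency checks you mention (the $q\to 0,\infty$ limits and unitarity) are exactly what the paper later invokes in Section~\ref{s:crystal}. That said, as a standalone proof your proposal is underspecified at the key step: you assert that the matrix entry ``telescopes into a product indexed by the letters'' of the bump chain, but the factorization of $\langle s\mid t,i\rangle$ into reduced Wigner coefficients is itself the content of Theorem~\ref{th:wigner}, not something that follows formally from the coproduct alone---it relies on the specific triangular structure of the GTJ basis under the chain $U_q(1)\subseteq\cdots\subseteq U_q(d)$, and in~\cite{Da} is established by an explicit recursion on $d$. Absent that, your plan to ``compute $W_1$ directly'' by multiplying one lowering-type and one raising-type GTJ statistic from~\eqref{actionformula} does not obviously produce the displayed double product with its asymmetric index ranges $j\le k$, $j\ne r_2$ versus $j\le k-1$, $j\ne r_1$; getting those ranges and the offsets $+1$, $+\lambda^{(k)}_{r_1}$, $-\lambda^{(k)}_j$ right is the entire computation, and you have not indicated which identity forces them. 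If you want to supply a proof rather than a citation, you should either reproduce the recursion of~\cite{Da} or give a direct inductive verification that the product of $W_0$ and the $W_1$'s satisfies the same recursion as $\langle s\mid t,i\rangle$.
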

Note that the exponent of $q$ given by $a_{r_1r_2} - \lambda_{r_1}^{(i)}$
in theorem \ref{th:typetwowigner} is simply the distance from the old box
$k$ inhabited in $t$ to the new box $k$ inhabits in $s$.

\begin{theorem}\label{th:wigner} The value of the Wigner
coefficient $\langle s \mid t, i \rangle$ with associated sequence
(\ref{bumpchain}) and with horizontal strips $\lambda^{(i)} =
\text{sh}(t^{(i)})\setminus\text{sh}(t^{(i-1)})$ is given by
\begin{align}\label{rwcformula} \langle s \mid t, i \rangle = W_0(i) \prod_{j = 2}^k W_1(i_j)\end{align}
\end{theorem}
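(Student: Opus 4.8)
The plan is to establish Theorem~\ref{th:wigner} by reducing the computation of the full Wigner coefficient $\langle s \mid t, i \rangle$ to a telescoping product of the reduced coefficients $W_0$ and $W_1$ that were computed in Theorems~\ref{th:typeonewigner} and~\ref{th:typetwowigner}. The starting point is the observation, already noted when the Pieri transform was introduced, that $\langle s \mid t, i \rangle$ is nonzero only when $s$ arises from q-inserting $i$ into $t$, and that this insertion determines a bumping chain $i = i_1 < i_2 < \dots < i_k = d$ as in \eqref{bumpchain}. I would first make precise that the Pieri transform, being built from the action of the $e_j, f_j$ generators of $U_q(d)$, factors through the tower $U_q(1) \subseteq U_q(2) \subseteq \dots \subseteq U_q(d)$: the matrix entry $\langle s \mid t, i\rangle$ can be computed one ``level'' at a time, where at level $j$ the only data that matters is the horizontal strip $\lambda^{(j)}$ together with how the letter currently being bumped moves between rows. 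This is the structural heart of the argument and is where the factored form on the right-hand side of \eqref{rwcformula} comes from.

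Next I would carry out the induction on the length $k$ of the bumping chain. For $k=1$ (the letter $i$ lands directly in the new box without bumping anything), the claim is exactly Theorem~\ref{th:typeonewigner}, so $\langle s \mid t, i\rangle = W_0(i)$ and there is nothing further to prove. For the inductive step, I would peel off the first bump: inserting $i$ into $t$ creates a new box and displaces the letter $i_2$ from some row $r_1$; the residual process is the q-insertion of $i_2$ into the tableau $t'$ obtained from $t$ by this first replacement. The key identity to verify is that the Wigner coefficient factors as $\langle s \mid t, i\rangle = W_0(i;\lambda^{(i)}) \cdot \langle s \mid t', i_2\rangle$, with the second factor being a Wigner coefficient for the strictly shorter chain $i_2 < \dots < i_k = d$. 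Iterating and invoking Theorem~\ref{th:typetwowigner} for each subsequent letter then produces $W_0(i)\prod_{j=2}^k W_1(i_j)$. The decomposition of the tensor product coefficient into reduced coefficients of this kind is a standard feature of Gelfand-Tsetlin calculus (the classical analogue being the factorization of Wigner coefficients for $U(d) \supseteq U(d-1) \supseteq \dots$ into reduced Wigner coefficients of Biedenharn-Louck type), and the relevant $q$-analogue is precisely what is recorded in \cite{Da}; I would cite that source and verify that the normalization conventions agree.

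The one genuinely delicate point, and the step I expect to be the main obstacle, is bookkeeping the axial distances correctly as one passes from level $j$ to level $j+1$. Theorem~\ref{th:typeonewigner} uses axial distances $a_{jr}$ relative to the horizontal strip $\lambda^{(i)}$, while Theorem~\ref{th:typetwowigner} uses axial distances relative to the restricted tableau $t^{(k)}$; when the bumping chain moves a letter from row $r_1$ to row $r_2$, one must check that the $q$-power prefactor $q^{(a_{r_1 r_2} - \lambda_{r_1}^{(i)})/2}$ is exactly the ``displacement'' of the bumped box, as remarked after Theorem~\ref{th:typetwowigner}, and that the various $[a_{jr} \pm \lambda]/[a_{jr}]$ ratios in consecutive reduced coefficients combine without spurious factors. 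Using the additivity $a_{ij} + a_{jk} = a_{ik}$ established in Section~\ref{s:comb}, together with the interlacing of the shapes $\mathrm{sh}(t^{(j-1)}) \subseteq \mathrm{sh}(t^{(j)})$, I would check that the telescoping is exact. Once the indexing is aligned, formula \eqref{rwcformula} follows; I would close by remarking that when $q=1$ this recovers the classical factorization of $U(d)$ Wigner coefficients and when $q=0$ or $q=\infty$ it degenerates to the (signed) RSK and dual-RSK bumping, consistent with Theorem~\ref{th:main}.
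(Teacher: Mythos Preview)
The paper does not actually prove Theorem~\ref{th:wigner}; it states the factorization as a known formula from \cite{Da} and then records the recursive reformulation~\eqref{recursive}. So your proposal is already more ambitious than what the paper does. That said, the structural content of~\eqref{recursive} is exactly the inductive mechanism one would use to prove the theorem, and comparing your induction to it exposes a real problem in your inductive step.

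Your recursion runs in the wrong direction. The Gelfand--Tsetlin tower $U_q(1)\subseteq\cdots\subseteq U_q(d)$ that you correctly invoke dictates peeling letters off the \emph{top}: one restricts $s$ and $t$ to $s^{(d-1)}$ and $t^{(d-1)}$, and the Pieri matrix entry factors as
\[
\langle s\mid t,i\rangle \;=\; W_1(d)\,\langle s^{(d-1)}\mid t^{(d-1)},i\rangle,
\]
which is precisely~\eqref{recursive}. Iterating strips off $W_1(i_k), W_1(i_{k-1}),\dots$ until one reaches the base case $i=i_1$, which contributes the single $W_0(i)$. Your proposal instead peels off the \emph{first} bump and asserts $\langle s\mid t,i\rangle = W_0(i;\lambda^{(i)})\cdot \langle s\mid t',i_2\rangle$ with the second factor a genuine Wigner coefficient for the shorter chain $i_2<\cdots<i_k$. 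But if $\langle s\mid t',i_2\rangle$ is a Pieri matrix entry, the theorem applied to it would begin with $W_0(i_2)$, not $W_1(i_2)$; unwinding your induction then yields $W_0(i)\,W_0(i_2)\cdots$, which is not formula~\eqref{rwcformula}. The type-zero coefficient records the \emph{initial} insertion of a letter, and there is exactly one such event in the whole process; your recursion manufactures a new one at every step.

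There is no algebraic identity that makes $\langle s\mid t',i_2\rangle$ equal to $\prod_{j\ge 2}W_1(i_j)$, because the tableau $t'$ still lives in a $U_q(d)$-representation and the first step of inserting $i_2$ into it is governed by $W_0$, not $W_1$. If you want to salvage your scheme, you would have to prove separately that $W_0(i_2)$ computed relative to $t'$ coincides with $W_1(i_2)$ computed relative to $t$, and then that the subsequent $W_1$'s also match across the two tableaux; but at that point you are no longer doing an induction on Wigner coefficients, and it is both cleaner and closer to the source \cite{Da} to run the recursion top-down as in~\eqref{recursive}. The axial-distance bookkeeping you flag as the delicate point is genuinely the main computation, but it should be organized around restricting to $t^{(d-1)}$ rather than around replacing a single box of $t$.
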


\begin{example}
Let $t = \young(112,23)$ and $s = \young(112,22,3)$ so that $i = 2$ is
the letter q-inserted into $t$ to make $s$. There are two reduced Wigner
coefficients, one associated to 2, and one associated to the 3 that 2 bumps.

In the first step, 2 is added onto the second row of the tableau
$\young(112,2)$. The residue of this box is zero, $x_2(t) = 2$, and the
relevant axial distance is $a_{12} = 3$. Thus the type zero reduced Wigner
coefficient associated to the 2 is given by
$$W_0(2) = q^{-1/2} \sqrt{[2]} $$

In the second step, the 3 is bumped from its original position in the second
row to its new position in the third row. The axial distance from the old
box to the new box is given by 1. The relevant axial distances are $a_{12}
= 2$, $a_{23} = 2$ and $a_{13} = 4$.
$$W_1(3) = -q^{1/2} \frac{[4]}{[3]}\sqrt{\frac{[2]}{[5]}}$$

Therefore, the Wigner coefficient $\langle s \mid t, i \rangle$ is given
by the product
$$W_0(2) W_1(3) = - \frac{[4][2]}{[3]\sqrt{[5]}}$$
\end{example}
Theorem \ref{th:wigner} proves that the matrix entries of the Pieri transform
decompose into a product of reduced Wigner coefficients. Another way of
interpreting formula \ref{rwcformula} is the following recursive version.
\begin{align}\label{recursive}
 \langle s \mid t, i \rangle = \left\{\begin{array}{ll} W_1(d) \langle s^{(d-1)} \mid t^{(d-1)}, i \rangle & i \ne d\\ W_0(d) & i = d\end{array}\right.
 \end{align}
We then define the \emph{reduced Wigner transform} to be an algorithm
computing the $d \times d$ matrix of reduced Wigner coefficients where $t$
is fixed but $i$ and the nonzero row in $\text{sh}(s) \setminus \text{sh}(t)$
both vary.

\section{The Schur-Weyl transform}\label{s:cascade}
Recall we use the notation $V$ for the $d$-dimensional irrep of $U_q(d)$
indexed by the single-box partition $\lambda = (1)$. Thus we think of
basis elements of $V$ as just letters and basis elements of $V^{\otimes n}$
as words of length $n$.

For $n = 2$, we have $V^{\otimes 2}$ which can be decomposed with a single
Pieri transform seen in Section \ref{s:pieri}. For larger $n$, the Pieri
transforms can be composed (or, cascaded) to create a transform with input
space $V^{\otimes n}$. For example, when $n = 3$, we realize a decomposition
of $V^{\otimes n}$ via two sequential applications of the Pieri transform.
\begin{align*}
V^{\otimes 3} & = (V \otimes V) \otimes V\\
& \cong \left(V^{(1,1)} \oplus V^{(2)} \right) \otimes V\\
& \cong \left(V^{(1,1)} \otimes V \right) \oplus \left(V^{(2)} \otimes V\right)\\
& \cong \left(V^{(2,1)} \oplus V^{(1,1,1)} \right) \oplus \left(V^{(3)} \oplus V^{(2,1)}\right)\\
& \cong V^{(1,1,1)} \oplus \left(R^{(2,1)} \otimes V^{(2,1)}\right) \oplus V^{(3)} \end{align*}
The above description can be read as two sequential q-insertions mapping
a word of length three to a tableau with three boxes. Note that there are
two copies of $V^{(2,1)}$ in the decomposition, determined by whether the
box in the second row was added in the first or second instance of the
Pieri transform, producing a multiplicity space $R^{(2,1)}$ for the irrep
$V^{(2,1)}$. As indicated by the notation, we know the multiplicity space
$R^{(2,1)}$ is isomorphic to the irrep $R^{(2,1)}$ of the Hecke algebra
by the Schur-Weyl duality theorem \ref{schurweyl}.

By induction, a decomposition of $V^{\otimes n}$ can be achieved by
cascading $(n-1)$ Pieri transforms, which we refer to also as a Pieri
transform. This transform results in a sum of $U_q(d)$ irreps $V^\lambda$,
whose multiplicity spaces are isomorphic to $H_q(n)$ irreps $R^\lambda$
via the Schur-Weyl duality theorem. However, a priori it is unclear whether
the change-of-basis achieved by our Pieri transform is in fact identical
to the change-of-basis required by a Schur-Weyl transform. In the rest of
this section we prove that cascaded Pieri transforms compute the Schur-Weyl
transform up to sign.

We refer to the GTT basis achieved by $(n-1)$ Pieri transforms as the
Pieri basis and the basis for Schur-Weyl duality as the Schur basis,
as in Section \ref{s:schurweyl}.

\begin{theorem}\label{th:agree} The Schur transform for decomposing
$V^{\otimes n}$ by $(n-1)$ cascaded Pieri transforms is a Schur-Weyl
transform as defined in Section \ref{s:schurweyl} up to signs.
\end{theorem}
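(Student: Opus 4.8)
The plan is to exploit the fact that both the Pieri basis and the Schur basis are Gelfand--Tsetlin--type bases for the \emph{same} tower of algebras, and that the decomposition of $V^{\otimes n}$ into $U_q(d)$-isotypic components is canonical, while each isotypic component factors as $V^\lambda \otimes R^\lambda$. The only freedom is in how the multiplicity space $R^\lambda$ is identified; so the whole theorem reduces to checking that the $H_q(n)$-action induced on the Pieri multiplicity space agrees, up to a sign on each SYT basis vector, with the YYH action from Formula~\ref{gtformula}. I would set this up by considering the tower $U_q(d) \subseteq U_q(d) \otimes H_q(2) \subseteq \dots \subseteq U_q(d) \otimes H_q(n)$ acting on $V^{\otimes n}$: the Pieri basis is manifestly adapted to the $U_q(d)$-restrictions, and the intermediate shapes $\sh(t^{(k)})$ appearing along a cascade of Pieri transforms are exactly the data of an SYT, hence index a Pieri-basis vector in each $U_q(d)$-isotypic piece. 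What must be shown is that this indexing is also adapted to the Hecke-algebra tower, i.e. that the generator $T_{n-1}$ of $H_q(n)$ acts on the span of the two Pieri vectors coming from a pair of consecutive shapes by exactly the $2\times 2$ matrix in Formula~\ref{gtformula}, with axial distance read off from the two boxes where the last two letters were inserted.

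The key steps, in order, are: (i) observe that $T_{n-1}$ commutes with all of $U_q(d)$ and with $H_q(n-1)$, so by Schur's lemma and the multiplicity-freeness of all the branching rules (Theorems~\ref{th:branch}, \ref{th:heckebranch}, and the Pieri rule~\ref{pieri}), $T_{n-1}$ preserves each two-dimensional space spanned by the Pieri vectors $\ket{s},\ket{s'}$ that differ only in which of the last two Pieri steps added the ``moving'' box, and acts as the identity on the one-dimensional spaces where the two insertions forced a unique order; (ii) compute the matrix of $T_{n-1}$ on such a two-dimensional space by expressing the last two Pieri transforms explicitly via the reduced Wigner coefficients of Theorems~\ref{th:typeonewigner}--\ref{th:wigner}, together with the $q$-permutation action of $T_{n-1}$ on $V^{\otimes n}$ given just before Theorem~\ref{schurweyl}; (iii) simplify the resulting entries using the additivity $a_{ij} + a_{jk} = a_{ik}$ of axial distances and the definition of $[a]$, and match them against the entries $\tfrac{q^{-a}}{[a]}$ and $\sqrt{1 - [a]^{-2}}$ of Formula~\ref{gtformula}, where $a$ is the axial distance in the recording tableau between the boxes holding $n-1$ and $n$; (iv) track the bumping signs $\sgn(r_1 - r_2)$ from $W_1$ and the $q$-power prefactors: these are exactly the phases that cannot be eliminated and are what forces the conclusion to be only ``up to signs''; (v) conclude by induction on $n$ — knowing $T_1,\dots,T_{n-2}$ act as YYH by the inductive hypothesis on $V^{\otimes(n-1)} \subseteq V^{\otimes n}$, and $T_{n-1}$ acts as YYH up to sign by (ii)--(iv), gives that the full $H_q(n)$-representation on the Pieri multiplicity space is YYH up to a sign on each basis vector, hence the Pieri basis is a Schur-Weyl transform up to signs.

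The main obstacle I expect is step (ii)--(iii): the honest computation that the composite of two reduced Wigner transforms, conjugated by the elementary $q$-braiding $T_{n-1}$ on the two relevant tensor factors, collapses to the clean $2\times 2$ YYH matrix. A priori this is a product of several square roots of ratios of quantum integers indexed by all the intermediate row positions $j$, and it is not obvious that almost everything telescopes; the right bookkeeping device is to pass to the restricted tableaux $t^{(k)}$ and use $a_{ij}+a_{jk}=a_{ik}$ repeatedly so that the product over $j$ in $W_1$ pairs off term-by-term between the two Pieri vectors, leaving only the single axial distance between the two newly added boxes. A secondary subtlety is getting the signs exactly right: one must verify that the $\sgn(r_1-r_2)$ factors and the half-integer powers of $q$ appearing in $W_0$ and $W_1$ combine, over the whole cascade, into a global scalar on each $R^\lambda \otimes V^\lambda$ summand together with a sign on each SYT vector — i.e. that they do \emph{not} mix distinct SYT vectors — which is what makes the statement ``up to signs'' rather than something weaker. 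Finally, one should note the edge case where $i$ and $i+1$ land in the same row or column of the recording tableau, forcing $[a] = \pm 1$ and reducing Formula~\ref{gtformula} to $T_{n-1}\ket{r_t} = q^{\mp 1}\ket{r_t}$; this must be matched against the fact that in the Pieri picture there is then only one admissible insertion order, so the two-dimensional space degenerates to one dimension, and the single reduced Wigner coefficient there must equal $q^{\mp 1}$ up to sign.
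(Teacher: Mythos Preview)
Your step (i) contains an error: $T_{n-1}$ does \emph{not} commute with all of $H_q(n-1)$, since $T_{n-1}$ and $T_{n-2}$ satisfy the braid relation rather than commute. The conclusion of (i) is still true, but for a different reason: $T_{n-1}$ acts only on tensor positions $n-1$ and $n$, so it commutes with the first $n-3$ cascaded Pieri transforms (which involve only positions $1,\dots,n-2$) as well as with the full diagonal $U_q(d)$. These two commutations fix all of $\nu^{(1)}\subset\cdots\subset\nu^{(n-2)}$ and the final shape $\nu^{(n)}=\lambda$, leaving only $\nu^{(n-1)}$ undetermined, which is exactly the two-dimensional (or one-dimensional) block you want.

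With (i) corrected, the explicit Wigner-coefficient computation you propose in (ii)--(iii) --- and rightly flag as the main obstacle --- is unnecessary, and the paper does not perform it. Your own inductive step (v), together with the branching rule of Theorem~\ref{th:heckebranch}, already shows that the Pieri basis of the multiplicity space of $V^\lambda$ restricts under $H_q(n-1)$ to the direct sum of the $R^\mu$ with $\lambda$ covering $\mu$; hence it is a GTT basis for the Hecke tower, and uniqueness of GTT line bases forces it to coincide with the YYH line basis. The bar and $*$-structures then reduce the remaining scalar freedom to $\pm 1$, with no matrix entry ever computed. The paper packages the same multiplicity-free reasoning more symmetrically: rather than inducting on $n$, it introduces a poset of algebras $H_q(j)\otimes U_q(d)^{\otimes k}$ with $j+k\le n+1$, in which the Schur chain and the Pieri chain are two maximal chains, and shows that any two maximal chains are connected by local ``square'' and ``triangle'' moves whose common refinement is still locally multiplicity-free, so that all chains yield the same line decomposition at $U_q(d)$. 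Both routes reach the conclusion without touching a single reduced Wigner coefficient; the paper's version makes the $U_q(d)$--$H_q(n)$ symmetry manifest and treats all $n$ at once, while yours (once repaired and shorn of (ii)--(iii)) stays closer to the recursive structure of the algorithm itself.
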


\begin{proof} We first argue the $q=1$ case, so that we can work with groups.
Beginning with the group $S(n) \times U(d)$, in both the Pieri and Schur-Weyl
decompositions we have subgroup flags that reach first $U(d)$ to produce
a summand of irreps $V^\lambda$. From this point we continue with the
subgroup flag defined by the branching rule in theorem \ref{th:branch},
to yield the standard GTT basis of $V^\lambda$.

In the Schur basis the subgroup flag is defined by
$$U(d) \subseteq S(2) \times U(d) \subseteq S(3) \times U(d) \subseteq
    \cdots \subseteq S(n) \times U(d)$$
and in the Pieri basis the subgroup flag is defined by
$$U(d) \subseteq U(d)^2 \subseteq U(d)^3 \subseteq \cdots \subseteq U(d)^n$$
In the Pieri flag, we mean more precisely that within the group $U(d)^k$,
the first factor of $U(d)$ should act on the first $n-k+1$ tensor factors
of $V$ diagonally, while the $j$th factor of $U(d)$ for $j \ge 2$ should
act on the $(n-k+j)$th tensor factor of $V$.  In other words, we can define
the desired embedding $U(d)^k \subseteq U(d)^{k+1}$ by the map
$$(\Delta \times \id^{k-1}):U(d)^k \to U(d)^{k+1},$$
where
$$\Delta:U(d) \to U(d)^2$$
is the standard diagonal embedding, and $\id$ is the identity.

We claim that the direct sum decompositions induced by the two subgroup flags
become equal when they reach $U(d)$ and irreps $V^\lambda$ of this group.
Since the subgroup flags coincide below $U(d)$, the decomposition must remain
equal afterwards.  To prove the claim, we consider the partially ordered set
$P$ of groups $S(j) \times U(d)^k$, with $j+k \le n+1$,
as shown in \fig{f:poset}.

$P$ has a unique minimal element, $U(d)$, and
$n$ maximal elements.  It also has $2^n$ maximal chains that connect $U(d)$
to some maximal element, all of length $n+1$; each step of such a chain,
in the figure, can either be down or to the right.  We claim that each of
these maximal chains are all locally multiplicity free on $V^{\tensor n}$,
and that the partial decompositions all coincide when they reach $U(d)$.

To see that each chain $c \subset P$ produces a GTT line basis, we first
decompose $V^{\tensor n}$ as a representation of a maximal group $S(k)
\times U(d)^{n-k+1}$.  By \ref{schurweyl} we obtain the following isomorphism:
\begin{align*}
V^{\tensor n} & \cong V^{\otimes k} \otimes V^{\otimes n-k}\\
& \cong \left(\bigoplus_{\lambda \vdash k} R_\lambda \tensor V_\lambda\right)
    \tensor V^{\tensor n-k}
\end{align*}
This is multiplicity free.  Each subsequent step of $c$ is one of the
inclusions
\begin{align*}
S(j-1) \times U(d)^k & \subseteq S(j) \times U(d)^k\\
S(j) \times U(d)^{k-1} & \subseteq S(j) \times U(d)^k \end{align*}
Both of these inclusions are Gelfand pairs by Theorems \ref{th:branch}
and \ref{th:heckebranch}, and the structure of irreps of the direct product
of two groups.

To see that the decompositions coincide, we consider two types of
moves on chains in $P$:  a triangle move that changes the last step between
horizontal and vertical, and a square move that switches a horizontal
and vertical step lower in the chain.  The triangle move relates
two chains $c_1, c_2 \subseteq P$ that agree except at the three groups:
$$\begin{matrix}
S(k) \times U(d)^{n-k} & \subseteq & S(k+1) \times U(d)^{n-k} \\
\vsubseteq \\ S(k) \times U(d)^{n-k+1} \end{matrix}$$
We claim that the decomposition of $V^{\otimes n-k}$ is multiplicity free using
the chain $c_3 = c_1 \cap c_2$, which begins directly with
$S(k) \times U(d)^{n-k}$.  By \ref{schurweyl} and either \thm{th:branch}
or \thm{th:heckebranch},
we obtain
$$V^{\tensor n} \cong \left(\bigoplus_{\substack{\lambda \vdash k, \mu \vdash k+1
    \\ \mu \text{covers} \lambda}} R^\lambda \tensor V^\mu\right) \tensor V^{\tensor n-k-1},$$
which is multiplicity free.  Since $c_1$ and $c_2$ each yield the same
decomposition as $c_3$, they yield the same decomposition as each other.

Likewise suppose that $c_1, c_2 \subseteq P$ differ by a square move:
$$\begin{matrix} S(j) \times U(d)^k & \subseteq & S(j+1) \times U(d)^k \\
\vsubseteq && \vsubseteq \\
S(j+1) \times U(d)^k & \subseteq & S(j+1) \times U(d)^{k+1} \end{matrix}.$$
We claim that $c_3 = c_1 \cap c_2$ is again locally multiplicity free,
which implies that $c_1$ and $c_2$ must each yield the same decomposition
as $c_3$.  At the lower left corner, a single summand which is an irrep
of $S(j+1) \times U(d)^{k+1}$ will in general have the form  $(R_\lambda
\tensor V_\mu) \tensor V^{\tensor k}$.  Then its restriction to $S(j) \times
U(d)^k$ is multiplicity free, by applying \thm{th:heckebranch} to
$R_\lambda$ and \thm{th:branch} to $V_\mu \tensor V$.

It is easy to see that all maximal chains in $P$ are connected by
square and triangle moves.  This yields the result when $q=1$.

When $q$ is positive (or more generally, when $q$ is not a root of unity)
we can follow the same argument, except that we replace $S(j) \times
U(d)^k$ by $H_q(j) \tensor U_q(d)^{\tensor k}$.  The replacement yields
well-defined algebra actions by \ref{schurweyl}, and the argument still
works because it relies on the same multiplicity free structures.

The line basis agreement extends to unique vector bases up to sign.
Because the representations considered are all bar representations, the same
Gelfand-Tsetlin constructions yield unique real line bases.  Then, because
the representations are all *-representations, the real line bases can be
refined to bases of real unit vectors.  These vectors are then unique up
to sign.

\end{proof}

\begin{figure*}
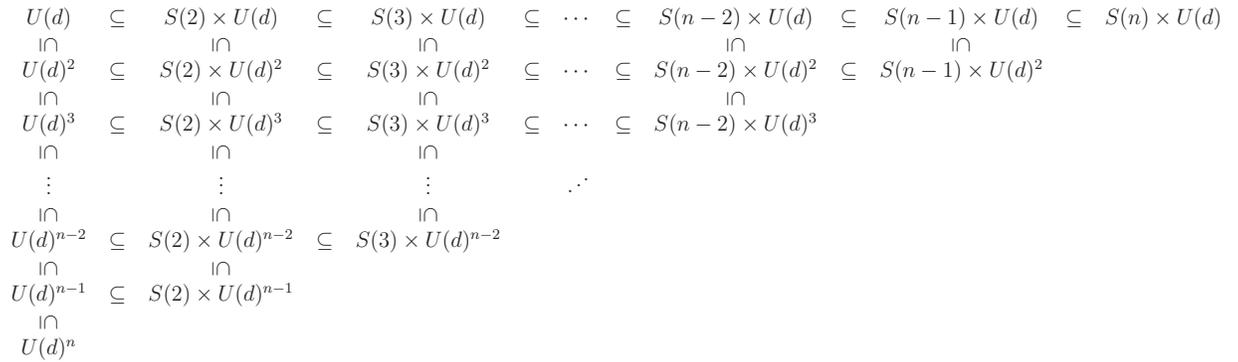

\begin{center}
\scalebox{.7}{
\begin{tabular*}{10in}{ccccccccccccc}
$U(d)$ & $\subseteq$ & $S(2) \times U(d)$ & $\subseteq$ & $S(3) \times U(d)$ & $\subseteq$ &
    $\cdots$ & $\subseteq$ & $S(n-2) \times U(d)$ & $\subseteq$ & $S(n-1) \times U(d)$
    & $\subseteq$ & $S(n) \times U(d)$ \\
$\vsubseteq$ && $\vsubseteq$ && $\vsubseteq$ &&&& $\vsubseteq$ && $\vsubseteq$ \\
$U(d)^2$ & $\subseteq$ & $S(2) \times U(d)^2$ & $\subseteq$ & $S(3) \times U(d)^2$ &
    $\subseteq$ & $\cdots$ & $\subseteq$ & $S(n-2) \times U(d)^2$ & $\subseteq$ &
  $S(n-1) \times U(d)^2$ \\
$\vsubseteq$ && $\vsubseteq$ && $\vsubseteq$ &&&& $\vsubseteq$ \\
$U(d)^3$ & $\subseteq$ & $S(2) \times U(d)^3$ & $\subseteq$ & $S(3) \times U(d)^3$ &
    $\subseteq$ & $\cdots$ & $\subseteq$ & $S(n-2) \times U(d)^3$ \\
$\vsubseteq$ && $\vsubseteq$ && $\vsubseteq$ \\
$\vdots$ && $\vdots$ && $\vdots$ &&  $\adots$ \\
$\vsubseteq$ && $\vsubseteq$ && $\vsubseteq$ \\
$U(d)^{n-2}$ & $\subseteq$ & $S(2) \times U(d)^{n-2}$ & $\subseteq$
    & $S(3) \times U(d)^{n-2}$ \\
$\vsubseteq$ && $\vsubseteq$ \\
$U(d)^{n-1}$ & $\subseteq$ & $S(2) \times U(d)^{n-1}$ \\
$\vsubseteq$ \\
$U(d)^n$
\end{tabular*}}
\end{center}
\caption{The poset $P$ of group inclusions}
\label{f:poset}
\end{figure*}

\section{Pieri and Schur-Weyl in the crystal limit}\label{s:crystal}
In quantum algebra the limit $q = 0$ is referred to as the \emph{crystal
limit}. The quantum groups and Hecke algebras we have considered do not
have well-defined algebra structures at $q = 0$. However, it can be useful
to look at the transforms in the crystal limit, as they are still linear
maps between vector spaces, if not representation isomorphisms. In this
section we describe the behavior of the Wigner coefficients defined in
the Pieri transform in Section \ref{s:pieri} in the crystal limit. These
theorems can be found in paper \cite{Da}.

\begin{theorem}\label{th:crystal1} The type zero Wigner coefficient
$W_0(r;\lambda^{(i)})$ is zero in the limit $q = 0$ except when for every $j$
between $r$ and $i-1$ we have $\lambda^{(i)}_{j+1} = \lambda^{(i-1)}_j$. The
type zero Wigner coefficient $W_0(r;\lambda^{(i)})$ is zero in the limit $q
= \infty$ except when $r = 1$, and in this case it's one.  \end{theorem}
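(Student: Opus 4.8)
The plan is to read off the conclusion directly from the closed formula for $W_0$ in Theorem~\ref{th:typeonewigner} by computing the order in $q$ of each of its factors in the two limits and comparing. The only analytic ingredient is the elementary asymptotics of the quantum integer: from $[m]=q^{m-1}+q^{m-3}+\dots+q^{1-m}$ one has, for $m\neq 0$, that $[m]$ is asymptotic to $\pm q^{1-|m|}$ as $q\to 0$ and to $\pm q^{|m|-1}$ as $q\to\infty$, the sign being that of $m$; and $[0]=0$ identically, which makes $W_0$ vanish for all $q$, a degenerate case already contained in the conclusion, so we may assume all arguments are nonzero. Since $W_0$ is $q$ raised to an explicit half-integer power times the square root of a ratio of products of quantum integers, in each limit its order in $q$ is a finite sum: the explicit prefactor exponent $\tfrac12(\text{res}(b)-x_i(t)+1)$, together with a contribution $\pm\tfrac12(|m|-1)$ for each $[m]$ occurring in a numerator or denominator under the radical. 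The theorem then amounts to determining when this total order is zero, and, in that case, the value of the resulting finite nonzero limit.

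The engine of the argument is the translation of the data appearing in the formula into partition combinatorics. Every axial distance gets rewritten as a difference of parts: the last box of row $j$ of the relevant shape has residue (row length) $-\,j$, so $a_{jk}$ equals $(\text{row length}_j-\text{row length}_k)+(k-j)$, and monotonicity of the parts fixes its sign, hence $|a_{jk}|$, and likewise for $a_{jk}\pm 1$ and for $a_{jk}-\lambda^{(i)}_j$. The residue $\text{res}(b)$ of the new box is itself a row length minus a row index, and $x_i(t)$ is the size of the $i$-strip, so both are expressible through $\lambda^{(i)}$ and $\lambda^{(i-1)}$. Substituting all of this, the order of $W_0$ in $q$ becomes an integer-valued expression in the parts $\lambda^{(i)}_j,\lambda^{(i-1)}_j$, and the horizontal-strip interlacing $\lambda^{(i)}_j\ge\lambda^{(i-1)}_j\ge\lambda^{(i)}_{j+1}$ is the inequality used throughout.

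From here the two limits are essentially routine. For $q\to\infty$ the exponent telescopes to an expression visibly $\ge 0$ that vanishes exactly when the new box is added to the first row; in that case the surviving quantum integers all have positive leading coefficient (the destination is the longest row, which pins the signs of the relevant axial distances), so the product of the signs is $+1$ and the limit is exactly $1$. For $q\to 0$ the exponent rearranges into a sum, over the rows $j$ between the destination row and row $i-1$, of differences of the form $\lambda^{(i-1)}_j-\lambda^{(i)}_{j+1}$; each summand is a non-negative integer by the strip interlacing, so the order is zero precisely when every such difference vanishes, that is $\lambda^{(i)}_{j+1}=\lambda^{(i-1)}_j$ for all those $j$, and is strictly positive otherwise, forcing $W_0\to 0$.

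The genuine obstacle is the middle step: getting every axial-distance sign right in the three regimes $j<k$, $j=k$, $j>k$; correctly folding in the boundary factor $1/[a_{ir}+1]$ and identifying which rows $j\le i-1$ actually enter the product; and rewriting $\text{res}(b)$ and $x_i(t)$ in partition terms so that the order-in-$q$ expression collapses to the clean non-negative sums above. A useful sanity check is the crystal-basis picture promised in Theorem~\ref{th:main}: the configurations that survive at $q=0$ should be exactly those in which the bumped box is forced, i.e. the RSK ones, and the condition $\lambda^{(i)}_{j+1}=\lambda^{(i-1)}_j$ is precisely that.
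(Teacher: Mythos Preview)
The paper does not give its own proof of this theorem; it is stated without proof and attributed to Date--Jimbo--Miwa \cite{Da}, with only the subsequent paragraph offering a combinatorial interpretation (dual RSK at $q=0$, row insertion at $q=\infty$). So there is no in-paper argument to compare against beyond the bare citation.

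Your plan---read off the $q\to 0$ and $q\to\infty$ behavior directly from the closed formula in Theorem~\ref{th:typeonewigner} by computing the leading order of each quantum integer and summing exponents---is exactly the natural approach and is, in outline, how \cite{Da} obtains these limits. The asymptotic $[m]\sim \sgn(m)\,q^{\pm(|m|-1)}$ is correct, and organizing the computation by rewriting every $a_{jr}$, $\text{res}(b)$, and $x_i(t)$ in terms of the parts $\lambda^{(i)}_j$, $\lambda^{(i-1)}_j$ and then invoking the interlacing $\lambda^{(i)}_j\ge\lambda^{(i-1)}_j\ge\lambda^{(i)}_{j+1}$ is the right bookkeeping device. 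Your identification of the $q\to 0$ survival condition with the telescoping sum $\sum_j(\lambda^{(i-1)}_j-\lambda^{(i)}_{j+1})$ over the relevant row range, and of the $q\to\infty$ survival with $r=1$, is the correct target shape for the exponent.

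What you have written is a credible proof \emph{sketch}, not a proof: you correctly flag that the delicate part is pinning down the signs of $a_{jr}$, $a_{jr}+1$, and $a_{jr}-\lambda^{(i)}_j$ in the various row regimes, and verifying that the radical's argument stays positive so that no spurious phases appear in the limit. You should also confirm that in the surviving $q\to 0$ case the limiting value is indeed $1$ (the statement as written in the paper only asserts this explicitly for $q\to\infty$, but the dual-RSK interpretation requires it at $q\to 0$ as well). None of this is conceptually hard, but it does require actually carrying out the substitution rather than asserting that it telescopes; a two-row and a three-row example would anchor the index conventions before the general case.
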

\begin{theorem}\label{th:crystal2} The type one Wigner coefficient
$W_1(r_1;r_2;\lambda^{(i)})$ is zero in the limit $q = 0$ for every choice
except when $r_1 = r_2$ or when $r_1 > r_2$ and $\lambda^{(i)}_{j+1}
= \lambda^{(i-1)}_j$ for j between $r_2$ and $r_1-1$, and in both these
cases it's 1. The type one Wigner coefficient $W_1(r_1;r_2;\lambda^{(i)})$
is zero in the limit $q = \infty$ for every choice except when $r_2 = r_1 +
1$, and in this case it's $-1$.  \end{theorem}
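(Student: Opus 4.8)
The plan is to obtain both limits by a direct asymptotic analysis of the closed formula of \thm{th:typetwowigner}, exactly as one proves the companion \thm{th:crystal1} from \thm{th:typeonewigner}; the two results together, through $\langle s\mid t,i\rangle=W_0(i)\prod_j W_1(i_j)$ of \thm{th:wigner}, determine the crystal limits of every Wigner coefficient and hence of the Pieri and Schur-Weyl transforms. The only analytic input is the leading behaviour of a quantum integer: $[n]$ is a Laurent polynomial in $q$ with $[0]=0$, $[-n]=-[n]$, and $[n]=q^{1-n}+q^{3-n}+\dots+q^{n-1}$ for $n\ge1$, so $[n]=\sgn(n)\,q^{1-|n|}(1+O(q^2))$ as $q\to0$, and, since $[n]$ is invariant under $q\mapsto q^{-1}$, also $[n]=\sgn(n)\,q^{|n|-1}(1+O(q^{-2}))$ as $q\to\infty$. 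Write the coefficient of \thm{th:typetwowigner} as $W_1=\sgn(r_1-r_2)\,q^{m/2}\sqrt{R(q)}$, where $m=a_{r_1r_2}-\lambda^{(i)}_{r_1}$ and $R(q)$ is the displayed finite product of ratios of quantum integers. Setting aside the degenerate configurations where some numerator of $R$ vanishes (there $W_1\equiv0$, in agreement with the claim), every bracketed integer in $R$ is nonzero, each ratio $[a]/[b]$ behaves like $\sgn(a)\sgn(b)\,q^{|b|-|a|}$ as $q\to0$, and $R(q)>0$ for real $q>0$; hence $R(q)=q^{2e}(1+o(1))$ as $q\to0$ for some $e\in\tfrac12\mathbb Z$ whose leading coefficient is forced to be $+1$. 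Since $R$ is $q\mapsto q^{-1}$-invariant, $R(q)=q^{-2e}(1+o(1))$ as $q\to\infty$, while the monomial prefactor $q^{m/2}$ has exponent $m/2$ in both limits. Therefore $W_1=\sgn(r_1-r_2)\,q^{E}(1+o(1))$ as $q\to0$ with $E=\tfrac m2+e$, and $W_1=\sgn(r_1-r_2)\,q^{E'}(1+o(1))$ as $q\to\infty$ with $E'=\tfrac m2-e$. The theorem is thus reduced to two exponent statements: $E\ge0$, with equality exactly when $r_1=r_2$ or $r_1>r_2$ with $\lambda^{(i)}_{j+1}=\lambda^{(i-1)}_j$ for $j$ between $r_2$ and $r_1-1$; and $E'\le0$, with equality exactly when $r_2=r_1+1$. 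In any equality case $\sqrt{R(q)}\to1$, so $W_1\to\sgn(r_1-r_2)$, which is $+1$ when $r_1\ge r_2$ (the $q=0$ cases) and $-1$ when $r_2=r_1+1$ (the $q=\infty$ case), as claimed.

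The exponent count is the heart of the proof. I would expand $E=\tfrac m2+e$ into an explicit sum of terms $\tfrac12(1-|\nu|)$ over the integers $\nu$ occurring in the numerators and denominators of \thm{th:typetwowigner}, and then simplify using the additivity $a_{ij}+a_{jk}=a_{ik}$ of axial distances together with the shape constraints relating the horizontal strips $\lambda^{(i-1)}$ and $\lambda^{(i)}$. The aim is to collapse this sum to a sum over the rows $j$ strictly between $r_2$ and $r_1$ of manifestly nonnegative contributions, the $j$-th being zero exactly when $\lambda^{(i)}_{j+1}=\lambda^{(i-1)}_j$; when $r_1=r_2$ the range is empty and $E=0$ outright, and when $r_1<r_2$ the telescoping produces a strictly positive total. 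The identical bookkeeping with the $q\to\infty$ asymptotics — each $1-|\nu|$ replaced by $|\nu|-1$, equivalently $e$ by $-e$ — collapses $E'$ to a sum of nonpositive contributions that all vanish only for $r_2=r_1+1$. As a consistency check, the reduced Wigner transform is unitary for real $q>0$, so its entries remain bounded as $q\to0^+$ and as $q\to\infty$, which already forces $E\ge0$ and $E'\le0$; and since a unitary matrix has at most one entry of modulus $1$ in each row and each column, at most one of the listed configurations can survive a given limit, in accordance with the statement.

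The step I expect to be the main obstacle is organizing this telescoping cleanly enough to see simultaneously that any failure of $\lambda^{(i)}_{j+1}=\lambda^{(i-1)}_j$ for some $j$ between $r_2$ and $r_1-1$ strictly increases $E$, and that when all these equalities hold the axial-distance cancellations are exact so that $E=0$ on the nose (and symmetrically for $E'$). Everything else — the quantum-integer asymptotics, the reduction to an exponent count, and the final sign, which is handed to us for free by the positivity of $R$ and the $\sgn(r_1-r_2)$ prefactor — is routine.
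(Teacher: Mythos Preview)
The paper does not prove this theorem: Section~\ref{s:crystal} states both \thm{th:crystal1} and \thm{th:crystal2} with only a citation to Date--Jimbo--Miwa \cite{Da}, and the surrounding text merely interprets the result combinatorially (the surviving insertions at $q=0$ and $q=\infty$ are precisely those of dual RSK and of RSK, respectively). So there is no in-paper argument to compare against; your proposal supplies a proof where the paper defers to the literature.

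Your strategy is the natural one and is sound. The reduction to an exponent count via the leading asymptotics $[n]\sim\sgn(n)\,q^{1-|n|}$ as $q\to0$, the use of the $q\mapsto q^{-1}$ symmetry of the bracket product $R(q)$ to handle both limits at once, the positivity argument forcing the leading coefficient of $R$ to be $+1$, and the unitarity check that already guarantees $E\ge0$ and $E'\le0$ are all correct and make the structure of the argument transparent. The remaining work --- the telescoping identification of the equality cases via the additivity $a_{ij}+a_{jk}=a_{ik}$ and the horizontal-strip constraints linking $\lambda^{(i-1)}$ and $\lambda^{(i)}$ --- is, as you say, a finite combinatorial bookkeeping, and the outline you give is the right shape for it.
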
 At the level of tableaux,
theorems \ref{th:crystal1} and \ref{th:crystal2} can be interepreted in
the following way.

In the limit $q = 0$, the only time a letter can be inserted into a tableau
without bumping another letter is when it is added to the very first column
that does not contain an $i$. Also, the only time a letter can be bumped,
it gets inserted into the very next column. This is a description of the
dual RSK insertion algorithm.

In the limit $q = \infty$ the only time a letter can be inserted into
a tableau without bumping another letter is when it can be added to the
first row, and the only time a letter can be bumped, it gets inserted into
the following row. This is a description of the RSK insertion algorithm
together with a bumping sign described in section \ref{s:qinsertion}.

Therefore, in the crystal limits $q = 0$ and $q = \infty$ the transform
defined in \ref{pieri} is the dual RSK and RSK insertion algorithm with
bumping sign, respectively. In this sense, RSK and dual RSK are `classical'
versions of quantum insertion (not to be confused with the other notion of
`classical' in this setting, i.e. $q = 1$).

Extending Theorems \ref{th:crystal1} and \ref{th:crystal2} to cascaded
Pieri transforms, we conclude that in the crystal limits, the Schur-Weyl
transform computes the well-known RSK bijections on finite sets described
in Section \ref{s:qinsertion}.

\newchapter{qalgorithm}{A quantum algorithm for the quantum Schur transform}{A quantum algorithm for the quantum Schur transform}
\label{ch:schurxform}

\section{Introduction}

In this chapter we present our main theorem, which is a quantum algorithm
for computing Schur-Weyl duality.  A quantum computer uses basic units of
information called \emph{qubits} rather than bits. Since the algebra of
qubit spaces is quite different than that of bits, the possible types of
qubit transformations, and thus the maps considered by quantum computers
in their calculations, are different as well. This in turn influences the
types of things that can be calculated within a certain time complexity.

In Section \ref{s:qprob} we describe the basics of quantum probability, which
forms the algebraic base for quantum computation. In Section \ref{s:qalgs}
we review the basic theory of quantum algorithms, and in Section \ref{s:main}
we present our main theorem. The methods we use in the proof of our main
theorem are modeled on those found in the paper \cite{BCH}, where they
prove the existence of a Schur-Weyl algorithm for the case $q = 1$. As
far as we know, this thesis contains the first instance of an algorithm
designed for a quantum computer for the purpose of decomposing quantum
algebra representations.

\section{Quantum probability}\label{s:qprob}
In this section we describe some elements of quantum probability and their
relation to quantum algorithms. A thorough treatment of this material can
be found in the book \cite{NC}.

We fix the \emph{computational basis} of $\C^d$ to consist of the orthonormal
vectors $\ket{1},\dots,\ket{d}$, and write all matrices in this basis. The
algebra $M_d$ is the set of all $d \times d$ matrices in this basis. (The
more standard numbering in computer science is from $0$ to $d-1$; we use
the mathematicians' numbering which is more standard in combinatorial
representation theory.)

When $d = 2$, $M_2$ is called a \emph{qubit}, and represents the probability
space corresponding to a quantum particle with two basis states, such
as an electron which can be measured as either spin-up or spin-down. For
larger $d$, $M_d$ is called a \emph{qudit}. Joint systems are constructed
by tensoring qudits. So, for example, a pair of qubits is $M_2 \otimes
M_2 \cong M_4$.

The \emph{state} of a qudit $M_d$ is defined to be a positive and
normalized dual vector on $M_d$. There is an isomorphism between $M_d$ and
its dual space so we can view dual vectors of $M_d$ as elements of $M_d$
itself. Under this isomorphism, $\rho \in M_d$ acts as a dual vector on
$M_d$ according to the formula
\begin{align}\label{stateformula} \rho(a) = \text{Tr}(\rho a)\end{align}

The \emph{pure} states of $M_d$ are of the form $\rho_{\ket{\psi}} =
\ket{\psi}\bra{\psi}$, where $\ket{\psi}$ is a normalized vector in
$\C^d$. Following Equation \ref{stateformula}, its action on an element
$b \in M_d$ is given by the inner product
$$\rho_{\ket{\psi}} (b) = \text{Tr}(\ket{\psi}\bra{\psi} b) = \bra{\psi} b \ket{\psi}.$$ Thus pure states are indexed by normalized vectors in $\C^d$.

The self-adjoint elements in $M_d$, i.e. those that satisfy $a^* = a$,
are called \emph{observables} or \emph{measurables}. The \emph{idempotent}
observables, sometimes called \emph{events}, satisfy $a^2 = a$, and are
interpreted as measuring whether the qudit is in the state $a$. By the
Spectral theorem, any observable $a$ can be decomposed into a sum of
idempotent observables:
\begin{align}\label{spec}
a = \sum_{\lambda \in \sigma(a)} \lambda a_{\lambda},
\end{align}
where $\sigma(a)$ is the spectrum of $a$, and $a_{\lambda}$ is the projection
operator onto the eigenspace defined by $\lambda$. The Spectral theorem
guarantees a choice of projection operators $a_\lambda$ which are pairwise
orthogonal. The probability that the observable $a$ measures $\lambda$ is
\begin{align}\label{measprob}\text{Prob}[a = \lambda] = \text{Tr}(\rho a_\lambda)\end{align}
and the state of the qudit passes to the conditional state $\rho_\lambda =
\rho_{\ket{\phi}}$, which is defined by
\begin{align}\label{conditionalpure}\ket{\phi} = \frac{a_\lambda \ket{\psi}}{\sqrt{ \bra{\psi} a_\lambda \ket{\psi}}}\end{align}

We will mainly use the special case $a = \sum_k \ket{k} \bra{k},$ which
is a \emph{complete} measurement in the computational basis. Expanding
the vector $\ket{\psi} = \sum_k \psi_k \ket{k}$, we have
$$\rho_{\ket{\psi}} = \sum_{j,k} \rho_{j,k} \ket{j}\bra{k},$$
where $\rho_{j,k} = \psi_j^* \psi_k$. Therefore, following Equation
\ref{measprob}, the probability that the state of the qudit is measured
to be $\ket{k}$ is $\rho_{k,k} = | \psi_k |^2$, and following Equation
\ref{conditionalpure}, the state then passes to $\rho_k = \ket{k}
\bra{k}$. So, after a complete measurement the state of a qudit is defined
by one of the basis vectors $\ket{k}$, which can be used as an `answer'
to a computational question.

The state of a qudit can also undergo reversible unitary \emph{evolution}
$E$ given by conjugations $E(\rho) = U\rho U^{-1}$ where $U \in PSU(d)$,
the space of projective unitary maps. If $\rho_{\ket{\psi}}$ is a pure
state, then
$$E(\rho_{\ket{\psi}}) = U \rho_{\ket{\psi}} U^{-1} = \rho_{U\ket{\psi}},$$
is also pure and defined by the action of an element in $PSU(d)$.

Pure states, measurement operators, and unitary evolution are the basic
elements necessary to define quantum algorithms, which we describe in
Secction \ref{s:qalgs}.

\section{Quantum algorithms}\label{s:qalgs}
In this section we review basic quantum algorithms. For a more complete
introduction, refer again to \cite{NC}.

In the context of quantum computing, unitary operators acting on qudit spaces
are called \emph{quantum gates}. In the usual interpretation, a quantum gate
acting on $m$ qudits can act on $n \ge m$ qudits by acting by $U$ on $m$
of the qudits and the identity on the other $n-m$ qudits. A \emph{quantum
circuit} is a composition of quantum gates. The \emph{time complexity} of
a quantum circuit is the number of quantum gates in its decomposition. We
require that a family of quantum circuits be \emph{uniform}, meaning there is
a classical algorithm to compute the decomposition of the quantum circuits
into quantum gates. The time complexity of this decomposition algorithm
is counted toward the total time complexity of the quantum algorithm.

Because the set of quantum gates is infinite, but a quantum computer would
have access to a finite number of quantum gates, it's generally not possible
to construct an exact quantum circuit for calculating a given unitary
transformation. In other words, some approximation will usually be necessary.

A finite set of gates $G$ that acts on at most $m$ qudits of size $d$
is called \emph{universal} if it generates a dense subgroup of $U(d^m)$
for some $m \ge 2$; it consequently densely generates $U(d^n)$ for any $n
\ge m$.  In other words, a set $G$ of quantum gates is universal if every
unitary operator $A \in U(d^n)$ can be approximated by a quantum circuit
composed of elements from $G$.

The Solovay-Kitaev theorem \cite{Dawson} establishes that any operator $A
\in U(d)$ can be approximated by words in a universal gate set with time
and gate complexity $\poly(d,\log \eps^{-1})$, where $\epsilon$ is the
error of the approximation.  This is provided that the matrix entries of
the gates and the matrix entries of $A$ can be approximated with the same
time complexity.  Thus, up to polylogarithmic overhead, any universal gate
set is equivalent to all unitary operators in $U(d)$ whose matrix entries
can be computed quickly.  Often this theorem is used for fixed values of
$d$, but the algorithm is constructive and it is easy to establish that
the gate complexity (and classical time complexity to choose the gates)
is polynomial in $d$ as well. The Solovay-Kitaev theorem establishes that
this time complexity is independent of the universal gate set up to a
polylogarithmic factor (assuming that the matrices of the gates can be
computed quickly).

Both classically and quantumly, an \emph{efficient} algorithm is one
which has polynomial time complexity. The class of decision problems with
efficient classical probabilistic algorithms is called \emph{BPP}, for
bounded-error polynomial time. The class of decision problems with efficient
quantum algorithms is analogously called \emph{BQP}, for bounded-error
quantum polynomial time. The class BQP contains the class BPP, meaning
that quantum algorithms can efficiently solve any decision problems that
classical algorithms can solve. Whether or not BQP is strictly larger
than BPP remains an open problem. However, there are decision problems
provably in BQP that are not provably in BPP, such as the factoring problem
investigated in \cite{shor}.

\section{Time complexity of Schur-Weyl}\label{s:main}
In this section we describe the time complexity of computing first a
single Pieri transform, and and then the cascade of Pieri transforms which
calculates a Schur-Weyl transform.

In Equation \ref{rwcformula}, we calculated the matrix entry
$\langle s \mid t, i \rangle$ of the Pieri transform in terms of
reduced Wigner coefficients. Using Formulas \ref{th:typeonewigner} and
\ref{th:typetwowigner} with \ref{rwcformula} allow us to calculate all
the matrix entries of the Pieri transform, but at this point a direct
application of the Solovay-Kitaev theorem is not quite sufficient, since
the entire Pieri operator has a matrix of size $d|\SSYT(\lambda)|$, which
is larger than $\poly(n,d)$.

This is why we use the recursive version of \ref{rwcformula} given by
\ref{recursive} to define the reduced Wigner transform as the $d \times d$
matrix of reduced Wigner coefficients where $t$ is fixed. As described in
\cite{BCH}, given an input $t$, the coefficients in the reduced Wigner
transform are calculated by a classical algorithm and then the reduced
Wigner transform is calculated by a quantum algorithm.

\begin{theorem}\label{th:rwt} There is a quantum algorithm for computing
the controlled reduced Wigner transform with time complexity $\poly(n,d,\log
\eps^{-1})$, where $\eps$ is the desired accuracy. \end{theorem}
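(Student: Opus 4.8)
The plan is to exhibit the controlled reduced Wigner transform as a composition of two pieces: a classical preprocessing step that, given the control data (the fixed tableau $t$ and the relevant partition $\lambda$, together with $q$ and the target accuracy), computes floating-point approximations of all the matrix entries of the $d\times d$ reduced Wigner matrix; and a quantum step that loads this classically-computed $d\times d$ unitary into the quantum register using the Solovay-Kitaev theorem. The key observation making this work is that the reduced Wigner transform, unlike the full Pieri transform, has a matrix of size only $d\times d$, so Solovay-Kitaev applies directly with cost $\poly(d,\log\eps^{-1})$ per invocation; the factor of $n$ enters only because the axial distances $a_{jk}$, the horizontal strips $\lambda^{(i)}$, and the quantum integers $[m]$ appearing in Theorems~\ref{th:typeonewigner} and~\ref{th:typetwowigner} involve numbers that are bounded by $O(n)$, so each entry takes $\poly(n,d,\log\eps^{-1})$ classical time to evaluate to the needed precision.

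First I would make precise what ``controlled'' means: we want a single unitary on a control register (encoding $t$, which needs $\poly(n,\log d)$ bits since $t$ is an SSYT with at most $n$ boxes and entries in $\{1,\dots,d\}$) tensored with a target register of dimension $d$, which acts on the target by the reduced Wigner matrix $W(t)$ corresponding to the control value, as in the recursive formula~\ref{recursive}. Next I would verify the classical complexity claim: by Theorem~\ref{th:wigner} each matrix entry $\langle s\mid t,i\rangle$ factors as $W_0(i)\prod_{j=2}^k W_1(i_j)$, and each reduced factor is, by the explicit formulas, a signed power of $q^{1/2}$ times a square root of a ratio of products of $O(n)$ quantum integers $[m]$ with $|m|\le O(n)$; since $q\in[0,\infty]$ is given and computable quickly, each $[m]$ and hence each entry is computable to accuracy $\eps'$ in time $\poly(n,d,\log(\eps')^{-1})$. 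Then I would invoke the Solovay-Kitaev theorem in the form quoted in Section~\ref{s:qalgs}: given a matrix in $U(d)$ whose entries can be approximated with time complexity $T$, there is a quantum circuit of time and gate complexity $\poly(d,\log\eps^{-1}) + T$ approximating it to error $\eps$. Finally I would assemble these, being careful to account for the accumulation of error: the classical entries must be computed to precision roughly $\eps/\poly(d)$ so that the resulting approximate matrix is within $\eps/2$ of an exact unitary in operator norm, and then Solovay-Kitaev contributes the other $\eps/2$; this only multiplies the classical cost by a $\log$ factor.

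The main obstacle I expect is the \emph{controlled} aspect together with uniformity. Solovay-Kitaev as usually stated produces a circuit for one fixed target unitary, but here the target $W(t)$ depends on the quantum control value $t$, which is in superposition. The standard resolution in the BCH framework is that the classical subroutine is applied reversibly as a quantum circuit acting on the control register to compute the entries of $W(t)$ into an ancilla, and then a generic ``controlled unitary from a description'' gadget applies $W(t)$ to the target conditioned on that ancilla description; one must check that this gadget has cost polynomial in $d$ and in the bit-length of the description, and that the classical-to-quantum compilation is uniform (a classical algorithm outputs the gate list). This bookkeeping --- that every sub-step is both quantum-coherent in the control and classically uniform --- is the delicate part; the underlying arithmetic and the Solovay-Kitaev invocation are routine once that scaffolding is in place. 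Everything else is polynomial bounds that compose, and the bound is uniform in $q$ exactly because the quantum-integer arithmetic is uniform in $q$.
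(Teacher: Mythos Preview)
Your proposal is correct and follows essentially the same approach as the paper: the paper's treatment of Theorem~\ref{th:rwt} is not a formal proof but a brief sketch (the paragraph preceding the statement) that invokes the $d\times d$ size of the reduced Wigner matrix, the classical computability of its entries via the explicit formulas, and the Solovay-Kitaev theorem, deferring the controlled-unitary scaffolding to~\cite{BCH}. Your write-up is in fact more careful than the paper's, particularly in isolating the controlled/coherent aspect and the error accounting, but the underlying argument is the same.
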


Combining the recursive formula \ref{recursive} with Theorem \ref{th:rwt},
we obtain the following result.

\begin{theorem}\label{th:qpieri}There is a quantum algorithm to compute the
quantum Pieri transform with accuracy $\eps$ with time complexity $poly(n,d,
\log \eps^{-1})$, where $\eps$ is the desired acuracy. \end{theorem}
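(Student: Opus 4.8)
The plan is to realize the quantum Pieri transform as a short cascade of controlled reduced Wigner transforms, exactly as the recursive identity \ref{recursive} suggests, and then to invoke \thm{th:rwt} together with a routine error-accumulation estimate.

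First I would fix the data layout. A basis vector of $V^\lambda \otimes V$ is $\ket{v_t}\ket{i}$ with $t \in \SSYT(\lambda)$ and $i \in \{1,\dots,d\}$, and the target $\bigoplus_{\mu}V^\mu$ is spanned by GTJ vectors $\ket{v_s}$, which I would encode by their flags of horizontal strips $\lambda^{(1)}(s)\subseteq\cdots\subseteq\lambda^{(d)}(s)=\sh(s)$. Since $\mu$ has at most $n$ boxes and at most $d$ rows, each restricted shape needs $O(d\log n)$ bits and the whole pattern $O(d^2\log n)$ bits, comfortably inside $\poly(n,d)$ qubits. Equation \ref{recursive} expresses the matrix entry $\langle s\mid t,i\rangle$ as $W_1(d)\,\langle s^{(d-1)}\mid t^{(d-1)},i\rangle$ (or simply $W_0(d)$ when $i=d$); unrolling it level by level factors the full Pieri operator into at most $d$ pieces, the $k$-th piece being the reduced Wigner transform at level $k$, \ie\ the $d\times d$ change of basis of \thm{th:wigner} for the fixed restricted tableau $t^{(k-1)}$, applied to the $d$-dimensional register that records where the active letter lands at level $k$ and controlled by the lower-level data. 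Implementing these controlled reduced Wigner transforms in the order $k=d,d-1,\dots$, interleaving each stage with the reversible classical computation (and, afterward, the uncomputation) of the restriction $t^{(k-1)}$ and of the control data that tells the next stage which letter is active, produces precisely the change of basis \ref{rwcformula} on the encoded GTJ patterns. This is the $q$-analogue of how \cite{BCH} build the Clebsch--Gordan transform out of reduced Clebsch--Gordan transforms.

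Given \thm{th:rwt}, the complexity claim is then bookkeeping. Each of the $O(d)$ controlled reduced Wigner transforms runs in time $\poly(n,d,\log\eps^{-1})$; the classical computations of restricted tableaux and bump-chain data are $\poly(n,d)$ and reversible, contributing only $\poly(n,d)$ further gates and ancillae. To achieve overall accuracy $\eps$ I would run the $k$-th stage to accuracy $\eps/d$; since the stages compose as unitaries, subadditivity of operator-norm error along a product gives total error $\le\eps$, and $\log(d/\eps)=\log d+\log\eps^{-1}$ leaves every factor polynomial. Hence the total running time is $d\cdot\poly(n,d,\log\eps^{-1})=\poly(n,d,\log\eps^{-1})$. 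The classical precomputation of the reduced Wigner coefficients themselves --- products of square roots of ratios of quantum integers $[m]$ with $|m|=O(n)$ --- is already absorbed into \thm{th:rwt}, assuming $q$ can be computed quickly.

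The one genuinely delicate point is checking that the composed circuit computes the Pieri matrix \emph{exactly}, \ie\ leaves no residual entanglement in ancilla registers: one must verify that the control data handed from stage $k$ to stage $k-1$ (which letter is active, and the partially built shape $s^{(k)}$) is consistent with the factorization in \thm{th:wigner}, and that everything not part of the final GTJ encoding is uncomputed by the end. This follows the same compute--use--uncompute discipline as in \cite{BCH}, and it goes through here because \thm{th:wigner} already supplies the clean product formula; the remaining work is purely combinatorial bookkeeping, with no analytic content beyond \thm{th:rwt}.
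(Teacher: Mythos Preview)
Your proposal is correct and follows the same route the paper takes: the paper's entire proof is the one sentence ``Combining the recursive formula \ref{recursive} with Theorem \ref{th:rwt}, we obtain the following result,'' and your write-up is a faithful expansion of exactly that combination, including the $O(d)$-stage cascade of controlled reduced Wigner transforms, the reversible classical bookkeeping, and the $\eps/d$ error budgeting, all of which mirror the BCH construction at $q=1$ that the paper invokes.
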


As proved in section \ref{s:cascade}, a Schur-Weyl transform can be
computed up to sign using $(n-1)$ cascaded Pieri transforms. We use this
fact combined with theorem \ref{th:qpieri} to calculate the total time
complexity of the Schur-Weyl transform given in the following theorem.

\begin{theorem}\label{th:qschur} There is a quantum algorithm for computing
the cascaded Pieri transform which is a Schur-Weyl transform on $V^{\tensor
n}$ with time complexity $\poly(n,d,\log(\epsilon^{-1}))$, where $\epsilon$
is the desired accuracy. \end{theorem}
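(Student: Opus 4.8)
The plan is to assemble \thm{th:qschur} directly from the two ingredients already in hand: the structural decomposition of the Schur-Weyl transform into $(n-1)$ cascaded Pieri transforms (\thm{th:agree}), and the efficient quantum algorithm for a single Pieri transform (\thm{th:qpieri}). The proof is therefore essentially a bookkeeping argument: compose $(n-1)$ efficient subroutines and check that the accumulated error and gate count remain polynomial.

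First I would recall that by \thm{th:agree}, decomposing $V^{\tensor n}$ via $(n-1)$ sequential applications of the Pieri transform produces a Schur-Weyl transform up to signs, and that the signs do not affect the computational content (they amount to a choice of phase on each summand $R^\lambda \tensor V^\lambda$, which as noted in the introduction is not pinned down by the Schur-Weyl transform anyway). So it suffices to build the cascade. The $k$th stage takes a state supported on GTJ/Pieri-basis labels for $V^{\tensor k}$ tensored with the remaining word register, and applies a quantum Pieri transform that q-inserts the next letter; by \thm{th:qpieri} this stage runs in time $\poly(n,d,\log\delta^{-1})$ for target accuracy $\delta$. Concatenating $k = 1,\dots,n-1$ gives the full transform.

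Next I would handle the error analysis. Unitary errors compose subadditively: if each of the $(n-1)$ stages is implemented to operator-norm accuracy $\delta$, the total error is at most $(n-1)\delta$. To achieve overall accuracy $\eps$ it therefore suffices to set $\delta = \eps/(n-1)$, which costs only an additional $O(\log n)$ in each stage's $\log\delta^{-1}$ factor --- still polynomial in $n$. Multiplying the per-stage complexity $\poly(n,d,\log(\eps^{-1}n))$ by the $(n-1)$ stages keeps the total at $\poly(n,d,\log\eps^{-1})$. I would also note that the classical preprocessing --- computing the reduced Wigner coefficients at each stage from the current tableau $t$ via the formulas in Theorems \ref{th:typeonewigner} and \ref{th:typetwowigner}, and choosing the Solovay-Kitaev gate words --- is likewise polynomial, and is counted toward the total as stipulated in Section \ref{s:qalgs}.

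The main obstacle I anticipate is not the gate count but verifying that the registers line up coherently across stages: after the $k$th Pieri transform the state lives in a direct sum $\bigoplus_\mu V^\mu$ of spaces of varying dimension, and one must confirm that the $(k+1)$st Pieri transform, which was defined acting on a fixed $V^\lambda \tensor V$, can be applied \emph{conditionally} on the label $\mu$ (the ``controlled'' reduced Wigner transform of \thm{th:rwt}) without breaking uniformity or blowing up the gate count --- essentially the same branching/padding issue that BCH address. I would argue that this is exactly what the controlled reduced Wigner transform of \thm{th:rwt} provides: the control register carries the current tableau, a classical routine computes the relevant $d \times d$ reduced Wigner matrix, and a single controlled application of it (padded by the identity on the inactive part of the direct sum, using the recursive structure \ref{recursive}) advances the cascade by one step. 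Once this is in place the theorem follows by induction on $n$.
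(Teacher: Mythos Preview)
Your proposal is correct and follows exactly the approach the paper takes: the paper's argument for \thm{th:qschur} is simply the observation that, by \thm{th:agree}, the Schur-Weyl transform is $(n-1)$ cascaded Pieri transforms, and by \thm{th:qpieri} each Pieri transform is efficient. Your error analysis and discussion of the controlled reduced Wigner transform are more detailed than what the paper actually writes, but they are the natural elaboration of the same idea and match the structure the paper sets up in Theorems \ref{th:rwt} and \ref{th:qpieri}.
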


Since Theorems \ref{th:qpieri} and \ref{th:qschur} are entirely based on
unitary operators for any real number $q > 0$, and since unitary groups
are compact, we can expect the Schur algorithm to have well-defined limits
at $q = 0$ and $q = \infty$, provided that we keep $q$ real and positive.
This expectation turns out to be correct.  As proved in \cite{Da} and
discussed in Section \ref{s:crystal}, in the limit $q = 0$ the transforms
converge to permutations matrices, while in the limit $q = \infty$, they
converge to signed permutation matrices.

Therefore, at $q = 0$ and $q = \infty$ our algorithm is a unitary version
of the dual RSK and RSK algorithm combined with a bumping sign algorithm.

\newchapter{Conclusion}{Conclusion}{Conclusion}
\label{ch:conclusion}
We conclude this thesis by discussing some directions for further research
into the topic of Schur transforms.

First of all, it is still open whether there is an efficient algorithm for
the Schur-Weyl transform for any value of $q$, including $q = 1$, which
is jointly polynomial in $n$ and $\log d$, in other words polynomial in
the input qubit length $n(\log d)$.

A second line of investigation involves finding algorithms for other
transforms of quantum algebras. In particular, the algorithm in \cite{Beals}
which is an efficient computation for the quantum Fourier transform for
the symmetric group $S(n)$ could possibly be generalized to a quantum
Fourier-like transform decomposing the regular representation of the Hecke
algebra $H_q(n)$.

The quantum Fourier transform decomposes a representation of $S(n) \times
S(n)$, while the Schur transform decomposes a representation of $S(n)
\times U(d)$. There is a known algebra isomorphism for decomposing an
analogous representation of $U(d) \times U(d)$ known as Howe duality,
which is extended to the quantum algebra $U_q(d)$ in \cite{zhang}. Whether
there is an efficient quantum transform for computing a version of Howe
duality for any value of $q$, including $q = 1$, is open. A possible goal
is to generalize all three algorithms (Fourier, Schur, and Howe) into a
single algorithm.

Finally, one may investigate the types of applications $q$-deformed
quantum algorithms can solve. One of the main reasons that quantum Fourier
transforms have been studied so extensively is their link to hidden subgroup
problems which are in turn linked to interesting computational problems
such as factoring and graph isomorphism. In \cite{BCH} some applications
of the Schur transform for $q = 1$ are proposed. The types of computational
problems quantum algorithms for decomposing quantum algebra representations
help to solve remain to be investigated.

    %


\begin{thebibliography}{10}
\thispagestyle{myheadings}
\addcontentsline{toc}{chapter}{\bibname}

\bibitem{BCH}
Dave Bacon, Isaac~L. Chuang, and Aram~W. Harrow.
\newblock Efficient quantum circuits for {Schur} and {Clebsch-Gordan}
  transforms.
\newblock {\em Phys. Rev. Lett.}, 97(17):170502, 4, 2006,
  arXiv:quant-ph/0601001.

\bibitem{Beals}
Robert Beals.
\newblock Quantum computation of {Fourier} transforms over symmetric groups.
\newblock In {\em {ACM} Symposium on Theory of Computing}, pages 48--53, 1997.

\bibitem{Da}
Etsur{\=o} Date, Michio Jimbo, and Tetsuji Miwa.
\newblock Representations of {$U_q(\mathfrak{gl}(n,\mathbb{C}))$} at {$q=0$}
  and the {Robinson-Shensted} correspondence.
\newblock In {\em Physics and mathematics of strings}, pages 185--211. World
  Sci. Publ., Teaneck, NJ, 1990.

\bibitem{Dawson}
Christopher~M. Dawson and Michael~A. Nielsen.
\newblock The {Solovay-Kitaev} algorithm, arXiv:quant-ph/0505030.

\bibitem{D}
V.~G. Drinfeld.
\newblock Quantum groups.
\newblock In {\em Proceedings of the ICM Berkeley}, pages 798--820. Providence
  RI American Math Soc, 1986.

\bibitem{h}
P.N. Hoefsmit.
\newblock {\em Representations of Hecke algebras of finite groups with BN-pairs
  of classical type}.
\newblock Ph.d. thesis, University of British Columbia, 1974.

\bibitem{J}
Michio Jimbo.
\newblock A {$q$}-analogue of {$U(\mathfrak{gl}(N+1))$}, {Hecke} algebra, and
  the {Yang-Baxter} equation.
\newblock {\em Lett. Math. Phys.}, 11(3):247--252, 1986.

\bibitem{Ka}
Christian Kassel.
\newblock {\em Quantum Groups}.
\newblock Springer-Verlag, 1994.

\bibitem{kitaev}
Alexei Kitaev.
\newblock Quantum measurements and the abelian stabilizer problem,
  arXiv:quant-ph/9511026.

\bibitem{K}
Donald Knuth.
\newblock Permutations, matrices, and generalized {Young} tableaux.
\newblock {\em Pacific J. Math.}, 34(3):709 -- 727, 1970.

\bibitem{LT}
Bernard Leclerc and Jean-Yves Thibon.
\newblock The {Robinson-Schensted} correspondence, crystal bases, and the
  quantum straightening at {$q=0$}.
\newblock {\em Electron. J. Combin.}, 3(2):\#R11, 1996, arXiv:q-alg/9504004.

\bibitem{MRR}
Cristopher Moore, Daniel Rockmore, and Alexander Russell.
\newblock Generic quantum {Fourier} transforms.
\newblock In {\em Proceedings of the fifteenth annual {ACM-SIAM} symposium on
  Discrete algorithms}, pages 778--787. SIAM, 2004, arXiv:quant-ph/0304064.

\bibitem{NC}
M.~A. Nielsen and I.L. Chuang.
\newblock {\em Quantum Computation and Quantum Information}.
\newblock Cambridge University Press, 2000.

\bibitem{R}
Arun Ram.
\newblock Seminormal representations of {Weyl} groups and { Iwahori-Hecke}
  algebras.
\newblock {\em Proc. London Math. Soc}, 3:7--5, 1997.

\bibitem{schensted}
C.~Schensted.
\newblock Longest increasing and decreasing subsequences.
\newblock {\em Canad. J. Math.}, 13:179--191, 1961.

\bibitem{shor}
Peter~W. Shor.
\newblock Polynomial-time algorithms for prime factorization and discrete
  logarithms on a quantum computer.
\newblock {\em SIAM J. Comput.}, 26(5):1484--1509, 1997,
  arXiv:quant-ph/9508027.

\bibitem{simon}
Daniel~R. Simon.
\newblock On the power of quantum computation.
\newblock {\em SIAM J. Comput.}, 26(5):1474--1483, 1997.

\bibitem{zhang}
R.B. Zhang.
\newblock Howe duality and the quantum general linear group.
\newblock {\em Proc. Amer. Math. Soc.}, 131:2681--2693, 2003,
  arXiv:quant-ph/9508027.

\end{thebibliography}
\end{document}